\tikzstyle{Black}=[fill=black, draw=black, shape=circle, inner sep=0, minimum size=6pt]
\tikzstyle{Small box}=[fill=white, draw=black, shape=chamfered rectangle, minimum size=8, chamfered rectangle corners=north east, chamfered rectangle xsep=2, chamfered rectangle ysep=2, tikzit shape=rectangle]
\tikzstyle{Small opbox}=[fill=white, draw=black, shape=chamfered rectangle, tikzit draw=white, tikzit fill={rgb,255: red,191; green,191; blue,191}, minimum size=8, chamfered rectangle corners=north west, chamfered rectangle xsep=2, chamfered rectangle ysep=2, tikzit shape=rectangle]
\tikzstyle{Red}=[fill=red, draw=black, shape=circle]
\tikzstyle{map small}=[fill=white, draw=black, shape=rounded rectangle, minimum width=23, minimum height=16, rounded rectangle west arc=none, tikzit shape=rectangle, tikzit draw=blue]
\tikzstyle{comap small}=[fill=white, draw=black, shape=rounded rectangle, minimum width=23, minimum height=16, rounded rectangle east arc=none, tikzit shape=rectangle, tikzit draw=red]
\tikzstyle{Medium box}=[fill=white, draw=black, shape=chamfered rectangle, minimum size=20, chamfered rectangle corners=north east, chamfered rectangle xsep=2, chamfered rectangle ysep=2, tikzit shape=rectangle]
\tikzstyle{Empty}=[fill=white, draw=black, shape=rectangle, dashed, minimum size=16, tikzit fill={rgb,255: red,191; green,191; blue,191}, tikzit draw=black]
\tikzstyle{White}=[fill=white, draw=black, shape=circle, inner sep=0, minimum size=6pt]
\tikzstyle{Vertex}=[fill=black, draw=black, shape=circle, inner sep=0, minimum size=3pt]
\tikzstyle{Edge}=[fill=white, draw=black, shape=rectangle, minimum size=8]
\tikzstyle{big edge}=[fill=white, draw=black, shape=rectangle, minimum size=26]
\tikzstyle{medium edge}=[fill=white, draw=black, shape=rectangle, minimum size=20]
\tikzstyle{big box}=[fill=white, draw=black, shape=chamfered rectangle, tikzit shape=rectangle, minimum size=28, chamfered rectangle corners=north east, chamfered rectangle xsep=2, chamfered rectangle ysep=2]
\tikzstyle{dashedge}=[dashed, -]
\tikzstyle{dasharrow}=[->, dashed]
\tikzstyle{twoheadarrow}=[<->]
\tikzstyle{new edge style 0}=[{|->}]
 \tikzset{	>=stealth',
auto,
	baseline=(current  bounding  box.center),
}
\DeclareMathAlphabet{\pazocal}{OMS}{zplm}{m}{n}
\title{On Doctrines and Cartesian Bicategories}
\author{Filippo Bonchi}{University of Pisa, Italy}{}{}{}\author{Alessio Santamaria}{University of Pisa, Italy}{alessio.santamaria@di.unipi.it}{https://orcid.org/0000-0001-7683-5221
}{}\author{Jens Seeber}{University of Pisa, Italy}{}{}{}\author{Pawe{\l} Soboci\'{n}ski}{Tallinn University of Technology, Estonia}{}{}{Supported by the ESF funded Estonian IT Academy research measure (project 2014-2020.4.05.19-0001) and the Estonian Research Council grant PRG1210.}
\authorrunning{F. Bonchi, A. Santamaria, J. Seeber and P. Soboci\'{n}ski } 
\keywords{Cartesian bicategories, elementary existential doctrines, string diagram} 
\newcommand{\op}[1]{#1^{\opposite}}
\DeclareMathOperator{\opposite}{op}
\DeclareMathOperator{\InfSL}{\mathsf{InfSL}}
\DeclareMathOperator{\Poset}{\mathsf{Poset}}
\DeclareMathOperator{\Hom}{Hom}
\DeclareMathOperator{\Rel}{\mathsf{Rel}}
\DeclareMathOperator{\Set}{\mathsf{Set}}
\DeclareMathOperator{\Map}{Map}
\DeclareMathOperator{\id}{id}
\newcommand{\ar}{\mathit{ar}}
\newcommand{\from}{\colon}
\newcommand{\seq}{\ensuremath{\mathrel{;}}}
\newcommand{\scale}[1]{\scalebox{0.7}{#1}}
\newcommand{\scalex}[1]{\scalebox{0.6}{#1}}
\newcommand{\stikzfig}[1]{\scale{\tikzfig{#1}}}
\newcommand{\sxtikzfig}[1]{\scalex{\tikzfig{#1}}}
\newcommand{\medtikzfig}[1]{\scalebox{0.5}{\tikzfig{#1}}}
\newcommand{\Cat}{\mathbb{C}}
\newcommand{\Bicat}{\mathcal{A}}
\newcommand{\CBC}{\mathsf{CBC}}
\newcommand{\EED}{\mathsf{EED}}
\newcommand{\EEDff}{\overline\EED} \newcommand{\LA}{\pazocal{L}}
\newcommand{\RA}{\pazocal{R}}
\newcommand{\A}{\mathbb A}
\newcommand{\B}{\mathbb B}
\newcommand{\C}{\mathbb C}
\newcommand{\D}{\mathbb D}
\newcommand{\E}{\mathbb E}
\newcommand{\pow}{\pazocal P}
\newcommand{\inv}[1]{{#1}^{-1}}
\renewcommand{\restriction}{\mathord{\upharpoonright}}
\renewcommand{\epsilon}{\varepsilon}
\newcommand{\Nat}{\mathbb N}
\newcommand{\sort}[2]{\colon (#1,#2)}
\newcommand{\Pred}{\mathbb{P}}
\newcommand{\Law}[1]{\mathsf{L}_{#1}} \newcommand{\fCB}[2]{\mathsf{CB}_{#1,#2}}
\tikzset{ihbase/.style={inner sep=0,circle,draw,fill=lightgray,minimum size=0.4em}}
\tikzset{ihblack/.style={ihbase,fill=black}}
\tikzset{ihwhite/.style={ihbase,fill=white}}
\tikzset{mat/.style={draw,fill=white,rectangle,node font=\scriptsize}}
\tikzset{ha/.style={mat, rectangle, rectangle }}
\tikzset{haop/.style={mat, rectangle, rectangle}}
\newcommand{\Bcomult}{\stikzfig{Copy}}
\newcommand{\Bcounit}{\stikzfig{Dis}}
\newcommand{\Bmult}{\stikzfig{Cocopy}}
\newcommand{\Bunit}{\stikzfig{Codis}}
\newcommand{\BcomultL}[1]{
\scalebox{0.7}{
\begin{tikzpicture}
	\begin{pgfonlayer}{nodelayer}
		\node [style=Black] (0) at (0, 0) {};
		\node [style=none] (1) at (-1.5, 0) {};
		\node [style=none] (2) at (1.5, 0.75) {};
		\node [style=none] (3) at (1.5, -0.75) {};
		\node [style=none] (4) at (-0.75, 0.25) {$#1$};
	\end{pgfonlayer}
	\begin{pgfonlayer}{edgelayer}
		\draw (1.center) to (0);
		\draw [in=183, out=90] (0) to (2.center);
		\draw [in=-180, out=-90] (0) to (3.center);
	\end{pgfonlayer}
\end{tikzpicture}

}
}
\newcommand{\BcounitL}[1]{
\scalebox{0.7}{
\begin{tikzpicture}
	\begin{pgfonlayer}{nodelayer}
		\node [style=none] (0) at (-0.75, 0) {};
		\node [style=Black] (1) at (0.75, 0) {};
		\node [style=none] (2) at (0, 0.25) {$#1$};
	\end{pgfonlayer}
	\begin{pgfonlayer}{edgelayer}
		\draw (0.center) to (1);
	\end{pgfonlayer}
\end{tikzpicture}

}
}
\newcommand{\dsymNetL}[2]{
\scalebox{0.7}{
\begin{tikzpicture}
	\begin{pgfonlayer}{nodelayer}
		\node [style=none] (0) at (-1, 0.75) {};
		\node [style=none] (1) at (-1, -0.75) {};
		\node [style=none] (2) at (1, 0.75) {};
		\node [style=none] (3) at (1, -0.75) {};
		\node [style=none] (4) at (-1.5, 0.75) {};
		\node [style=none] (5) at (-1.5, -0.75) {};
		\node [style=none] (6) at (1.5, 0.75) {};
		\node [style=none] (7) at (1.5, -0.75) {};
		\node [style=none] (8) at (-1.25, 1) {$#1$};
		\node [style=none] (9) at (1.25, 1) {$#2$};
		\node [style=none] (10) at (1.25, -0.5) {$#1$};
		\node [style=none] (11) at (-1.25, -0.5) {$#2$};
	\end{pgfonlayer}
	\begin{pgfonlayer}{edgelayer}
		\draw [in=180, out=0, looseness=0.75] (0.center) to (3.center);
		\draw [in=180, out=0, looseness=0.75] (1.center) to (2.center);
		\draw (4.center) to (0.center);
		\draw (5.center) to (1.center);
		\draw (2.center) to (6.center);
		\draw (3.center) to (7.center);
	\end{pgfonlayer}
\end{tikzpicture}
}
}
\newcommand{\idoneL}[1]{\tikz{
    \draw (0, 0) -- node[anchor=south] {\tiny{\ensuremath{#1}}} (1, 0);
}}
\newcommand{\idone}{
\tikz {
\draw (0, 0) -- (1, 0);
}
}
\newcommand\dsymNet{\stikzfig{Sym}}
\newcommand{\relation}[1]{
\scalebox{0.7}{\begin{tikzpicture}
	\begin{pgfonlayer}{nodelayer}
		\node [style=map small] (0) at (0, 0) {\normalsize ${#1}$};
		\node [style=none] (1) at (-1.25, 0) {};
		\node [style=none] (2) at (1.25, 0) {};
		\node [style=none] (3) at (-1, 0.25) {$n$};
	\end{pgfonlayer}
	\begin{pgfonlayer}{edgelayer}
		\draw (1.center) to (0);
		\draw (2.center) to (0);
	\end{pgfonlayer}
\end{tikzpicture}
}
}
\newcommand{\predicate}[1]{
\scalebox{0.7}{
\begin{tikzpicture}
	\begin{pgfonlayer}{nodelayer}
		\node [style=Small box] (0) at (0, 0) {\normalsize ${#1}$};
		\node [style=none] (1) at (-1.25, 0) {};
		\node [style=none] (3) at (-1, 0.25) {$n$};
	\end{pgfonlayer}
	\begin{pgfonlayer}{edgelayer}
		\draw (1.center) to (0);
	\end{pgfonlayer}
\end{tikzpicture}
}
}
\newcommand{\Zeronet}{\stikzfig{Empty}}
\renewcommand{\Theta}[1]{\left\llbracket #1 \right\rrbracket }
\def\invexcl{\rotatebox[origin=c]{180}{!}}
\theoremstyle{remark}
\newtheorem*{notation}{Notation}
\begin{document}

\maketitle
\begin{abstract}
We study the relationship between cartesian bicategories and a specialisation of Lawvere's hyperdoctrines, namely elementary existential doctrines. Both provide different ways of abstracting the structural properties of logical systems: the former in algebraic terms based on a string diagrammatic calculus, the latter in universal terms using the fundamental notion of adjoint functor. We prove that these two approaches are related by an adjunction, which can be strengthened to an equivalence by imposing further constraints on doctrines.
\end{abstract}

\section{Introduction}\label{sec:Intro}

In~\cite{lawvere_adjointness_1969,lawvere_diagonal_1969,lawvere_equality_1970} Lawvere introduced the notion of \emph{hyperdoctrine} in an effort to capture the \emph{universal} content of logical theories, and first-order logic in particular. Here, by universal, we intend by means of \emph{universal properties} in category theory.
The starting point is the notion of Lawvere theory~\cite{lawvere1963functorial}, the universal way of capturing the notion of algebraic theory -- where the universal property is that of cartesian categories, namely categories with finite products. In terms of logical content, Lawvere theories provide the notion of \emph{term}. Now, a hyperdoctrine is a certain contravariant functor $P$ from the Lawvere theory of terms to a posetal 2-category, e.g.\ lattices or Heyting algebras. The basic, high-level idea is that the functor takes us from terms to \emph{formulas}; more precisely, the objects of the Lawvere theories, which can be thought of as variable contexts, are taken to the Lindenbaum-Tarski algebra of formulas over these contexts. In this way, the concept of \emph{quantifier} can be captured by means of a universal property -- the existence of left-adjoints (existential quantification) and right-adjoints (universal quantification) to the image along $P$ of the projections.

In recent years, there has been a large number~\cite{BaezErbele-CategoriesInControl,DBLP:journals/pacmpl/BonchiHPSZ19,Bonchi2015,coecke2011interacting, Fong2015,DBLP:journals/corr/abs-2009-06836,Ghica2016,DBLP:conf/lics/MuroyaCG18,Piedeleu2021} of contributions that use string diagrams in order to model computational phenomena of different kinds. Typically, the languages come with an equational theory, which can be used to reason about systems via diagrammatic reasoning. Interestingly, the \emph{same} algebraic structures seem to appear in many different contexts, e.g. commutative monoids and comonoids, Frobenius algebras, Hopf algebras, etc. These applications, while using the language and tools of (monoidal) category theory, are of rather different nature than the more established ``universal'' approaches, such as Lawvere theories and hyperdoctrines sketched in the previous paragraph. A bridge between the universal and algebraic worlds is given by a theorem of Fox~\cite{fox_coalgebras_1976} that characterises cartesian categories as symmetric monoidal categories where each object is equipped with a well-behaved commutative comonoid structure. This means that any Lawvere theory can be seen concretely as a string diagrammatic language (see e.g.,~\cite{DBLP:journals/jlp/BonchiSZ18}). More recently, the notion of discrete cartesian restriction category was characterised in a similar way~\cite{Liberti2021}, with partial Frobenius algebras taking the place of commutative comonoids. This raises a natural question: can we capture the universal content of logical theories algebraically in a similar way? In other words, what are the ``Fox theorems'' for logic?

In this paper we turn our attention to the regular fragment of first-order logic with equality: formulas are built up from terms and the equality relation using the existential quantifier and conjunction. There has been much work on categorifying this fragment, notably the significant corpus of work on allegories~\cite{freyd1990categories}. More relevant to our story, we focus on the contrast between universal and algebraic approaches. A universal treatment, the notion of \emph{elementary existential doctrine}, was introduced in~\cite{maietti_quotient_2013}. The basic setup is the same as for Lawvere's hyperdoctrines, but one asks only for the left adjoints, which, as we have previously mentioned, are the universal explanation for existential quantifiers. On the algebraic side, the concept that stands out is that of Carboni and Walters' cartesian bicategories (of relations)~\cite{carboni1987cartesian}, which are symmetric monoidal categories where objects are equipped with a special Frobenius algebra and a lax-natural commutative comonoid structures. While Carboni and Walters emphasised the relational algebraic aspects, they were certainly aware of the logical connections. In fact, some recent works~\cite{GCQ,fong2018graphical,patterson2017knowledge} exploited various ramifications of the correspondence between cartesian bicategories and regular logic.

Our goal for this paper is a ``Fox theorem'' for the regular fragment, connecting the universal and the algebraic approaches. Our starting observation is that, given a cartesian bicategory $(\B, \otimes, I)$, one obtains an  elementary existential doctrine by restricting the hom functor $\Hom_\B(-,I) \colon \op\B \to \Set$ to the (cartesian) category of \emph{maps} of $\B$.
In Remark 2.5 of~\cite{maietti2015unifying}, it is mentioned that the other direction is also possible: given an elementary existential doctrine one can construct a cartesian bicategory. We explore the ramifications of this remark in detail. We show that these two translations are functorial and, actually, that they form an adjunction. More precisely, it turns out that the category of cartesian bicategories is a reflective subcategory of the category of doctrines.

The adjunction, however, is \emph{not} an equivalence. We prove this with a counterexample that captures the crux of the matter:
there are doctrines $P \colon \op\Cat \to \InfSL$ where the indexing categories of terms $\Cat$ are not tailored to the represented logics. In doctrines-as-logical-theories, roughly speaking, equality can come from two places: implicitly, from the indexing term category, and explicitly, via logical equivalence. Doctrines, therefore, have an additional degree of intensionality: doctrines that ``substantially represent'' the same logic may have distinct index categories and thus not be isomorphic. This issue does not arise in cartesian bicategories where the role of $\Cat$ is played by the subcategory of maps: maps are arrows satisfying certain properties, rather than given a priori in a fixed index category.

We conclude by observing that, by adding further constraints to the notion of elementary existential doctrine, namely comprehensive diagonals and the Rule of Unique Choice from \cite{maietti2015unifying}, it is possible to exclude such problematic doctrines. By doing so, we restrict the adjunction to an equivalence, thus obtaining a satisfactory ``Fox theorem''.

\begin{notation}
Given $f \colon X \to Y$ and $g \colon Y \to Z$ morphisms in some category, we denote their composite as $f \seq g$, $g \circ f$ or $gf$. We write $P_f$ for the action of a doctrine $P$ on a morphism $f$.
\end{notation}

 \section{Cartesian Categories}
The staring point of our exposition is the definition of cartesian category that, thanks to the results of Fox \cite{fox_coalgebras_1976},
can be given in the following form, which is particularly convenient for the purposes of this paper.
\begin{definition}\label{def:cartCat}
\	A cartesian category is a symmetric monoidal category $(\Cat,\otimes, I)$ where every object $X \in \Cat$ is equipped with morphisms
	\[
	\stikzfig{CopyX} \from X \to X \otimes X \quad \text{and} \quad \stikzfig{DisX} \from X \to I \quad \text{ such that}
	\]
\begin{enumerate}
		\item $\stikzfig{CopyX}$ and $\stikzfig{DisX}$ form a cocommutative comonoid, that is they satisfy
		\begin{align*}
		\sxtikzfig{Assoc} & \qquad
		\sxtikzfig{Comm} & \quad
		\sxtikzfig{Unit}
		\end{align*}
		\item Each morphism $f \from X \to Y$ is a comonoid homomorphism, that is
		\begin{align*}
		& \sxtikzfig{natcopy}
		& &\sxtikzfig{natdis}
		\end{align*}
		\item The choice of comonoid on every object is coherent with the monoidal structure in the sense that
		\begin{align*}
		& \sxtikzfig{CohDis}
		& & \sxtikzfig{CohCopy}
		\end{align*}
	\end{enumerate}
\end{definition}
Indeed,  given a category $\Cat$ with finite products, one can construct a monoidal category as in the definition above, by taking as monoidal product $\otimes$ the categorical product and as its unit $I$ the terminal object; for every object $X$, $\stikzfig{CopyX}$ is given by the pairing $\langle \id_X, \id_X \rangle \colon X \to X \otimes X$, hereafter denoted by $\Delta_X$, and $\stikzfig{DisX}$ by the unique morphism $!_X \colon X \to 1$. Conversely, given a symmetric monoidal category $(\Cat,\otimes, I)$  as in Definition \ref{def:cartCat}, $\otimes$ forms a categorical product where projections $\pi_X: X \times Y \to X$ are given as $id_X \otimes  \stikzfig{DisY}$ and the pairing $\langle f,g\rangle \colon X \to Y \otimes Z$ as  $\stikzfig{CopyX} ; (f \otimes g)$ for all arrows $f\colon X \to Y$ and $g\colon X \to Z$.

Hereafter, we will use $\times$ to denote both the cartesian product of sets and the categorical product in an arbitrary cartesian category. It is worth to remark that while $\Set$ (sets and functions) and $\Rel$ (sets and relations) are both cartesian categories, the categorical product in $\Set$ is indeed the cartesian product, while in $\Rel$ it is actually the disjoint union.

\begin{example}
Another cartesian category that will play an important role is $\Law{\Sigma}$, the \emph{Lawvere theory} \cite{lawvere1963functorial} generated by a cartesian signature $\Sigma$ (a set of symbols $f$ equipped with some arity $\ar(f)\in \Nat$). In $\Law{\Sigma}$, objects are natural numbers and arrows are tuples of terms over a countable set of variables $V=\{x_1,x_2, \dots\}$. More precisely, arrows from $n$ to $m$ are tuples $\langle t_1, \dots, t_m \rangle$ of sort $(n,m)$ as defined by the inference rules in the first line of Figure \ref{fig:sort}. It is easy to check that $\langle t_1, \dots, t_m \rangle$ has sort $(n,m)$ if each term $t_i$ has variables in $\{ x_1, \dots, x_m \}$. Composition is defined by (simultaneous) substitution: the composition of $\langle t_1, \dots, t_m \rangle \sort{n}{m}$ with $\langle s_1, \dots, s_l \rangle \sort{m}{l}$ is the tuple $\langle u_1, \dots, u_l \rangle \sort{n}{l} $ where $u_i= s_i[ ^{t_1 \dots t_m} /  _{x_1 \dots x_m} ]$ for all $i=1, \dots, l$.
One can readily check that $\Law{\Sigma}$ is a SMC having $(\Nat,+,0)$ as the monoid of objects, i.e., it is a  \emph{prop} (product and permutation category, see~\cite{Lack2004a,MacLane1965}). Identities $id_n$ and symmetries $\sigma_{n,m}$ are defined as expected; $\BcomultL{n} \colon n \to n+n$ is the tuple $\langle x_1, \dots, x_n,x_1, \dots, x_n\rangle $ thus acting as a \emph{duplicator} of variables;  $\BcounitL{n}\colon n \to 0$ is the empty tuple $\langle \rangle$, acting as a \emph{discharger}.
\end{example}

\begin{definition}
	A morphism of cartesian categories is a strict monoidal functor preserving the chosen comonoid structures.
\end{definition}

\begin{example}\label{ex:Lawmorphism}
Let $\Sigma_1$ be the cartesian signature consisting of a single symbol $f$ with arity~$1$, and $\Sigma_2$ be the signature with two symbols, $g_1$ and $g_2$, both of arity $1$.
Consider the corresponding Lawvere theories $\Law{\Sigma_1}$ and $\Law{\Sigma_2}$.
The assignment $f \mapsto g_1$ induces a morphism of cartesian categories, hereafter denoted by $F_1 \colon \Law{\Sigma_1} \to \Law{\Sigma_2}$. Similarly, let $F_2 \colon \Law{\Sigma_1} \to \Law{\Sigma_2}$ denote the morphism of cartesian categories where $f$ is mapped to $g_2$.
Finally, there is a unique morphism of cartesian categories $Q \colon \Law{\Sigma_2} \to \Law{\Sigma_1}$ mapping $g_1$ and $g_2$ to $f$.
\end{example}

\begin{figure}[t]
\[
\scalebox{0.72}{\ensuremath{
\begin{gathered}
\begin{prooftree}
{i\leq n}
\justifies
x_i \sort{n}{1}
\using {(V)}
\end{prooftree}
\quad
\begin{prooftree}
{f \in \Sigma \quad \ar(f)=m \quad \langle t_1, \dots t_m\rangle \sort{n}{m}}
\justifies
f\langle t_1,\dots, t_m\rangle \sort{n}{1}
\using {(\Sigma)}
\end{prooftree}
\quad
\begin{prooftree}
{t_1 \sort{n}{1}\quad \langle t_2,\dots, t_m \rangle \sort{n}{m-1}}
\justifies
\langle t_1, \dots, t_m\rangle \sort{n}{m}
\using {\langle\dots \rangle}
\end{prooftree}
\quad
\begin{prooftree}
\justifies
\langle \rangle \sort{n}{0}
\using {\langle \rangle}
\end{prooftree}
\\[1.5em]
\begin{prooftree}
{}
\justifies
\top \sort{n}{0}
\using {(\top)}
\end{prooftree}
\quad
\begin{prooftree}
{P \in \Pred \quad \ar(P)=m \quad \langle t_1, \dots t_m\rangle \sort{n}{m}}
\justifies
P\langle t_1,\dots, t_{m}\rangle \sort{n}{0}
\using {(\Pred)}
\end{prooftree}
\quad
\begin{prooftree}
 \quad \phi\sort{n+1}{0}
\justifies
 \exists x_{n+1}.\phi \sort{n}{0}
\using
{(\exists)}
\end{prooftree}
\quad
\begin{prooftree}
{\langle t_1,t_2\rangle \sort{n}{2}}
\justifies
t_1 = t_2 \sort{n}{0}
\using (=)
\end{prooftree} \quad
\begin{prooftree}
 \phi \sort{n}{0} \quad
 \psi \sort{n}{0}
\justifies
 \phi \wedge \psi \sort{n}{0}
\using (\wedge)
\end{prooftree}
\end{gathered}
}
}
\]
\caption{Sort inference rules}\label{fig:sort}
\end{figure}

\section{Elementary Existential Doctrines}\label{sec:eed}

Recall that an \emph{inf-semilattice} is a partially ordered set with all finite infima, including a top element $\top$. We denote the category of inf-semilattices and inf-preserving functions by $\InfSL$.

The following definition is taken almost verbatim from~\cite{maietti2015unifying}. The difference is that there the base category $\C$ only needs binary products, whereas we also require a terminal object.
\begin{definition}\label{def:eeh}
Let $\Cat$ be a cartesian category.
	An elementary existential doctrine is given by a functor $P \from \op{\Cat} \to \InfSL$
  that is:
\begin{itemize}
		\item \emph{Elementary}, namely for every object $A$ in $\Cat$ there is an element $\delta_A \in P(A \times A)$ such that
for every map $e = id_X \times \Delta_A \from X\times A \to X \times A \times A$,
			the function $P_e \from P(X \times A \times A) \to P(X \times A)$ has a left adjoint $\exists_{e} \colon P(X \times A) \to P(X \times A \times A)$ defined by the assignment
			\begin{equation}
				\exists_{e}(\alpha) = P_{\langle \pi_1, \pi_2 \rangle  \colon X \times A \times A \to X \times A}(\alpha) \wedge P_{\langle \pi_2, \pi_3 \rangle \colon X \times A \times A \to A \times A}(\delta_A). \label{eq:elementary2}
			\end{equation}
\item \emph{Existential}, namely for every $A_1, A_2 \in \Cat$ and projection $\pi_i \from A_1 \times A_2 \to A_i$ with $i \in \{1,2\}$,
		the function $P_{\pi_i} \from P(A_i) \to P(A_1 \times A_2)$ has a left-adjoint $\exists_{\pi_i}$ that satisfies
		\begin{itemize}
			\item the \emph{Beck-Chevalley condition}:
			for any projection $\pi \from X \times A \to A$ and any pullback
			\begin{equation}\label{eq:Beck-Chevalley pullback diagram}
			\begin{tikzcd}
			X' \ar[d,"f'"'] \arrow{r}{\pi'} & A' \ar[d,"f"] \\
			X \times A \arrow{r}{\pi} & A
			\end{tikzcd}
			\;
			\text{it holds that } \exists_{\pi'}(P_{f'}(\beta)) = P_{f}(\exists_{\pi}(\beta))
			\text{ for any } \beta \in P(X\times A) \text{.}
			\end{equation}
			\item \emph{Frobenius reciprocity}:
			for any projection $\pi \from X \times A \to A$, $\alpha \in P(A)$ and $\beta \in P(X)$, it holds that
			$
			\exists_{\pi}(P_{\pi}(\alpha) \wedge \beta) = \alpha \wedge \exists_{\pi}(\beta)
			$.
		\end{itemize}
	\end{itemize}
\end{definition}

\begin{remark}\label{rmk:BC}
    Taking $X$ in~\eqref{eq:elementary2} to be the terminal object of $\C$ , one obtains that the function $\exists_{\Delta_A} \from P(A) \to P(A \times A)$ given by
			\begin{equation}\label{eq:elementary1}
				\exists_{\Delta_A}(\alpha) = P_{\pi_1}(\alpha) \wedge \delta_A
			\end{equation}
			is left-adjoint to $P_{\Delta_A} \from P(A \times A) \to P(A)$. This condition, which appears in~\cite{maietti2015unifying}, is therefore redundant when $\Cat$ has terminal object.
\end{remark}

\begin{remark}\label{rem:Beck-Chevalley pullback diagram simple}
    In any cartesian category, the diagram below is a pullback of $f \colon A' \to A$ with a projection $\pi \colon X \times A \to A$. 
    \[
    \begin{tikzcd}
    X \times A' \ar[r,"\pi_2"] \ar[d,"\id_X \times f"'] & A' \ar[d,"f"] \\
    X \times A \ar[r,"\pi"] & A
	\end{tikzcd}
    \]
    Therefore, given a functor $P \colon \op{\C} \to \InfSL$, to check that $P$ satisfies the Beck-Chevalley condition it suffices to show that~\eqref{eq:Beck-Chevalley pullback diagram} holds for $X'\!=X \times A'$, $\pi'=\pi_2$ and $f'=\id_X \times f$.
\end{remark}

The contravariant powerset $\pow$, while often seen as an endofunctor on $\Set$, can also be seen as a functor
$\pow \colon \op\Set \to \InfSL$. It is the classical example of an elementary existential doctrine. Recall that, for $f \colon Y \to X$ in $\Set$, $\pow_f \colon \pow(X) \to \pow(Y)$ is $\pow_f (Z) = \{ y \in Y \mid f(y) \in Z \}$. Given a projection $\pi \colon X \times A \to A$, $\pow_\pi$ and its left adjoint $\exists_\pi$ are as follows:
\[
    \begin{tikzcd}[row sep=0em]
    \pow(A) \ar[r,"\pow_\pi"] & \pow(X \times A) \\
    B \ar[r,|->] & \{ (x,b) \in X \times A \mid b \in B  \}
    \end{tikzcd}
        \qquad
    \begin{tikzcd}[row sep=0em,column sep=2em]
    \pow(X \times A) \ar[r,"\exists_{\pi}"] & \pow(A) \\
    S \ar[r,|->] & \{ a \in A \mid \exists x \in X \ldotp (x,a) \in S \}
    \end{tikzcd}
\]
For every set $A$, $\delta_A$ is fixed to be $\{ (a, a) \mid a \in A \} \in \pow(A \times A)$. With this information, it is easy to check that $\pow \colon \op\Set \to \InfSL$ satisfies the conditions in Definition \ref{def:eeh}. For the convenience of the reader, we work out the details in Appendix \ref{appendix:eed}.

\begin{example}\label{example:Lindenbaum-Tarski eed}
Let $\Sigma$ and $\Pred$ be signatures of function and predicate symbols respectively. The regular fragment of first order logic consists of formulas built from conjunction $\wedge$, true $\top$, existential quantification $\exists$, equality $t_1=t_2$ and atoms $P\langle t_1, \dots t_m \rangle$ where $P\in \Pred$ and $t_i$ are terms over $\Sigma$. Formulas are sorted according to the rules in Figure \ref{fig:sort}: $\phi$ has sort $(n,0)$ if the free variables of $\phi$ are in $\{x_1, \dots, x_n\}$.

The indexed Lindenbaum-Tarski algebra functor $LT \colon \op{\Law{\Sigma}} \to \InfSL$ assigns to each $n \in \Nat$ the set of formulas of sort $(n,0)$ modulo logical equivalence (defined in the usual way). These form a semilattice with top given by $\top$ and meet by $\wedge$, where $\phi \leq \psi$ if and only if $\psi$ is a logical consequence of $\phi$.
To the arrow $\langle t_1, \dots t_m\rangle \colon n \to m$, $LT$ assigns the substitution mapping each $\phi \sort{m}{0}$ to $\phi [^{t_1, \dots t_m}/_{x_1, \dots x_m}] \sort{n}{0}$. In particular, for the projection $\pi \colon n+m \to n$ (that is $\langle x_1, \dots, x_n\rangle \colon n+m \to n$) $LT_{\pi}$ maps a formula of sort $(n,0)$ to the same formula but with sort $(n+m,0)$\footnote{The second projection $\pi\colon n+m \to m$ requires the reindexing of variables. We do not discuss this case in order to keep the presentation simpler.}. Its left adjoint $\exists_\pi$ maps a formula $\phi \sort{n+m}{0}$ to the formula $\exists x_{n+1}. \dots \exists x_{n+m}.\, \phi \sort{n}{0}$. The Beck-Chevalley condition asserts that ``substitution commutes with quantification'': if $\phi$ is a formula with at most $x_{n+1}$ free variables and $t$ is a term that does not contain $x_{n+1}$, then $\exists x_{n+1}.\, (\phi[^t / _{x_i}]) = (\exists x_{n+1}.\, \phi )[^t / _{x_i}]$. Frobenius reciprocity states that $\exists x_i.\, (\phi \wedge \psi )= \phi \wedge (\exists x_i.\, \psi )$ if $x_i$ is not a free variable of $\phi$. For all $n \in \Nat$,
$\delta_n$ is the formula $(x_1 = x_{n+1}) \wedge (x_2=x_{n+2}) \wedge \dots \wedge (x_n=x_{n+n})$.

\end{example}

\begin{definition}[{Cf.\ \cite{maietti2015unifying}}]\label{def:category EED}
	The category $\EED$ consists of the following data.
	\begin{itemize}
		\item Objects are elementary existential doctrines $P \colon \op\C \to \InfSL$.
		\item Morphisms from $P \colon \op\C \to \InfSL$ to $R \colon \op\D \to \InfSL$ are pairs $(F,b)$, where $F \colon \C \to \D$ is a strict cartesian functor while $b \colon P \to R  \circ \op F$ is a natural transformation
		\[
		\begin{tikzcd}[column sep=3em]
		\op\C \ar[dr,"P"] \ar[dr,draw=none,swap,""name=P] \ar[dd,"\op F"']  \\
		& \InfSL \\
		\op\D \ar[ur,swap,"R"] \ar[ur,draw=none,""name=R,pos=.445]
		\arrow[Rightarrow,from=P,"b"',shorten >=1.5em,pos=0.4]
		\end{tikzcd}
		\]
		that preserves equalities and existential quantifiers, that is $b_{A \times A} (\delta^P_A) = \delta^R_{F(A)}$ for all $A$ in $\C$ and, for any projection $\pi \colon X \times A \to A$ in $\C$, the following diagram commutes.
		\begin{equation}\label{eq:commutative square b preserves existential quantifier}
		\begin{tikzcd}
		P(X \times A) \ar[r,"\exists^P_\pi"] \ar[d,"b_{X \times A}"'] & P(A) \ar[d,"b_A"] \\
		RF(X \times A) \ar[r,"\exists^R_{F(\pi)}"'] & RF(A)
		\end{tikzcd}
		\end{equation}
\item Composition of $(F,b) \colon P \to R$ as above with $(G,c) \colon R \to S$ is given by $(GF,cF \circ b)$.
	\end{itemize}
\end{definition}

\begin{remark}
    In~\cite{maietti2015unifying}, morphisms of elementary existential doctrines $P \to R$ are pairs $(F,b)$ where $F \colon \C \to \D$ is a functor between the base categories that preserves binary products merely up to isomorphism, so $\begin{tikzcd}[cramped, sep=small]
    F(X) & F(X \times Y) \ar[l,"F(\pi_1)"'] \ar[r,"F(\pi_2)"] & F(Y)
    \end{tikzcd}$
    is a product diagram of $F(X)$ and $F(Y)$ in $\D$ but it might not coincide with the \emph{chosen} product of $\D$. For this reason, preservation of equality for $b$ in~\cite{maietti2015unifying} means that $b_{A \times A}(\delta^P_A) = R_{\langle F(\pi_1), F(\pi_2) \rangle} (\delta^R_{F(A)})$.
\end{remark}

\begin{remark}\label{rem:precomposition of doctrine with cartesian functor}
	Let $P \colon \op\D \to \InfSL$ be an elementary existential doctrine and $F \colon \C \to \D$ a strict cartesian functor. Then $P \circ \op F \colon \op\C \to \InfSL$ is again an elementary existential doctrine, where $\delta^{P \op F}_A = \delta^P_{F(A)}$ for all $A$ in $\B$ and, for very $\pi_i \colon X_1 \times X_2 \to X_i$, $\exists^{P \op F}_{\pi_i} = \exists^P_{F(\pi_i)}$.
\end{remark}

\section{Cartesian Bicategories}\label{sec:cartesianbi}
In this section we recall from~\cite{carboni1987cartesian} definitions and properties of cartesian bicategories.

\begin{definition}\label{def:cartBicat}
	A cartesian bicategory is a symmetric monoidal category $(\B,\otimes, I)$ enriched over the category of posets (that is every hom-set is a partial order and both composition and $\otimes$ are monotonous operations) where every object $X \in \B$ is equipped with morphisms
	\[
	\stikzfig{CopyX} \from X \to X \otimes X \quad \text{and} \quad \stikzfig{DisX} \from X \to I \quad \text{ such that}
	\]
\begin{enumerate}
		\item $\stikzfig{CopyX}$ and $\stikzfig{DisX}$ form a cocommutative comonoid, as in Definition \ref{def:cartCat}.1
\item\label{monoid} $\stikzfig{CopyX}$ and $\stikzfig{DisX}$ have right adjoints $\stikzfig{CocopyX}$ and $\stikzfig{CodisX}$ respectively, that is
\[
        \medtikzfig{UnitCopy} \qquad
		\medtikzfig{CounitCopy} \qquad \medtikzfig{UnitDis}  \qquad
		\medtikzfig{CounitDis}
        \]
		\item\label{frob} The Frobenius law holds, that is
		\[
		\sxtikzfig{Frob}
		\]
		\item\label{lax} Each morphism $R \from X \to Y$ is a \emph{lax} comonoid homomorphism, that is
		\begin{align*}
		& \sxtikzfig{LaxCopy}
		& &\sxtikzfig{LaxDis}
		\end{align*}
		\item The choice of comonoid is coherent with the monoidal structure\footnote{In the original definition of~\cite{carboni1987cartesian} this property is replaced by requiring the uniqueness of the comonoid/monoid. However, as suggested in~\cite{patterson2017knowledge}, coherence seems to be the property of primary interest.}, as in Definition \ref{def:cartCat}.3.
\end{enumerate}
\end{definition}

The archetypal example of a cartesian bicategory is the category of sets and relations $\Rel$, with cartesian product of sets as monoidal product and $1=\{\bullet\}$ as unit $I$. To be precise, $\Rel$ has sets as objects and relations $R \subseteq X \times Y$ as arrows $X \to Y$. Composition and monoidal product are defined as expected:
$R \seq S=\{(x,z) \,| \, \exists y \text{ s.t.\ } (x,y)\in R\text{ and } (y,z)\in S\}$ and
$R\otimes S =\{\big((x_1,x_2)\,, \, (y_1,y_2)\big) \,|\, (x_1,y_1)\in R \text{ and } (x_2,y_2)\in S \}$.
 For each set $X$, the comonoid structure is given by the diagonal function $\Delta_X\colon X\to X\times X$ and the unique function $!_X\colon X\to 1$, considered as relations, that is $\stikzfig{CopyX} = \{\big(x, (x,x) \big) \, |\, x\in X\}$ and $\stikzfig{DisX}=\{(x,\bullet) \,|\, x\in X\}$. Their right adjoints are given by their opposite relations: $\stikzfig{CocopyX}=\{\big((x,x),x \big) \, |\, x\in X\}$ and $\stikzfig{CodisX}=\{(\bullet,x) \,|\, x\in X\}$.
 Following the analogy with $\Rel$, we will often call `relations' arbitrary morphisms of a cartesian bicategory.

One of the fundamental properties of cartesian bicategories that follows from the existence of right adjoints (Property~\ref{monoid} in Definition~\ref{def:cartBicat}) is that
every local poset $\Hom_{\B}(X,Y)$ allows to take the intersection of relations and has a top element.

\begin{lemma}\label{lemma:homsets in CBC are InfSLattices}Let $\B$ be a cartesian bicategory and $X,Y \in \B$. The poset $\Hom_{\B}(X,Y)$ has a top element given by $\stikzfig{Top}$ and the meet of relations
  $R,S \from X \to Y$ is given by
  \[ \stikzfig{Meet} \]
\end{lemma}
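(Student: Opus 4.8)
The plan is to treat the two assertions separately, reducing each to the axioms of Definition~\ref{def:cartBicat}: the comonoid laws, the adjunctions of Property~\ref{monoid}, and the lax comonoid-morphism inequalities of Property~\ref{lax} (the Frobenius law and the coherence axiom play no role). Throughout I write $\gamma_Z \colon Z \to Z \otimes Z$ and $e_Z \colon Z \to I$ for the comultiplication $\stikzfig{CopyX}$ and counit $\stikzfig{DisX}$ on an object $Z$, and $\gamma_Z^{\dagger} \colon Z \otimes Z \to Z$, $e_Z^{\dagger} \colon I \to Z$ for their chosen right adjoints; I freely use that $(\gamma_Z^{\dagger}, e_Z^{\dagger})$ is a commutative monoid, since it is the mate of the comonoid $(\gamma_Z, e_Z)$ under these adjunctions and the monoid laws are exactly the images of the comonoid laws under taking right adjoints. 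In this notation $\stikzfig{Top} = e_X \seq e_Y^{\dagger}$ and $\stikzfig{Meet} = \gamma_X \seq (R \otimes S) \seq \gamma_Y^{\dagger}$; coherence isomorphisms are left implicit.

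\emph{Top element.} For an arbitrary $R \colon X \to Y$ I would start from $R = R \seq \id_Y$, insert the unit $\id_Y \leq e_Y \seq e_Y^{\dagger}$ of the adjunction $e_Y \dashv e_Y^{\dagger}$, and apply monotonicity of composition to get $R \leq R \seq e_Y \seq e_Y^{\dagger}$. The lax inequality of Property~\ref{lax} for the counit, applied to $R$, gives $R \seq e_Y \leq e_X$, and post-composing with $e_Y^{\dagger}$ yields $R \leq e_X \seq e_Y^{\dagger} = \stikzfig{Top}$. Since $\stikzfig{Top}$ is itself a morphism $X \to Y$, it is the greatest element of $\Hom_{\B}(X,Y)$.

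\emph{The meet is a lower bound.} To show $\stikzfig{Meet} \leq R$ the key step is the auxiliary inequality $\gamma_Y^{\dagger} \leq \id_Y \otimes e_Y$ (modulo the right unitor). I would prove it by precomposing $\gamma_Y^{\dagger}$ with $\id_Y \otimes (e_Y \seq e_Y^{\dagger}) \geq \id_{Y \otimes Y}$, using interchange to rewrite the result as $(\id_Y \otimes e_Y) \seq \big( (\id_Y \otimes e_Y^{\dagger}) \seq \gamma_Y^{\dagger} \big)$, and then collapsing $(\id_Y \otimes e_Y^{\dagger}) \seq \gamma_Y^{\dagger}$ to $\id_Y$ by the unit law of the monoid $(\gamma_Y^{\dagger}, e_Y^{\dagger})$. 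Substituting into $\stikzfig{Meet}$ and using interchange gives $\stikzfig{Meet} \leq \gamma_X \seq \big( R \otimes (S \seq e_Y) \big)$; by the lax counit inequality for $S$ this is $\leq \gamma_X \seq (R \otimes e_X)$, which equals $R$ by the comonoid counit law $\gamma_X \seq (\id_X \otimes e_X) = \id_X$. Symmetrically, using the mirror inequality $\gamma_Y^{\dagger} \leq e_Y \otimes \id_Y$, the lax counit inequality for $R$, and the counit law $\gamma_X \seq (e_X \otimes \id_X) = \id_X$, one obtains $\stikzfig{Meet} \leq S$.

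\emph{The meet is the greatest lower bound.} Let $T \colon X \to Y$ with $T \leq R$ and $T \leq S$. Then $T = T \seq \id_Y \leq T \seq \gamma_Y \seq \gamma_Y^{\dagger}$ by the unit $\id_Y \leq \gamma_Y \seq \gamma_Y^{\dagger}$ of the adjunction $\gamma_Y \dashv \gamma_Y^{\dagger}$; the lax inequality of Property~\ref{lax} for the comultiplication, applied to $T$, gives $T \seq \gamma_Y \leq \gamma_X \seq (T \otimes T)$; and monotonicity of $\otimes$ with $T \leq R$, $T \leq S$ gives $T \otimes T \leq R \otimes S$. Chaining these inequalities yields $T \leq \gamma_X \seq (R \otimes S) \seq \gamma_Y^{\dagger} = \stikzfig{Meet}$. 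The only point requiring genuine care is the auxiliary inequality $\gamma_Y^{\dagger} \leq \id_Y \otimes e_Y$ together with its mirror, and the bookkeeping of directions: since Property~\ref{lax} supplies only one-sided inequalities, every use of it must point the right way along the chain of $\leq$'s — which, reassuringly, it does in all three parts.
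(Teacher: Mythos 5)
Your argument is correct: the top element follows from the unit of $\stikzfig{DisX} \dashv \stikzfig{CodisX}$ together with the lax counit inequality, the lower-bound claims follow from the mated monoid unit laws plus the lax counit inequality, and the universal property follows from the unit of $\stikzfig{CopyX} \dashv \stikzfig{CocopyX}$ together with the lax comultiplication inequality and monotonicity of $\otimes$ — exactly the standard proof from Carboni and Walters, which this paper only cites rather than reproduces. Your one auxiliary step, $\stikzfig{CocopyX} \leq \id \otimes \stikzfig{DisX}$ (up to unitor), is justified as you say, since right adjoints of the comonoid equations yield the monoid equations by uniqueness of adjoints in a poset-enriched category.
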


The Frobenius law (Property~\ref{frob}) gives a compact closed structure -- in other words, it allows us
to bend wires around. The cup of this compact closed structure is $\stikzfig{Cup}$, the cap analogously $\stikzfig{Cap}$ and the Frobenius law implies
the snake equations:
\begin{equation}\label{eq:snake}
  \sxtikzfig{Snake}
\end{equation}
To obtain an intuition for the lax comonoid homomorphism condition (Property~\ref{lax}), it is useful to spell out its meaning in $\Rel$: in the first inequality, the left and the right-hand side are, respectively, the relations $\{\big(x,(y,y)\big) \,| \,  (x,y)\in R \}$  and $\{\big(x,(y,z)\big) \,| \,  (x,y)\in R \text{ and } (x,z) \in R \}$, while in the second inequality, they are the relations
$\{(x,\bullet) \,| \, \exists y\in Y \text{ s.t. } (x,y)\in R \}$ and  $\{(x,\bullet) \,|\, x \in X  \}$.
It is immediate to see that the two left-to-right inclusions hold for any relation $R\subseteq X \times Y$, while the right-to-left inclusions hold for exactly those relations that are (graphs of) functions: the inclusions specify, respectively,
single-valuedness and totality.

\begin{definition}
  A map in a cartesian bicategory is an arrow $f$ that is a comonoid homomorphism, i.e.\ for which the equalities in Definition~\ref{def:cartCat}.2 hold. For $\B$ a cartesian bicategory, its category of maps, $\Map(\B)$, has the same objects of $\B$ and as morphisms the maps of $\B$.
\end{definition}
\begin{lemma} \label{lemma:MapRightAdj} A morphism $\stikzfig{Map}$ is a map if and only if it has a right adjoint -- a morphism $R$ such that
  $\stikzfig{SVAdj}$ and $\stikzfig{EntireAdj}$.
\end{lemma}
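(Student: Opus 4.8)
The plan is to establish the two implications separately: ``$f$ has a right adjoint $\Rightarrow$ $f$ is a map'' is a short diagram chase, while ``$f$ is a map $\Rightarrow$ $f$ has a right adjoint'' is the substantial part and is proved by constructing the adjoint explicitly. Throughout, for an object $Z$ write $\mathsf{c}_Z\colon Z\to Z\otimes Z$ and $\mathsf{d}_Z\colon Z\to I$ for the copy and discard maps and $\mathsf{n}_Z$, $\mathsf{u}_Z$ for their right adjoints (cocopy and codiscard) from Definition~\ref{def:cartBicat}.\ref{monoid}, so that $\id_Z\le\mathsf{c}_Z\seq\mathsf{n}_Z$, $\mathsf{n}_Z\seq\mathsf{c}_Z\le\id$, $\id_Z\le\mathsf{d}_Z\seq\mathsf{u}_Z$ and $\mathsf{u}_Z\seq\mathsf{d}_Z\le\id_I$; write $\eta_Z:=\mathsf{u}_Z\seq\mathsf{c}_Z\colon I\to Z\otimes Z$ and $\varepsilon_Z:=\mathsf{n}_Z\seq\mathsf{d}_Z\colon Z\otimes Z\to I$ for the cup and cap, which by the Frobenius law obey the snake equations~\eqref{eq:snake}. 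All computations are carried out in the graphical calculus, so coherence isomorphisms are suppressed.

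For the direction assuming $\id_X\le f\seq R$ and $R\seq f\le\id_Y$: the lax comonoid-homomorphism law (Definition~\ref{def:cartBicat}.\ref{lax}) for $f$ already gives $f\seq\mathsf{d}_Y\le\mathsf{d}_X$ and $f\seq\mathsf{c}_Y\le\mathsf{c}_X\seq(f\otimes f)$, so only the reverse inequalities are needed. For discard, $\mathsf{d}_X\le(f\seq R)\seq\mathsf{d}_X=f\seq(R\seq\mathsf{d}_X)\le f\seq\mathsf{d}_Y$, the last step being the lax law for $R$. For copy, $\mathsf{c}_X\seq(f\otimes f)\le f\seq(R\seq\mathsf{c}_X)\seq(f\otimes f)\le f\seq\mathsf{c}_Y\seq\big((R\seq f)\otimes(R\seq f)\big)\le f\seq\mathsf{c}_Y$, using $\id_X\le f\seq R$, the lax law for $R$ together with functoriality of $\otimes$, and $R\seq f\le\id_Y$. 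Combined with the lax inequalities this yields the two equalities of Definition~\ref{def:cartCat}.2, so $f$ is a map.

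For the converse, the proposal is to take $R:=(\eta_X\otimes\id_Y)\seq(\id_X\otimes f\otimes\id_Y)\seq(\id_X\otimes\varepsilon_Y)\colon Y\to X$, the transpose of $f$ along the compact closed structure. Functoriality of $\otimes$ rewrites $f\seq R=(\eta_X\otimes\id_X)\seq\big(\id_X\otimes((f\otimes f)\seq\varepsilon_Y)\big)$ and $R\seq f=\big((\eta_X\seq(f\otimes f))\otimes\id_Y\big)\seq(\id_Y\otimes\varepsilon_Y)$, so by monotonicity of $\otimes$ and the snake equations the two goals $\id_X\le f\seq R$ and $R\seq f\le\id_Y$ reduce to the ``sliding'' inequalities $(f\otimes f)\seq\varepsilon_Y\ge\varepsilon_X$ and $\eta_X\seq(f\otimes f)\le\eta_Y$. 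This is where being a comonoid homomorphism is used. For the first: post-compose $\mathsf{c}_X\seq(f\otimes f)=f\seq\mathsf{c}_Y$ with $\mathsf{n}_Y$ and use $\mathsf{c}_Y\seq\mathsf{n}_Y\ge\id_Y$ to get $\mathsf{c}_X\seq\big((f\otimes f)\seq\mathsf{n}_Y\big)\ge f$; pre-compose with $\mathsf{n}_X$ and use $\mathsf{n}_X\seq\mathsf{c}_X\le\id$ to get $(f\otimes f)\seq\mathsf{n}_Y\ge\mathsf{n}_X\seq f$; post-compose with $\mathsf{d}_Y$ and use $f\seq\mathsf{d}_Y=\mathsf{d}_X$ to get $(f\otimes f)\seq\varepsilon_Y\ge\mathsf{n}_X\seq\mathsf{d}_X=\varepsilon_X$. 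For the second: $\eta_X\seq(f\otimes f)=\mathsf{u}_X\seq\big(\mathsf{c}_X\seq(f\otimes f)\big)=(\mathsf{u}_X\seq f)\seq\mathsf{c}_Y$, and $\mathsf{u}_X\seq f\le\mathsf{u}_Y$ because $(\mathsf{u}_X\seq f)\seq\mathsf{d}_Y=\mathsf{u}_X\seq\mathsf{d}_X\le\id_I$, whence $\mathsf{u}_X\seq f\le(\mathsf{u}_X\seq f)\seq\mathsf{d}_Y\seq\mathsf{u}_Y\le\mathsf{u}_Y$ using $\id_Y\le\mathsf{d}_Y\seq\mathsf{u}_Y$; composing with $\mathsf{c}_Y$ gives $\eta_X\seq(f\otimes f)\le\mathsf{u}_Y\seq\mathsf{c}_Y=\eta_Y$. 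This closes both sliding inequalities, hence both adjunction laws, and completes the construction.

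The main difficulty is not conceptual but sits in the converse: carrying out the functoriality rewrites of the transpose correctly and, above all, pinning down the \emph{orientations} of the two sliding inequalities. They are the abstract counterparts of ``$f$ is entire'' and ``$f$ is single-valued'' (cf.\ the discussion of Definition~\ref{def:cartBicat}.\ref{lax} in $\Rel$), and each is obtained by pushing a comonoid-homomorphism \emph{equation} through exactly one (co)unit of the copy/cocopy, resp.\ discard/codiscard, adjunction; choosing the wrong (co)unit, or flipping an inequality, breaks the chain. Everything else---the reverse implication, the functoriality rewrites, the snakes---is routine.
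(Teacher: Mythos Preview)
Your proof is correct. The paper does not give its own proof of this lemma; it is stated as a known fact from Carboni and Walters~\cite{carboni1987cartesian}, so there is nothing to compare against beyond checking your argument on its own merits. Both directions go through as written: in the forward direction your use of the lax law for $R$ is exactly what is needed, and in the converse your transpose $R$ is the standard candidate and your two ``sliding'' inequalities $(f\otimes f)\seq\varepsilon_Y\ge\varepsilon_X$ and $\eta_X\seq(f\otimes f)\le\eta_Y$ are derived cleanly from the comonoid-homomorphism equalities by conjugating with the appropriate (co)units. The only step worth making slightly more explicit for a reader is the passage from $\mathsf{c}_X\seq(f\otimes f)\seq\mathsf{n}_Y\ge f$ to $(f\otimes f)\seq\mathsf{n}_Y\ge\mathsf{n}_X\seq f$: you pre-compose with $\mathsf{n}_X$ and then use $\mathsf{n}_X\seq\mathsf{c}_X\le\id$ to bound the left-hand side from above by $(f\otimes f)\seq\mathsf{n}_Y$, giving the chain $(f\otimes f)\seq\mathsf{n}_Y\ge\mathsf{n}_X\seq\mathsf{c}_X\seq(f\otimes f)\seq\mathsf{n}_Y\ge\mathsf{n}_X\seq f$. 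This is what you intend, but the phrasing ``pre-compose with $\mathsf{n}_X$ and use $\mathsf{n}_X\seq\mathsf{c}_X\le\id$'' compresses two inequalities going in opposite directions into one sentence.
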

As expected, maps in $\Rel$ are precisely functions. Thus, $\Map(\Rel)$ is exactly $\Set$. For maps it makes sense to imagine a flow of information from left to right. We will therefore draw $\stikzfig{Map}$ to denote a map $f$.  Note that we use lower-case letters for maps and upper-case for arbitrary morphisms.
Since $\stikzfig{CopyX}$ and $\stikzfig{DisX}$ are maps by Lemma~\ref{lemma:MapRightAdj} and since by definition, every map respects them, then
$\Map(\B)$, which inherits the monoidal product from $\B$, has the structure of a cartesian category (Definition~\ref{def:cartCat}).

\begin{lemma}\label{lem:MapCart}
  For a cartesian bicategory $\B$, the monoidal product $\otimes$ is a product on $\Map(\B)$, and the monoidal unit $I$ is terminal.
In other words, $(\Map(\B),\, \otimes,\, I)$ is a cartesian category.
\end{lemma}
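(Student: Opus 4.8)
The plan is to reduce the statement to Definition~\ref{def:cartCat}, that is, to Fox's theorem: I will check that $(\Map(\B),\otimes,I)$ is a symmetric monoidal category in which every object carries the comonoid structure it already has in $\B$ and every morphism is a comonoid homomorphism, and then invoke the ``conversely'' direction recorded just after Definition~\ref{def:cartCat}, which turns any such category into a cartesian one with $\otimes$ as product and $I$ as terminal object.

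First I would check that $\Map(\B)$ inherits a symmetric monoidal structure from $\B$, i.e.\ that the class of maps is closed under composition and $\otimes$ and contains all the coherence isomorphisms and symmetries. By Lemma~\ref{lemma:MapRightAdj} the maps are precisely the morphisms admitting a right adjoint. Right adjoints compose in the reverse order, so composites of maps are maps; every isomorphism has its inverse as a two-sided adjoint, so the associators, unitors and symmetries of $\B$ are maps; and, since $\otimes$ is monotone in each argument, it carries the unit and counit inequalities of two adjunctions to those of their tensor, so $f \otimes g$ is a map whenever $f$ and $g$ are. Hence $(\Map(\B),\otimes,I)$ is a symmetric monoidal category.

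Next, for every object $X$ the comultiplication $\stikzfig{CopyX}$ and the counit $\stikzfig{DisX}$ are maps, since they have right adjoints by the second axiom of Definition~\ref{def:cartBicat}, so Lemma~\ref{lemma:MapRightAdj} applies. These comonoids are cocommutative and coherent with the monoidal structure by the first and last axioms of Definition~\ref{def:cartBicat}, and by the very definition of a map every morphism of $\Map(\B)$ is a comonoid homomorphism for them. Thus $(\Map(\B),\otimes,I)$ with this choice of comonoids satisfies conditions~1--3 of Definition~\ref{def:cartCat}. Applying the construction recalled after that definition, the morphisms $\id_X \otimes \stikzfig{DisY} \colon X\otimes Y \to X$ and, for $f \colon A\to X$ and $g \colon A\to Y$, the pairing $\stikzfig{CopyX}\seq(f\otimes g) \colon A\to X\otimes Y$ are all built from maps by composition and $\otimes$, hence are themselves maps and live in $\Map(\B)$; and Fox's theorem shows they exhibit $X\otimes Y$ as the categorical product of $X$ and $Y$ in $\Map(\B)$, with $\stikzfig{CopyX}\seq(f\otimes g)$ the unique mediating arrow, and $I$ as the terminal object.

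I expect the one genuinely non-routine point to be the closure of the maps under $\otimes$: it rests on the fact that tensoring preserves right adjoints, which in the poset-enriched setting is exactly monotonicity of $\otimes$ in each variable. If one prefers to avoid Lemma~\ref{lemma:MapRightAdj} at that step, the two comonoid-homomorphism equations for $f\otimes g$ can instead be derived directly from the coherence axiom (Definition~\ref{def:cartBicat}.5, i.e.\ Definition~\ref{def:cartCat}.3) together with the corresponding equations for $f$ and $g$; everything else is immediate from Lemma~\ref{lemma:MapRightAdj} or is Fox's theorem applied inside $\Map(\B)$.
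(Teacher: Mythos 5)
Your proposal is correct and follows essentially the same route as the paper, which (recalling the result from Carboni--Walters) argues exactly that $\stikzfig{CopyX}$ and $\stikzfig{DisX}$ are maps by Lemma~\ref{lemma:MapRightAdj}, that every map is by definition a comonoid homomorphism, and that $\Map(\B)$ therefore inherits the monoidal structure and satisfies Definition~\ref{def:cartCat}, i.e.\ is cartesian by the Fox-style characterisation. Your additional checks (closure of maps under composition and $\otimes$ via tensoring adjunctions, coherence isomorphisms being maps, and the projections and pairings landing in $\Map(\B)$) are just the details the paper leaves implicit, and they are all sound.
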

\begin{definition}
	A morphism of cartesian bicategories is a strict monoidal functor that preserves the ordering, the chosen monoid and the comonoid structures. Cartesian bicategories and their morphisms form a category $\CBC$.
\end{definition}

\begin{proposition}\label{prop:restriction of a morphism in CBC to maps}
   Let $F \colon \A \to \B$ be a morphism  of cartesian bicategories. Then restricting its domain to $\Map(\A)$ yields a strict cartesian functor $F\restriction_{\Map(\A)} \colon \Map(\A) \to \Map(\B)$.
\end{proposition}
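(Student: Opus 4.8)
The plan is to verify, one by one, the three requirements in the definition of a morphism of cartesian categories: that $F\restriction_{\Map(\A)}$ is a well-defined functor $\Map(\A)\to\Map(\B)$, that it is strict monoidal, and that it preserves the chosen comonoid structures. Each of these will be reduced to a property that $F$ already enjoys as a morphism of cartesian bicategories, together with \autoref{lem:MapCart}, which identifies the cartesian structure on $\Map(-)$ with the monoidal structure inherited from the ambient bicategory.

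The one step with genuine content is showing that $F$ sends maps to maps. First I would take a map $f\colon X\to Y$ in $\A$; by definition $f$ is a comonoid homomorphism for the chosen comonoids on $X$ and $Y$, i.e.\ it satisfies the two equations of \autoref{def:cartCat}.2. Since $F$ is strict monoidal and preserves the chosen comonoid structures, it carries the copy and discard morphisms of $X$ (resp.\ $Y$) to those of $F(X)$ (resp.\ $F(Y)$); applying $F$ to the two defining equations and using functoriality and strict monoidality then yields exactly the equations witnessing that $F(f)$ is a comonoid homomorphism in $\B$, hence a map. Alternatively, one may invoke \autoref{lemma:MapRightAdj}: if $R$ is a right adjoint of $f$, then $F(R)$ is a right adjoint of $F(f)$, because $F$ preserves the order, composition and the monoidal structure, so the unit/counit inequalities are transported along $F$. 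Once this is established, $F\restriction_{\Map(\A)}$ is automatically a functor: it is the restriction of the functor $F$ to the subcategory $\Map(\A)$ with image inside the subcategory $\Map(\B)$, and identities and composites of maps are maps while $F$ preserves identities and composition.

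For strict monoidality and preservation of comonoids I would appeal to \autoref{lem:MapCart}: the product on $\Map(\A)$ is the monoidal product $\otimes$ of $\A$ and the terminal object is $I$, the chosen comonoid on an object being exactly the copy/discard pair already living in $\A$ (these are maps by \autoref{lemma:MapRightAdj}, and $\otimes$ of two maps is again a map), and likewise for $\B$. Since $F$ is strictly monoidal and preserves the chosen comonoids of $\A$ and $\B$, its restriction satisfies the very same equalities, now read as statements about the cartesian structure of $\Map(\A)$ and $\Map(\B)$; this is precisely what it means for $F\restriction_{\Map(\A)}$ to be a strict cartesian functor. The main (and essentially only) obstacle is the preservation-of-maps step above; the remainder is bookkeeping, so I expect the argument to be short.
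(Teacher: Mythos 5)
Your proposal is correct and is exactly the routine argument the paper leaves implicit (it states this proposition without proof): $F$ preserves the chosen comonoids, so it sends comonoid homomorphisms to comonoid homomorphisms, i.e.\ maps to maps, and then strict monoidality together with Lemma~\ref{lem:MapCart} identifies the restricted functor as a strict cartesian functor in the sense of the paper's definition of morphism of cartesian categories. The alternative route via right adjoints (Lemma~\ref{lemma:MapRightAdj}) also works, since $F$ preserves order, composition and identities, so no gap either way.
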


The remaining sections focus on the relationship between cartesian bicategories and elementary existential doctrines. First, a little taste of the similarity between them.

\begin{figure}
\begin{tabular}{c}
\scalebox{0.96}{
\[
\scalebox{0.8}{
\begin{prooftree}
{}
\justifies
\Zeronet \colon 0 \to 0
\using {id_0}
\end{prooftree}
\quad\raisebox{3pt}{
\begin{prooftree}
{}
\justifies
\idone \colon 1 \to 1
\using {id_1}
\end{prooftree}}
\quad
\begin{prooftree}
{}
\justifies
\raisebox{1pt}{\scalebox{0.5}{\dsymNet}} \colon 2 \to 2
\using {\sigma_{1,1}}
\end{prooftree}
\quad
\raisebox{2pt}{
\begin{prooftree}
{}
\justifies
 \raisebox{1pt}{\scalebox{0.5}{\Bcounit}} \colon 1 \to 0
 \using {!_1}
\end{prooftree}}
\quad
\begin{prooftree}
{}
\justifies
\raisebox{1pt}{\scalebox{0.5}{\Bcomult}} \colon 1 \to 2
\using {\Delta_1}
\end{prooftree}
\quad
\begin{prooftree}
{}
\justifies
\raisebox{1pt}{\scalebox{0.5}{ \Bunit}} \colon 0 \to 1
 \using{\invexcl_1}
\end{prooftree}
\quad
{
\begin{prooftree}
{}
\justifies
\raisebox{1pt}{\scalebox{0.5}{\Bmult}} \colon 2 \to 1
\using{\nabla_1}
\end{prooftree}}
}
\]
}
\\
\\
\scalebox{0.96}{
\[
\scalebox{0.8}{
\begin{prooftree}
{f\in \Sigma \quad \ar(f)=n}
\justifies
\relation{f}\colon n \to 1
\using {\Sigma}
\end{prooftree}
\quad
\begin{prooftree}
{P\in \Pred \quad \ar(f)=n}
\justifies
\predicate{P}\colon n \to 0
\using{\Pred}
\end{prooftree}
\quad
\raisebox{2pt}{
\begin{prooftree}
{
\stikzfig{c} \colon n \to m \quad
\stikzfig{d}
\colon m \to o}
\justifies
\stikzfig{cd} \colon n \to o
\using {\, ; }
\end{prooftree}}
\quad
\begin{prooftree}
{
\stikzfig{c}
\colon n \to m \quad
\stikzfig{dop}
\colon o \to p}
\justifies
{\raisebox{-1pt}{$\stikzfig{cPARd}\colon n+o \to m+p$}}
\using {\, \otimes}
\end{prooftree}
}
\]
}
\\
\hline
\\
\scalebox{0.96}{
\[
\scalebox{1}{\ensuremath{
    \begin{gathered}
    \medtikzfig{Assocsimple} \qquad  \medtikzfig{Commsimple}  \qquad  \medtikzfig{Unitsimple} \\[1em]
    \medtikzfig{UnitCopySimple} \qquad \medtikzfig{CounitCopySimple} \qquad
    \medtikzfig{UnitDisSimple} \qquad \medtikzfig{CounitDisSimple} \\[1em]
    \medtikzfig{FrobNoX} \\[1em]
    \medtikzfig{natcopyAxiom} \qquad \medtikzfig{natdisAxiom}
     \qquad \qquad \qquad \qquad \medtikzfig{LaxCopySimple} \qquad \medtikzfig{LaxDisSimple}
    \end{gathered}
    }
    }
    \]
    }
\end{tabular}

\caption{Sort inference rules (top) and axioms (bottom) for $\fCB{\Sigma}{\Pred}$.}\label{fig:stringdiagrams}
\end{figure}

\begin{example}\label{ex:CBSP}
In Example \ref{example:Lindenbaum-Tarski eed}, we outlined the Lindenbaum-Tarski doctrine for the regular fragment of first order logic. We now introduce a cartesian bicategory, denoted by $\fCB{\Sigma}{\Pred}$, that provides a string diagrammatic calculus for this fragment. Like Lawvere theories, $\fCB{\Sigma}{\Pred}$ has $(\Nat,+,0)$ as monoid of objects. Arrows are (equivalence classes of) string diagrams~\cite{selinger2010survey} generated according to the rules in Figure~\ref{fig:stringdiagrams} (top).
For all $n,m\in \Nat$, identities $\idoneL{n}\colon n \to n$, symmetries $\scalebox{0.75}{\dsymNetL{n}{m}}\colon n+m \to m+n$, duplicators $\BcomultL{n} \colon n \to n+n$, dischargers $\BcounitL{n}\colon n \to 0$ and their adjoints
can be constructed from the basic diagrams in the first row.
Observe that function symbols $f$ with arity $n$ are depicted $\relation{f}$ with coarity $1$, while predicate symbols
$P\in \Pred$ with arity $n$ as $\predicate{P}$ with coarity $0$.

Figure~\ref{fig:stringdiagrams} (bottom) illustrates the axioms for the calculus: in the last row, function symbols $\relation{f}\colon n \to 1$ are forced to be comonoid homomorphism,  while predicate symbols $\predicate{P}\colon n \to 0$ are just lax; the first three rules impose properties 1, 2 and 3 of Definition~\ref{def:cartBicat} on the generating (co)monoid (the laws for arbitrary $n$ follow from these). Let $\leq_{\Sigma, \Pred}$ be the precongruence with respect to  $;$ and $\otimes$ generated by the axioms and $=_{\Sigma,\Pred}$ the corresponding equivalence, i.e., $\leq_{\Sigma, \Pred} \cap \geq_{\Sigma, \Pred}$. Now $\fCB{\Sigma}{\Pred}[n,m]$ is exactly the set of $=_{\Sigma,\Pred}$-equivalence classes of diagrams $d\colon n \to m$ ordered by $\leq_{\Sigma, \Pred}$. Simple inductions suffice to check that $\fCB{\Sigma}{\Pred}$ is indeed a cartesian bicategory and that, moreover, $\Map(\fCB{\Sigma}{\Pred})$ is isomorphic to $\Law{\Sigma}$.

In Figure~\ref{fig:trans} we introduce a function $\Theta{-}$ that translates sorted formulas $\phi\sort{n}{0}$ into string diagrams of type $n\to 0$. From a general result in~\cite{SeeberThesis}, it follows that $\phi$ is a logical consequence of $\psi$ if and only if $\Theta \psi \leq_{\Sigma, \Pred} \Theta \phi$. For an example take $\psi \equiv \exists x_2. \, P(x_2,x_1) \wedge f(x_1)=x_2 \sort{1}{0}$ and $\phi \equiv \exists x_2. \, P(x_2,x_1)\sort{1}{0}$. Then $\Theta{\psi}$ is the leftmost string diagram below, while $\Theta{\phi}$  is the rightmost one. The following derivation proves that $ \exists x_2. \, P(x_2,x_1)$ is a logical consequence of $\exists x_2. \, P(x_2,x_1) \wedge f(x_1)=x_2$.
\[
\stikzfig{derivation}
\]
\end{example}

\begin{figure}
\[
\scalebox{0.93}{\ensuremath{
\begin{array}{lclc}
      \Theta { x_i \sort{n}{1} } = \stikzfig{Xi} & (V) &
            \Theta{ f\langle t_1,\dots, t_m\rangle \sort{n}{1}} = \stikzfig{ft1tm}
      & (\Sigma)
\\[1em]
\Theta{\langle \rangle \sort{n}{0}} = \stikzfig{DisN}
      & (\langle \rangle)
&
\Theta{\langle t_1, \dots, t_m\rangle \sort{n}{m}} = \stikzfig{t1Dotstm} & (\langle \dots \rangle)
\\[1.5em]
      \Theta {\top \sort{n}{0}} = \stikzfig{DisN} & (\top)
&
      \Theta { P\langle t_1,\dots, t_{m}\rangle \sort{n}{0}} = \stikzfig{Pt1tm} & (\Pred)
\\[1em]
      \Theta {\exists x_{n+1}\ldotp\phi \sort{n+1}{0}} = {\stikzfig{ExistsPhi}} & (\exists)
&
      \Theta { t_1 = t_2 \sort{n}{0} } = \stikzfig{t1Eqt2} & (=)
\\[1em]
\Theta { \phi \wedge \psi \sort{n}{0}} = {\stikzfig{PhiAndPsi}} & (\wedge)& &
\end{array}
}
}
\]
\caption{Translation $\Theta{-}$ from sorted formulas (Figure \ref{fig:sort}) to string diagrams (Figure \ref{fig:stringdiagrams}).\label{fig:trans}}
\end{figure}

\section{From Cartesian Bicategories to Doctrines}\label{sec: from CBC to EED}
In this section we illustrate how a cartesian bicategory $\B$ gives rise to an elementary existential doctrine.
The starting observation is that, using the conclusion of Lemma~\ref{lemma:homsets in CBC are InfSLattices}, the functor $\Hom_\B(-, I) \from \op{\B} \to \Set$ sends objects $X$ to Inf-semilattices. However, for an arbitrary morphism $R\colon X \to Y$, $\Hom_\B(R, I) \from \Hom_\B(Y,I) \to \Hom_\B(X,I)$ may \emph{not} be an inf-preserving function. A sufficient condition is to require $R$ to be a map; indeed, in that case it is immediate to see that  infima, as defined in Lemma~\ref{lemma:homsets in CBC are InfSLattices}, are preserved.
Thus, by restricting the domain of the $\Hom$-functor to $\op{\Map(\B)}$, one obtains a contravariant functor from the cartesian category $\Map(\B)$ (Lemma~\ref{lem:MapCart}) to $\InfSL$:
\begin{equation}\label{eq:defRB}
\RA(\B) = \Hom_\B(-, I) \from \op{\Map(\B)} \to \InfSL.
\end{equation}

\begin{theorem}\label{thm:doctrine associated to cbc}
The functor $\RA(\B)$ in \eqref{eq:defRB} is an elementary existential doctrine.
\end{theorem}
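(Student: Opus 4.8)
The plan is to verify, one by one, the three conditions in Definition~\ref{def:eeh} for the functor $\RA(\B) = \Hom_\B(-,I) \colon \op{\Map(\B)} \to \InfSL$. The groundwork is already done: Lemma~\ref{lem:MapCart} tells us $\Map(\B)$ is a cartesian category, Lemma~\ref{lemma:homsets in CBC are InfSLattices} identifies the local posets $\Hom_\B(X,I)$ as inf-semilattices with meets computed diagrammatically via $\stikzfig{Copy}$, and the remark immediately before \eqref{eq:defRB} explains why precomposing with a map preserves these infima, so $\RA(\B)$ is a well-defined functor into $\InfSL$. What remains is the doctrinal structure.

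First I would treat the \emph{existential} part. For a projection $\pi_i \colon A_1 \times A_2 \to A_i$ in $\Map(\B)$ --- which, by the concrete description after Lemma~\ref{lem:MapCart}, is built from the discharger $\stikzfig{Dis}$ tensored with an identity --- the reindexing map $\RA(\B)_{\pi_i} \colon \Hom_\B(A_i,I) \to \Hom_\B(A_1 \times A_2, I)$ is precomposition with $\pi_i$. Its left adjoint should be postcomposition-in-disguise, i.e. $\exists_{\pi_i}(\beta)$ obtained from $\beta \colon A_1 \times A_2 \to I$ by composing with the \emph{right adjoint} of $\pi_i$ (which exists by Lemma~\ref{lemma:MapRightAdj}, precisely the counit/unit diagrams $\stikzfig{Codis}\otimes\id$). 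Concretely $\exists_{\pi_i}(\beta) = \beta \seq (\text{right adjoint of } \pi_i)$ viewed appropriately; the adjunction $\exists_{\pi_i} \dashv \RA(\B)_{\pi_i}$ is then exactly the 2-categorical adjunction $\pi_i \dashv \pi_i^\dagger$ transported through the hom-poset, using monotonicity of composition. Beck--Chevalley reduces, by Remark~\ref{rem:Beck-Chevalley pullback diagram simple}, to the single canonical pullback square $\id_X \times f$ against $\pi_2$; here I expect a short diagrammatic calculation showing that bending the wire for $\id_X \times f$ past the discharger agrees with bending $f$ alone --- essentially naturality of the discharger against the map $f$ and the (co)unit equations. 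Frobenius reciprocity, $\exists_\pi(\RA(\B)_\pi(\alpha) \wedge \beta) = \alpha \wedge \exists_\pi(\beta)$, is again a string-diagram manipulation: expand both meets via $\stikzfig{Copy}$ as in Lemma~\ref{lemma:homsets in CBC are InfSLattices}, slide $\alpha$ through the cocommutative comonoid and past $\pi$ using that $\stikzfig{CopyX}$ and $\stikzfig{DisX}$ are maps, and cancel with the snake equation~\eqref{eq:snake}.

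Second, the \emph{elementary} part. The obvious candidate for $\delta_A \in \RA(\B)(A \times A) = \Hom_\B(A \times A, I)$ is the cap of the compact-closed structure, $\stikzfig{Cap} \colon A \times A \to I$, coming from the Frobenius law (Property~\ref{frob}). One must check that for $e = \id_X \times \Delta_A$ the functional $\RA(\B)_e$ has the left adjoint given by formula~\eqref{eq:elementary2}. By Remark~\ref{rmk:BC} it suffices to handle $X = I$, i.e. to show $\exists_{\Delta_A}(\alpha) = \RA(\B)_{\pi_1}(\alpha) \wedge \delta_A$ is left adjoint to $\RA(\B)_{\Delta_A}$. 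Unwound, $\RA(\B)_{\Delta_A}(\gamma)$ is $\Delta_A \seq \gamma$ for $\gamma \colon A \times A \to I$, while the claimed left adjoint is, diagrammatically, $\alpha$ copied and paired with a cap. Proving the adjunction amounts to: $\alpha \leq \RA(\B)_{\Delta_A}(\RA(\B)_{\pi_1}(\alpha)\wedge\delta_A)$ (unit), which follows because $\Delta_A$ composed with a cap is the discharger --- a consequence of the (co)unit laws of Property~\ref{monoid} --- so the right-hand side collapses to $\alpha$ itself; and conversely the counit inequality $(\RA(\B)_{\pi_1}\RA(\B)_{\Delta_A}(\gamma))\wedge\delta_A \leq \gamma$, which is where the Frobenius law and laxness (Property~\ref{lax}) of $\gamma$ do the real work: copying, feeding one copy to $\gamma$ via $\Delta_A$, and meeting with the cap forces the two $A$-wires to be equal, bounded above by $\gamma$. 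I would draw these as a handful of string-diagram rewrites.

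The main obstacle I anticipate is not any single step but getting the bookkeeping of the left-adjoint \emph{formulas} to match Definition~\ref{def:eeh} on the nose: the definition insists $\exists_e$ be \emph{defined} by the specific expression~\eqref{eq:elementary2} built from $\delta_A$ and projections, not merely that some left adjoint exists, so I must verify that the diagrammatic left adjoint I write down (cap-based) literally equals that expression after expanding $\RA(\B)$ and the meet of Lemma~\ref{lemma:homsets in CBC are InfSLattices}. Likewise, for the existential part, Beck--Chevalley must be checked for the left adjoints \emph{as I defined them} via right-adjoint composition, so the crux is a careful diagrammatic identification --- routine in principle, but the kind of calculation where the Frobenius and snake equations must be applied in just the right order. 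Everything else (functoriality, inf-preservation, the adjunction triangle inequalities) is immediate from monotonicity of $\seq$ and $\otimes$ together with the comonoid/right-adjoint laws already established.
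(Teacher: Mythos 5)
Your overall strategy is the paper's: the same choice of $\delta_A$ (the ``cap'' $\Hom_\B(A\otimes A,I)$ built from the monoid and the discard), the same description of the adjoints ($\exists_\pi$ and $\exists_e$ are precomposition with the right adjoints of the projection and of $\id_X\times\Delta_A$ supplied by Lemma~\ref{lemma:MapRightAdj} and Property~\ref{monoid}), and the same diagrammatic verification of Beck--Chevalley (via Remark~\ref{rem:Beck-Chevalley pullback diagram simple}) and Frobenius reciprocity. However, there is one genuine flaw in the elementary part: your reduction to the case $X=I$ ``by Remark~\ref{rmk:BC}'' is backwards. That remark shows that the $X=I$ instance~\eqref{eq:elementary1} \emph{follows from} the general condition~\eqref{eq:elementary2}, which is why it is redundant in Definition~\ref{def:eeh}; it does not license deducing the general condition from the $X=I$ case, and no such deduction is available from the axioms you cite (the assumed Beck--Chevalley squares are pullbacks of projections along maps, not of diagonals along projections). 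Since the definition demands that for \emph{every} $X$ the left adjoint of $\RA(\B)_{\id_X\times\Delta_A}$ be given exactly by~\eqref{eq:elementary2}, your proposal as written only verifies a special case. The repair is easy and is exactly what the paper does: expand~\eqref{eq:elementary2} diagrammatically, use the Frobenius law to recognise it as precomposition with $\id_X$ tensored with the right adjoint of the comultiplication, and conclude from the adjunction between copying and its right adjoint, whiskered by $\id_X$ --- i.e.\ run your $\Delta_A$ argument with the extra $X$ wire carried along rather than invoking Remark~\ref{rmk:BC}.

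Two smaller imprecisions, neither fatal: the formula ``$\exists_{\pi_i}(\beta)=\beta\seq(\text{right adjoint of }\pi_i)$'' has the composite in the wrong order in the paper's convention (it must be the right adjoint $A_i\to A_1\times A_2$ \emph{followed by} $\beta$, i.e.\ precomposition); and your Beck--Chevalley and Frobenius-reciprocity sketches invoke heavier tools than needed --- Beck--Chevalley is just the interchange law, since the codischarger and $f$ act on disjoint wires (the paper's ``holds trivially''), while Frobenius reciprocity uses naturality of the symmetry and the comonoid counit law, not the snake equation.
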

\begin{proof}
	We need to show that $\RA(\B)$ is elementary and existential.
\begin{enumerate}\item First we prove that $\RA(\B)$ is elementary. We fix  $\delta_A = \stikzfig{DeltaA}\in \Hom(A\otimes A, I)$.
The string diagram for $e = id_X \times \Delta_A
			 \from X \times A \to X \times A \times A$ is $\raisebox{3pt}{\scalebox{0.50}{\stikzfig{IdCopy}}}$.
			The function $\RA(\B)_e$ maps $\stikzfig{TripleR}$ to $\stikzfig{ETripleR}$.
The function $\exists_e \from \RA(\B)(A \times X) \to \RA(\B)(A \times X \times X)$ defined in \eqref{eq:elementary2} maps every $R\in \Hom(X \otimes A, I)$ to
			\[
			\exists_e(R) = \RA(\B)_{\langle \pi_1, \pi_2 \rangle}(R) \wedge \RA(\B)_{\langle \pi_2, \pi_3 \rangle}(\delta_A) = \stikzfig{ExistsE}
			\]
Now $\exists_{e}$ is left-adjoint to $\RA(\B)_{e}$ because $\stikzfig{Copy}$ is left-adjoint to $\stikzfig{Cocopy}$. \item To show that $\RA(\B)$ is existential, let $\pi \from X \times A \to A$ be a projection. Then $\exists_{\pi}$ maps
		$\stikzfig{DoubleR2}$ to $\stikzfig{ExistsR}$. This is left-adjoint to $\RA(\B)_{\pi}$ since $\stikzfig{Codis}$ is left-adjoint to $\stikzfig{Dis}$.
		\begin{itemize}\item For the Beck-Chevalley condition, consider the diagram in Remark~\ref{rem:Beck-Chevalley pullback diagram simple}.
We need to show that
			$
			\exists_{\pi_2}(\RA(\B)_{\id_X \times f}(\beta)) = \RA(\B)_{f}(\exists_{\pi}(\beta))
			$.
Translated to diagrams,
			\[\stikzfig{Beck}\]
			which holds trivially.
			\item For Frobenius reciprocity, take
a projection $\pi \from X \times A \to A$, $\alpha \in \RA(\B)(A)$ and $\beta \in \RA(\B)(X \times A)$.
			We need that
			$\exists_{\pi}(\RA(\B)_{\pi}(\alpha) \wedge \beta) = \alpha \wedge \exists_{\pi}(\beta)
			$, which translates to
			\[
			\stikzfig{FrobRec}
			\]
			This holds by naturality of the symmetry and since $\stikzfig{Dis}$ is the counit of $\stikzfig{Copy}$. \qedhere
\end{itemize}
	\end{enumerate}
\end{proof}

By applying $\RA$ to
the cartesian bicategory $\Rel$, one obtains $\pow \from \op{\Set} \to \InfSL$, in the sense that $ \pow \cong \RA(\Rel)$ in $\EED$. The isomorphism is the pair $(\Gamma,b)$, where $\Gamma \from \Set  \to \Map(\Rel)$ is the strict cartesian functor computing the graph of a function and $b \colon \pow \to \RA(\Rel) \circ \op\Gamma$ is the natural transformation defined for all sets $A$ and $S\in \pow(A)$ as $b_A(S)=\{ (s,\bullet) \mid s \in S \}$.

\begin{example}\label{ex:R(CB)=LT}

Consider $\fCB{\Sigma}{P}$ of Example \ref{ex:CBSP}: then $\RA(\fCB{\Sigma}{P})$ is isomorphic to the doctrine $LT\colon \op{\Law{\Sigma}} \to \InfSL$ of Example \ref{example:Lindenbaum-Tarski eed}. Here is why: the first two rows in Figure \ref{fig:trans} define a cartesian isomorphism $F_{\Theta{\cdot}} \colon \Law{\Sigma} \to \Map(\fCB{\Sigma}{\Pred})$. For all $n\in \Nat$, we define $b_n \colon LT(n) \to \Hom(F_{\Theta{\cdot}}(n),0)=\Hom(n,0)$ as the function mapping $\phi \sort{n}{0}$ to $\Theta{\phi} \colon n \to 0$. This give rise to a natural isomorphism $b\colon LT \to \RA ( \fCB{\Sigma}{\Pred})\circ \op{F_{\Theta{\cdot}}}$.
To show that $b$ preserves equalities and existential quantifiers (Definition \ref{def:category EED}) it suffices to note that  $\Theta{x_1=x_2} = \stikzfig{Cap}$ and $\Theta {\exists x_{n+1}\ldotp\phi} = \raisebox{0pt}{\stikzfig{ExistsPhi}}$. The pair $(F_{\Theta{-}},b)$ witnesses the isomorphism of $LT$ and $\RA ( \fCB{\Sigma}{\Pred})$.
\end{example}

\begin{proposition}
Assigning doctrines to cartesian bicategories as in~\eqref{eq:defRB} extends to a functor $\RA \colon \CBC \to \EED$. For a morphism of cartesian bicategories $F \colon \A \to \B$ in $\CBC$,
$
\RA(F)=({F}\restriction_{\Map(\A)}, b^F)
$ 
and $b^F \colon \RA(\A) \to \RA(\B)\circ  \op{(F\restriction_{\Map(\A)})}$ is defined as	
\begin{equation}\label{eq:functor R transformation component}
	b^F_A(U) = F(U) \in \Hom_{\B} (F(A),I) \quad \text{for all $A \in \A$ and $U \in \Hom_\A (A,I)$}.
	\end{equation}
\end{proposition}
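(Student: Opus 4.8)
The plan is to verify that the proposed assignment $\RA$ is a well-defined functor $\CBC \to \EED$, which breaks into three tasks: (i) showing that for each morphism $F \colon \A \to \B$ of cartesian bicategories the pair $\RA(F) = (F\restriction_{\Map(\A)}, b^F)$ is a morphism of elementary existential doctrines in the sense of Definition~\ref{def:category EED}; (ii) checking that $\RA$ preserves identities; and (iii) checking that $\RA$ preserves composition. Task (i) is the substantive one and is itself threefold: (a) $F\restriction_{\Map(\A)}$ is a strict cartesian functor; (b) $b^F$ is a natural transformation $\RA(\A) \Rightarrow \RA(\B)\circ\op{(F\restriction_{\Map(\A)})}$; and (c) $b^F$ preserves the chosen elements $\delta_A$ and the existential quantifiers $\exists_{\pi}$.

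For (i)(a), the fact that $F\restriction_{\Map(\A)}$ is a strict cartesian functor is exactly Proposition~\ref{prop:restriction of a morphism in CBC to maps}, so I would simply cite it. For (i)(b), I would first check that each $b^F_A$ is a morphism of inf-semilattices: since $F$ is a morphism of cartesian bicategories it preserves $\otimes$, identities, and the (co)monoid structures, hence by the explicit formulas in Lemma~\ref{lemma:homsets in CBC are InfSLattices} it preserves the top element $\stikzfig{Top}$ and the meet $\stikzfig{Meet}$ of relations of type $A \to I$; thus $b^F_A$ is inf-preserving. Naturality is then a short diagram chase: for a map $g \colon A \to B$ in $\Map(\A)$ and $U \in \Hom_\A(B,I)$, one needs $b^F_A(\RA(\A)_g(U)) = \RA(\B)_{F(g)}(b^F_B(U))$, i.e.\ $F(g \seq U) = F(g) \seq F(U)$, which holds because $F$ is a functor.

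For (i)(c), preservation of equality amounts to $b^F_{A\otimes A}(\delta^{\RA(\A)}_A) = \delta^{\RA(\B)}_{F(A)}$; since $\delta$ was fixed in the proof of Theorem~\ref{thm:doctrine associated to cbc} to be $\stikzfig{DeltaA}$ (built from the comonoid/monoid structure), and $F$ preserves that structure strictly, both sides are equal. Preservation of existential quantifiers requires that the square~\eqref{eq:commutative square b preserves existential quantifier} commute; unfolding the definition of $\exists_{\pi}$ from the proof of Theorem~\ref{thm:doctrine associated to cbc} (post-composition with $\stikzfig{Codis}$), this again reduces to $F$ commuting with the relevant string diagram, which holds since $F$ preserves $\stikzfig{Codis}$ and composition. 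Finally, for (ii) and (iii): $\RA(\id_\B)$ has first component $\id_{\Map(\B)}$ and second component the identity natural transformation since $\id_\B(U) = U$; and for $F \colon \A \to \B$, $G \colon \B \to \C$, one checks $\RA(G)\circ\RA(F) = \RA(GF)$ by comparing first components (strict functors restrict functorially to maps) and second components ($b^{GF}_A(U) = GF(U) = G(F(U)) = G(b^F_A(U)) = b^G_{F(A)}(b^F_A(U)) = (b^G F \circ b^F)_A(U)$, matching the composition formula in Definition~\ref{def:category EED}).

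I expect the main obstacle to be purely bookkeeping rather than conceptual: the only point requiring genuine care is confirming that $b^F_A$ is \emph{inf}-preserving (not merely monotone and top-preserving), since this is what places $\RA(F)$ in $\EED$ at all; but this follows cleanly from the fact that $F$, being a morphism of cartesian bicategories, preserves \emph{both} the comonoid and the monoid structure, which are exactly the ingredients of the meet formula in Lemma~\ref{lemma:homsets in CBC are InfSLattices}. Everything else is a routine consequence of $F$ being a strict monoidal functor that preserves the chosen structure and of the explicit formulas for $\delta$ and $\exists_\pi$ established in the proof of Theorem~\ref{thm:doctrine associated to cbc}.
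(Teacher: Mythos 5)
Your proposal is correct and follows essentially the same route as the paper's own (much terser) proof: cite Proposition~\ref{prop:restriction of a morphism in CBC to maps} for the first component, derive naturality of $b^F$ from functoriality of $F$, and get preservation of $\delta$ and $\exists_\pi$ (as well as the inf-semilattice structure) from the fact that $F$ preserves exactly the cartesian-bicategory structure used to define $\RA(\A)$, with functoriality of $\RA$ a routine check. Your explicit remark that $b^F_A$ is inf- and top-preserving via the formulas of Lemma~\ref{lemma:homsets in CBC are InfSLattices} is a detail the paper leaves implicit, but it is the same argument.
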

\begin{proof}
	Let $F \colon \A \to \B$ be a morphism in $\CBC$. By Proposition~\ref{prop:restriction of a morphism in CBC to maps}, we have that $F\restriction_{\Map(\A)}$ is a strict cartesian functor.
Regarding $b^F$, its naturality is ensured by (in fact, equivalent to) the functoriality of $F$, while the preservation of equalities and existential quantifiers of $\RA(\A)$ follows from the fact that $F$ preserves the structure of cartesian bicategory of $\A$, which is used to define the structure of elementary existential doctrine of $\RA(\A)$. Therefore $\RA(F)$ is indeed a morphism in $\EED$.
Preservation of compositions and identities is straightforward.
\end{proof}
 \section{From Doctrines to Cartesian Bicategories}\label{sec: from EED to CBC}

Given an elementary existential doctrine $P \colon \op\C \to \InfSL$, we can form a category $\Bicat_P$ whose objects are the same as $\C$ and whose morphisms are given by the elements of $P(X \times Y)$, intuitively seen as \emph{relations}, as observed in~\cite{maietti2015unifying}. Inspired by the calculus of ordinary relations on sets, using the structure of $P$ we can define a notion of composition and tensor product of these inner relations, and endow each object with a comonoid structure that makes $\Bicat_P$ a cartesian bicategory. Here we recall the essential definitions, while the proof of the fact that $\Bicat_P$ actually satisfies Definition~\ref{def:cartBicat} is rather laborious and therefore omitted here: the interested reader can find it in all details in Appendix~\ref{appendix: EED to CBC}.

Since $\Hom_{\Bicat_{P}}(X,Y) = P(X \times Y)$, we get that $\Bicat_P$ is poset-enriched. Given that $P$ is elementary, the obvious candidate for identity on $X$ is
$\delta_X \in \Hom_{\Bicat_P}(X,X)$. Composition works as follows:
let $f \in \Hom_{\Bicat_P}(X,Y) = P(X \times Y)$ and $g \in \Hom_{\Bicat_P}(Y,Z) = P(Y \times Z)$.
\[
\begin{tikzcd}
& X \times Y \times Z \arrow[swap]{dl}{\pi_Z} \arrow{d}{\pi_Y} \arrow{dr}{\pi_X} & \\
X \times Y & X \times Z & Y \times Z
\end{tikzcd}
\]
Consider the projections above. Then the composite $f \seq g$ is defined as
\[
f \seq g = \exists_{\pi_Y}(P_{\pi_Z}(f) \wedge P_{\pi_X}(g)).
\]
The monoidal structure of $\Bicat_P$ is very straightforward: on objects, the monoidal product $\otimes$ is given by the cartesian product in $\Cat$.
On morphisms, for $f \in \Hom_{\Bicat_P}(A,B) = P(A \times B)$ and $g \in \Hom_{\Bicat_P}(C,D) = P(C \times D)$,
consider
the projections
\[
\begin{tikzcd}
& A \times C \times B \times D \arrow[swap]{dl}{\langle \pi_1, \pi_3 \rangle} \arrow{dr}{\langle \pi_2, \pi_4 \rangle} & \\
A \times B & & C \times D
\end{tikzcd}
\]
Then
let
\[
f \otimes g = P_{\langle \pi_1, \pi_3 \rangle}(f) \wedge P_{\langle \pi_2, \pi_4 \rangle}(g) .
\]This makes $\Bicat_P$ a monoidal, poset-enriched category. The rest of the structure of cartesian bicategory is inherited from $\C$ by means of a crucial tool: the \emph{graph functor} of $P$, hereafter denoted by  $\Gamma_P \colon \C \to \Bicat_P$. It is the identity on objects and sends arrows $f \colon X \to Y$ in $\C$ to
\[
\Gamma_P(f) = P_{f \times \id_Y}(\delta_Y) \in P(X \times Y) = \Hom_{\Bicat_P}(X,Y).
\]

\begin{proposition}\label{prop:GammaP(f) has right adjoint}
Let $P \colon \op\C \to \InfSL$ be an elementary, existential doctrine. Then:
\begin{itemize}
    \item $\Gamma_P$ is strict monoidal,
    \item $\Gamma_P(f)$ has a right adjoint, namely $P_{\id_Y \times f} (\delta_Y)$, for every $f \colon X \to Y$ in $\C$. Therefore, by Lemma~\ref{lemma:MapRightAdj}, it corestricts to a strict cartesian functor $\Gamma_P \colon \C \to \Map(\Bicat_P)$.
\end{itemize}
\end{proposition}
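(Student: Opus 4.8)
The plan is to verify each assertion by unfolding, in $\Bicat_P$, the definitions of identity, composition, tensor and graph, and reducing everything to the defining properties of an elementary existential doctrine --- functoriality of $P$, the elementary formula~\eqref{eq:elementary2} (equivalently~\eqref{eq:elementary1}), Beck--Chevalley and Frobenius reciprocity. Along the way I would isolate a handful of auxiliary facts about the fibrewise equality $\delta$, all routine consequences of the axioms: \emph{reflexivity} $P_{\Delta_A}(\delta_A)=\top$; \emph{transitivity} $P_{\langle\pi_1,\pi_2\rangle}(\delta_A)\wedge P_{\langle\pi_2,\pi_3\rangle}(\delta_A)\le P_{\langle\pi_1,\pi_3\rangle}(\delta_A)$ in $P(A\times A\times A)$; \emph{coherence with products} $\delta_{A\times B}=P_{\langle\pi_1,\pi_3\rangle}(\delta_A)\wedge P_{\langle\pi_2,\pi_4\rangle}(\delta_B)$; and a \emph{substitution principle}: along a projection $\pi$, conjoining a fibrewise equality $P_h(\delta_B)$ that fixes the projected coordinate as a function of the remaining ones and then applying $\exists_\pi$ amounts to substitution along the corresponding section $s$ of $\pi$, that is $\exists_\pi\bigl(P_h(\delta_B)\wedge(-)\bigr)=P_s(-)$. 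The running intuition is $\Rel$: under $\RA(\Rel)\cong\pow$ every identity below becomes an elementary fact about graphs, composites and products of functions.

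First I would check strict monoidality. On objects $\Gamma_P$ is the identity and the monoidal product of $\Bicat_P$ is by definition the product of $\C$, so there is nothing to do. Identities are immediate: $\Gamma_P(\id_X)=P_{\id_{X\times X}}(\delta_X)=\delta_X$, the identity of $X$ in $\Bicat_P$. For composition, given $f\colon X\to Y$ and $g\colon Y\to Z$, unfolding the composition formula and using functoriality of $P$ rewrites $\Gamma_P(f)\seq\Gamma_P(g)$ as $\exists_{\pi_Y}\bigl(P_h(\delta_Y)\wedge P_k(\delta_Z)\bigr)$ over $X\times Y\times Z$, where $h(x,y,z)=(f(x),y)$ and $k(x,y,z)=(g(y),z)$; the factor $P_h(\delta_Y)$ fixes $y=f(x)$, so the substitution principle (with section $s(x,z)=(x,f(x),z)$) collapses this to $P_{k\circ s}(\delta_Z)=P_{(gf)\times\id_Z}(\delta_Z)=\Gamma_P(gf)$. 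For the tensor of $f\colon A\to B$ with $f'\colon C\to D$, unfolding $\otimes$ and applying functoriality of $P$ reduces $\Gamma_P(f\times f')=\Gamma_P(f)\otimes\Gamma_P(f')$ to coherence of $\delta$ with products. (Here and below, the existential quantifiers along ``middle-dropping'' projections are the ones guaranteed by existentiality after the evident reshuffle of product factors.)

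Next I would exhibit the right adjoint. Fix $f\colon X\to Y$ and set $R=P_{\id_Y\times f}(\delta_Y)\in P(Y\times X)=\Hom_{\Bicat_P}(Y,X)$. By Lemma~\ref{lemma:MapRightAdj} it suffices to prove the unit inequality $\delta_X\le\Gamma_P(f)\seq R$ and the counit inequality $R\seq\Gamma_P(f)\le\delta_Y$. For the unit: unfolding the composite and applying the substitution principle simplifies $\Gamma_P(f)\seq R$ to the ``kernel'' $P_{f\times f}(\delta_Y)$; since $\delta_X=\exists_{\Delta_X}(\top)$ by~\eqref{eq:elementary1} and $P_{\Delta_X}\bigl(P_{f\times f}(\delta_Y)\bigr)=P_f\bigl(P_{\Delta_Y}(\delta_Y)\bigr)=P_f(\top)=\top$ by functoriality of $P$ and reflexivity, the adjunction $\exists_{\Delta_X}\dashv P_{\Delta_X}$ gives $\delta_X\le P_{f\times f}(\delta_Y)$. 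For the counit: unfolding the composite writes $R\seq\Gamma_P(f)$ as $\exists_{\pi}\bigl(P_a(\delta_Y)\wedge P_b(\delta_Y)\bigr)$ over $Y\times X\times Y$, with $P_a(\delta_Y)$ and $P_b(\delta_Y)$ fixing $y=f(x)$ and $f(x)=y'$; pulling $\delta$-transitivity back along $(y,x,y')\mapsto(y,f(x),y')$ bounds the argument above by $P_\pi(\delta_Y)$, whence the counit of $\exists_\pi\dashv P_\pi$ gives $R\seq\Gamma_P(f)\le\delta_Y$. Hence every $\Gamma_P(f)$ is a map, so $\Gamma_P$ corestricts to $\C\to\Map(\Bicat_P)$; it is strict monoidal by the previous step, and it preserves the chosen comonoids since, by the construction of $\Bicat_P$, the comonoid on $X$ there --- and hence in $\Map(\Bicat_P)$ --- is precisely $\bigl(\Gamma_P(\Delta_X),\Gamma_P(!_X)\bigr)$, the image under $\Gamma_P$ of the comonoid $(\Delta_X,!_X)$ of the cartesian category $\C$. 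Thus the corestriction is a strict cartesian functor.

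I expect the main obstacle to be the composition-preservation step, and specifically the substitution principle it rests on: deriving $\exists_\pi\bigl(P_h(\delta_B)\wedge(-)\bigr)=P_s(-)$ from the axioms chains together Beck--Chevalley, Frobenius reciprocity and the elementary formula, and it must be stated uniformly enough to serve both in the composite $\Gamma_P(f)\seq\Gamma_P(g)$ and in the unit inequality. By contrast, reflexivity, transitivity and product-coherence of $\delta$, together with the adjunction inequalities and the strictness claims, are all short once the substitution principle is in hand.
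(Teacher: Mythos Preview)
Your proposal is correct and follows essentially the same route as the paper: identities are immediate; composition-preservation unfolds the composite, applies the elementary formula~\eqref{eq:elementary2} and Beck--Chevalley to collapse the existential; tensor-preservation reduces to the product-coherence of $\delta$ (the paper's Lemma~\ref{lemma:delta_AxB}); the unit inequality simplifies $\Gamma_P(f)\seq\op{\Gamma_P(f)}$ to $P_{f\times f}(\delta_Y)$ and then uses $\delta_X=\exists_{\Delta_X}(\top)$ together with the adjunction; the counit is bounded above by $\delta_Y$ via an inequality of the type ``$\exists$ of a pullback is below the pushforward'' (the paper's Proposition~\ref{prop:ExistsSquare}).

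The one genuine organisational difference is that you isolate a reusable \emph{substitution principle} $\exists_\pi\bigl(P_h(\delta_B)\wedge(-)\bigr)=P_s(-)$, whereas the paper re-derives this inline each time by chaining~\eqref{eq:elementary2}, Beck--Chevalley and composition of left adjoints. Your packaging is cleaner and makes the two uses (composition-preservation and the unit) visibly parallel; the paper's inline approach avoids having to state and prove the principle in full generality. For the counit you argue via transitivity of $\delta$ pulled back along $\id_Y\times f\times\id_Y$, while the paper first folds the two $\delta$-factors into a single $\exists_e(\delta_Y)$ via~\eqref{eq:elementary2} and then applies Proposition~\ref{prop:ExistsSquare}; these are equivalent, and your version is arguably more transparent.
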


Consider for instance the powerset doctrine $\pow \from \op{\Set} \to \InfSL$: the elements of $\pow(X \times Y)$ are precisely the relations from $X$ to $Y$, while composition and monoidal product in $\Bicat_\pow$ coincide with the usual composition and product of relations, see \S~\ref{sec:cartesianbi}. In other words, $\Bicat_{\pow}=\Rel$. The functor $\Gamma_{\pow}$ calculates graphs of functions, which are exactly the maps in $\Rel$.

\begin{example}
Recall the doctrine $LT \colon \Law{\Sigma} \to \InfSL$  and the cartesian bicategory $\fCB{\Sigma}{\Pred}$ from Examples \ref{example:Lindenbaum-Tarski eed} and \ref{ex:CBSP}. The functor $\Gamma_{LT}\colon \Law{\Sigma} \to \Map(\Bicat_{LT})$ is inductively defined as the unique cartesian functor mapping each $f \in \Sigma$ with arity $\ar(f)=n$ to the formula $f(x_1, \dots, x_n)=x_{n+1} \sort{n+1}{0}$.
\end{example}

Now, since $\C$ is a cartesian category, every object is canonically equipped with a natural and coherent comonoid structure. We can use the strict monoidal functor $\Gamma_P \colon \C \to \Bicat_P$ to transport this comonoid to $\Bicat_P$, and by Proposition~\ref{prop:GammaP(f) has right adjoint} we have that copying and discarding in $\Bicat_P$ both have right adjoints. 
Finally, one can prove that every morphism in $\Bicat_P$ is a lax-comonoid homomorphism and that the Frobenius law holds.

\begin{theorem}\label{thm: LA is a functor}
Let $P \colon \op\C \to \InfSL$ be an elementary existential doctrine. Then $\Bicat_P$ is a cartesian bicategory. Moreover, the assignment $P \mapsto \Bicat_P$ extends to a functor $\LA \colon \EED \to \CBC$ as follows:
for $P$ as above, $R \colon \op\D \to \InfSL$ and $(F,b) \colon P \to R$ in $\EED$, 
	\[
	\begin{tikzcd}[row sep=0em]
	\Bicat_P \ar[r,"{\LA(F,b)}"] & \Bicat_R \\
	X \ar[r,|->] \ar[d,"P(X \times Y) \ni r"'] \ar[d,draw=none,""name=x] & FX \ar[d,"b_{X\times Y}(r) \in R(F(X) \times F(Y))"] \ar[d,draw=none,swap,""name=y]\\[2em]
	Y \ar[r,|->] & FY
	\arrow[|->,from=x,to=y]
	\end{tikzcd}
	\]
\end{theorem}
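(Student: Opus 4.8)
The plan is to treat the two assertions separately, leaning on Proposition~\ref{prop:GammaP(f) has right adjoint} throughout. \textbf{First, that $\Bicat_P$ is a cartesian bicategory.} We check the clauses of Definition~\ref{def:cartBicat}. Poset-enrichment is immediate, since $\Hom_{\Bicat_P}(X,Y)=P(X\times Y)$ is an inf-semilattice, and monotonicity of $\seq$ and $\otimes$ reduces to monotonicity of the reindexings $P_f$, of the $\exists_\pi$, and of $\wedge$. The symmetric monoidal structure of $\Bicat_P$ is obtained by transporting the coherence morphisms of $\C$ along the identity-on-objects functor $\Gamma_P$, the point being that $\Gamma_P$ is strict monoidal; naturality of these morphisms with respect to arbitrary relations, together with well-definedness, associativity and unitality (with unit $\delta_X$) of composition, are diagram chases that use the Beck--Chevalley condition to slide $\exists_\pi$ past reindexings and Frobenius reciprocity to extract meets from $\exists_\pi$; these are carried out in Appendix~\ref{appendix: EED to CBC}. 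For the comonoid structure, clauses~1 and~5 follow from $\C$ being cartesian and $\Gamma_P$ being strict monoidal, hence preserving the canonical, coherent comonoids; clause~2 is exactly Proposition~\ref{prop:GammaP(f) has right adjoint}, since $\Gamma_P$ corestricts through $\Map(\Bicat_P)$, so copying and discarding have right adjoints. What genuinely concerns arbitrary morphisms rather than maps --- the Frobenius law (clause~3) and laxness of every morphism as a comonoid homomorphism (clause~4) --- is proven directly from elementarity together with Beck--Chevalley and Frobenius reciprocity; these computations are also in the appendix.

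\textbf{Next, that $\LA(F,b)$ is a morphism of cartesian bicategories.} Fix $(F,b)\colon P\to R$; then $\LA(F,b)$ acts as $X\mapsto FX$ on objects and as $r\mapsto b_{X\times Y}(r)$ on a morphism $r\in P(X\times Y)=\Hom_{\Bicat_P}(X,Y)$. Monotonicity on each hom-poset is monotonicity of the component $b_{X\times Y}$, which holds because $b$ is a transformation of $\InfSL$-valued functors; the same fact gives that $b_{X\times Y}$ preserves finite meets. Preservation of identities is the preservation of equality by $b$: $b_{X\times X}(\delta^P_X)=\delta^R_{FX}$. For strict monoidality on morphisms, expand $f\otimes g$ as a meet of two reindexings along the projections $\langle\pi_1,\pi_3\rangle,\langle\pi_2,\pi_4\rangle$, apply $b$, use that $b$ preserves meets and is natural, and use that $F$, being strict cartesian, carries the chosen projections of $\C$ to those of $\D$; the monoidal unit is sent to the monoidal unit since $F$ is strict monoidal.

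\textbf{Then, preservation of composition and of the (co)monoid structure.} The crucial step is preservation of composition. Unfolding $f\seq g=\exists^P_{\pi_Y}\!\big(P_{\pi_Z}(f)\wedge P_{\pi_X}(g)\big)$ and applying $b_{X\times Z}$, we use that $b_{X\times Z}$ preserves meets, then naturality of $b$ to commute $b$ past the reindexings $P_{\pi_Z}$ and $P_{\pi_X}$, and finally the square~\eqref{eq:commutative square b preserves existential quantifier} to commute $b$ past $\exists_{\pi_Y}$; the outcome is precisely $\LA(F,b)(f)\seq\LA(F,b)(g)$ computed in $\Bicat_R$. For the chosen comonoid, a short computation combining naturality of $b$, strict cartesianness of $F$ and preservation of equality shows $\LA(F,b)\circ\Gamma_P=\Gamma_R\circ F$ as functors $\C\to\Bicat_R$; since the chosen comonoid on each object of $\Bicat_P$ (resp.\ $\Bicat_R$) is the $\Gamma_P$- (resp.\ $\Gamma_R$-)image of the canonical comonoid of $\C$ (resp.\ $\D$), copying and discarding are preserved on the nose. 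Because the chosen monoid structure consists of the right adjoints of copying and discarding, and any order-preserving functor that preserves composition, identities and the comonoid maps preserves their right adjoints (right adjoints in a poset-enriched category being unique), $\LA(F,b)$ preserves the monoid structure as well; hence it is a morphism in $\CBC$.

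\textbf{Finally, functoriality of $\LA$, and the main obstacle.} Functoriality is immediate from the definitions: $\LA(\id_\C,\id_P)$ is the identity on $\Bicat_P$, and for $(F,b)\colon P\to R$, $(G,c)\colon R\to S$ one has $\LA\big((G,c)\circ(F,b)\big)=\LA(GF,cF\circ b)$, whose action on $r$ is $c_{FX\times FY}(b_{X\times Y}(r))=\big(\LA(G,c)\circ\LA(F,b)\big)(r)$. The main obstacle is the first assertion: the diagram chases establishing that composition in $\Bicat_P$ is associative with unit $\delta_X$, that the Frobenius law holds, and that every morphism is a lax comonoid homomorphism, are where the Beck--Chevalley condition and Frobenius reciprocity must be deployed repeatedly and carefully --- which is why this part is relegated to the appendix. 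Within the second assertion, the delicate step is preservation of composition, precisely the place where both conditions on a doctrine morphism --- naturality of $b$ and commutation of $b$ with existential quantifiers --- are simultaneously used.
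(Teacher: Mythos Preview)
Your proposal is correct and follows essentially the same approach as the paper: transport the comonoid structure along the strict monoidal functor $\Gamma_P$, defer the Beck--Chevalley/Frobenius-reciprocity diagram chases for associativity, Frobenius and laxness to the appendix, and verify that $\LA(F,b)$ is a morphism in $\CBC$ using preservation of equality, meets and existential quantifiers by $b$, naturality of $b$, and strict cartesianness of $F$. Your argument for preservation of the monoid structure via uniqueness of right adjoints is a clean shortcut the paper only covers by an ``analogously''.

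One small slip: in the preservation-of-composition step, $b_{X\times Z}$ is applied to $\exists^P_{\pi_Y}\!\big(P_{\pi_Z}(f)\wedge P_{\pi_X}(g)\big)$, so the order of the three ingredients must be reversed from what you wrote. You must \emph{first} use square~\eqref{eq:commutative square b preserves existential quantifier} to commute $b$ past $\exists_{\pi_Y}$, obtaining $\exists^R_{F(\pi_Y)}\big(b_{X\times Y\times Z}(\ldots)\big)$; only then can you invoke preservation of meets (by $b_{X\times Y\times Z}$, not $b_{X\times Z}$) and finally naturality of $b$ to pass $b$ through the reindexings. The ingredients you list are exactly the right ones --- this is purely an ordering issue in your exposition, not a gap in the argument.
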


 \section{An Adjunction}\label{sec:adjunction}

We saw in Sections~\ref{sec: from CBC to EED} and~\ref{sec: from EED to CBC} that  using $\LA$ and $\RA$ one can  pass, in a functorial way, between the worlds of cartesian bicategories and elementary existential doctrines. Here we show that, in fact, they define an adjunction $\LA \dashv \RA$. For this we need natural transformations
\[
\eta \colon \id_{\EED} \to \RA\LA, \qquad \epsilon \colon \LA\RA \to \id_\CBC
\]
that make the following triangles commute for every $P \colon \op\C \to \InfSL$ in $\EED$ and $\B$ in $\CBC$.
\begin{equation}\label{eq:triangular equalities diagrams}
\begin{tikzcd}
		\LA(P) \ar[r,"\LA(\eta_P)"] \ar[dr,"\id_{\LA(P)}"'] & \LA\RA\LA(P) \ar[d,"\epsilon_{\LA(P)}"] \\
		& \LA(P)
		\end{tikzcd}
\qquad
	\begin{tikzcd}
		\RA(\B) \ar[r,"\eta_{\RA(\B)}"] \ar[dr,"\id_{\RA(\B)}"'] & \RA\LA\RA(\B) \ar[d,"\RA(\epsilon_\B)"] \\
		& \RA(\B)
		\end{tikzcd}
\end{equation}

Let us start with $\epsilon$. Recall that $\RA(\B) = \Hom_{\B}(-,I) \colon \op{\Map(\B)} \to \InfSL$ and that, for $f \colon X \to Y$ in $\Map(\B)$ and $U \in \Hom_\B(Y,I)$,
$\RA(\B)_f(U)$ is equal to $U \circ f \colon X \to I$. By definition then, $\LA\RA(\B)=\Bicat_{\RA(\B)}$ has the same objects of $\B$ while a morphism $X \to Y$ in $\LA\RA(\B)$ is an element of $\Hom_\B(X \times Y, I)$. Hence, $\epsilon_\B$ has to take a $\sxtikzfig{RfromXYtoI}$ in $\B$ and produce a morphism $\epsilon_\B(R) \colon X \to Y$ in $\B$. We can do so by ``bending'' the $Y$ string, defining:
    \begin{equation}\label{eq:epsilon}
    \epsilon_\B\left(\stikzfig{RfromXYtoI} \right) \quad = \quad \stikzfig{epsilonR}
    \end{equation}
    In fact, by the snake equation in~\eqref{eq:snake}, $\epsilon_\B$ has an inverse: define, for $S \colon X \to Y$ in $\B$,
    \[
    \inv\epsilon_\B \left(\stikzfig{S}\right)\quad  = \quad \stikzfig{epsilonINVs}
    \]
    and it is immediate to see that $\inv\epsilon_\B \epsilon_\B(R)=R$ and $\epsilon_\B\inv\epsilon_\B(S)=S$.

\begin{example}
Recall from Example~\ref{ex:R(CB)=LT} that $\RA(\fCB{\Sigma}{P}) \cong LT$. Applying $\LA$ to both, one gets $\LA\RA(\fCB{\Sigma}{P}) \cong \LA(LT)=\Bicat_{LT}$, hence using $\epsilon$ we have that $\fCB{\Sigma}{P} \cong \Bicat_{LT}$. In particular, given a formula $\phi \sort{n+m}{0}$, one obtains a morphism $n \to m$ in $\fCB{\Sigma}{P}$ by first translating $\phi$ to the string diagram $ \Theta{\phi}\colon n+m \to 0$ (which is a morphism in $\LA\RA(\fCB{\Sigma}{P})$ of type $n \to m$), and then bending the last $m$ inputs using $\epsilon$ as illustrated in \eqref{eq:epsilon}.
\end{example}

\begin{remark}\label{rem:inv epsilon = Gamma R(B) on maps}
$\inv\epsilon_\B$ coincides with $\Gamma_{\RA(\B)}$ on $\Map(\B)$, since $\inv\epsilon_\B (f) = \delta^{\RA(\B)}_Y \circ (f \times \id_Y) = \Gamma_{\RA(\B)}(f)$ when $f \colon X \to Y$ is a map. In other words, $\inv\epsilon_\B$ is an extension of $\Gamma_P$ to the whole of $\B$.
\end{remark}

Regarding $\eta$, we have $\LA(P)=\Bicat_P$, whose objects are the objects of $\C$, and hom-sets are $\Bicat_P(X,Y)=P(X \times Y)$. This means that
	\[
	\RA\LA(P) = \Hom_{\Bicat_P}(-,I) = P({-}\times I) \colon \op{\Map({\Bicat_P})} \to \InfSL.
	\]
	To give a morphism $\eta_P \colon P \to \RA\LA(P)$ in $\EED$ means therefore to give a functor $F \colon \C \to \Map(\Bicat_P)$ and a natural transformation $b \colon P \to P(F(-) \times I)$ satisfying certain conditions. We have a natural candidate for $F$: we proved in Proposition~\ref{prop:GammaP(f) has right adjoint} that $\Gamma_P$ is a functor whose image, in fact, is included in $\Map(\Bicat_P)$ and, moreover, it is cartesian. Being the identity on objects, the natural transformation part of the definition of $\eta_P$ must have components $b_X \colon P(X) \to P(X \times I)$: it is clear then that the natural transformation
	\[
	P\rho= \bigl( P_{\rho_X} \colon P(X) \to P(X \times I) \bigr)_{X \in \C},
	\]
    obtained by whiskering $P$ with the right unitor $\rho_X \colon X \times I \to X$ of $\C$, is a sensible choice for $b$. In short, we define
    \begin{equation}\label{eq:eta}
    \eta=\bigl( (\Gamma_P, P\rho) \colon P \to \RA\LA(P) \bigr)_{P \in \EED}.
    \end{equation}

    The interested reader can find a proof of the fact that $\eta$ and $\epsilon$ are well-defined natural transformations that satisfy the triangular equalities~\eqref{eq:triangular equalities diagrams} in Appendix~\ref{appendix:adjunction}. 

    \begin{theorem}\label{thm:adjunctionNEW}
	The functors $\LA$ and $\RA$ form an adjunction
	\begin{equation}\label{eq:adjunction}
	\begin{tikzcd}[column sep=5em]
		\EED \ar[r,bend left,"\LA"] \ar[r,draw=none,"\bot"description] & \CBC \ar[l,bend left,"\RA"]
	\end{tikzcd}
	\end{equation}
	whose unit is $\eta$~\eqref{eq:eta} and whose counit, which is a natural isomorphism, is $\epsilon$~\eqref{eq:epsilon}.
\end{theorem}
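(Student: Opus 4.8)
The plan is to verify directly that $\eta$ of~\eqref{eq:eta} and $\epsilon$ of~\eqref{eq:epsilon} are natural transformations of the stated types and that they satisfy the two triangular identities of~\eqref{eq:triangular equalities diagrams}; by the standard characterisation of adjunctions this yields $\LA \dashv \RA$. That $\epsilon$ is moreover a natural isomorphism is essentially already in hand: for each $\B$ the inverse $\inv\epsilon_\B$ was exhibited using the snake equations~\eqref{eq:snake}, so once each $\epsilon_\B$ is shown to be a $\CBC$-morphism it is automatically an isomorphism there.

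First I would check that each $\eta_P=(\Gamma_P,P\rho)$ is a morphism of $\EED$. The functor component is handled by Proposition~\ref{prop:GammaP(f) has right adjoint}, which gives that $\Gamma_P$ corestricts to a strict cartesian functor $\C\to\Map(\Bicat_P)$; naturality of $P\rho\colon P\to\RA\LA(P)\circ\op{\Gamma_P}$ is immediate from naturality of the right unitors and functoriality of $P$. The real content is that $P\rho$ preserves equality and existential quantifiers in the sense of Definition~\ref{def:category EED}: this amounts to identifying the canonical equality element and the left adjoints to projections \emph{inside} $\Bicat_P$ — i.e.\ of the doctrine $\RA\LA(P)$ — and matching them, modulo the harmless $({-}\times I)$ reindexing along $\rho$, with $\delta^P$ and $\exists^P$. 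Here one uses the explicit description of the cartesian-bicategory structure of $\Bicat_P$ in terms of $P$ from Theorem~\ref{thm: LA is a functor} and Appendix~\ref{appendix: EED to CBC}.

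Next I would check that each $\epsilon_\B$ is a morphism of $\CBC$. It is the identity on objects; strict monoidality reduces to wire-bending commuting with $\otimes$, which is coherence of the cap with the monoidal structure, and monotonicity is clear since bending preserves the order. Preservation of composition is obtained by unfolding composition in $\Bicat_{\RA(\B)}=\LA\RA(\B)$, namely $\exists_{\pi_Y}(\RA(\B)_{\pi_Z}({-})\wedge\RA(\B)_{\pi_X}({-}))$, into a string diagram in $\B$ and sliding the bent wires through via the snake equations. Preservation of the chosen monoid and comonoid follows from Remark~\ref{rem:inv epsilon = Gamma R(B) on maps}: the (co)monoid of $\Bicat_{\RA(\B)}$ is the $\Gamma_{\RA(\B)}$-image of that of $\Map(\B)$, and $\Gamma_{\RA(\B)}$ agrees with $\inv\epsilon_\B$ on maps, so applying $\epsilon_\B$ returns the (co)monoid of $\B$. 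Naturality of $\epsilon$ is then a short chase: a $\CBC$-morphism $F$ preserves cups and caps while $\LA\RA(F)$ acts simply by applying $F$, whence $\epsilon_\B\circ\LA\RA(F)=F\circ\epsilon_\A$.

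Finally I would establish the triangular identities. For $\epsilon_{\LA(P)}\circ\LA(\eta_P)=\id_{\LA(P)}$, both sides are identity-on-objects endofunctors of $\Bicat_P$ and on a morphism $r\in P(X\times Y)$ one computes that $\LA(\eta_P)$ sends $r$ to $P_{\rho_{X\times Y}}(r)$, which $\epsilon_{\LA(P)}$ then turns back into $r$ by the snake manipulation and triviality of the $\rho$-reindexing. For $\RA(\epsilon_\B)\circ\eta_{\RA(\B)}=\id_{\RA(\B)}$, the underlying functors compose to $\id_{\Map(\B)}$ directly by Remark~\ref{rem:inv epsilon = Gamma R(B) on maps} (as $\epsilon_\B\circ\Gamma_{\RA(\B)}=\id$ on maps), and on the natural-transformation components the whiskered unitor $\RA(\B)\rho$ is cancelled by the reindexing carried by $\RA(\epsilon_\B)$. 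I expect the main obstacle to lie in the second and third paragraphs — pinpointing the equality element and existential structure of $\RA\LA(P)$ and the composition rule for $\epsilon_\B$ — since these require reconciling the doctrine-theoretic presentation of $\Bicat_P$ with the bicategorical one; the triangular identities themselves then reduce to bookkeeping with the unitors and the snake equations.
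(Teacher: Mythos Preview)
Your proposal is correct and follows essentially the same route as the paper: first verify that $\eta_P=(\Gamma_P,P\rho)$ is an $\EED$-morphism (using Proposition~\ref{prop:GammaP(f) has right adjoint} for the functor part and computing the equality and existential structure of $\RA\LA(P)$), then that $\epsilon_\B$ is a $\CBC$-isomorphism (via the snake equations, strict monoidality of bending, and Remark~\ref{rem:inv epsilon = Gamma R(B) on maps} for the (co)monoid), and finally check the two triangles. One small recalibration: your expectation that the triangular identities are mere bookkeeping is optimistic for the first triangle. The cap of $\LA(P)=\Bicat_P$ is itself defined doctrine-theoretically through $\Gamma_P$, so ``bending back'' $P_{\rho_{X\times Y}}(r)$ in $\Bicat_P$ is not a one-line application of~\eqref{eq:snake}; in the paper this is the longest calculation, carried out entirely in the language of $P$ (projections, $\exists$, Beck--Chevalley, Frobenius reciprocity). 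Conceptually it amounts to showing $\LA(\eta_P)=\inv\epsilon_{\LA(P)}$ on morphisms, which is exactly what you indicate, but be prepared for the doctrine-level unwinding rather than a purely diagrammatic argument. The second triangle, by contrast, is as short as you suggest.
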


Since the counit $\epsilon$ of the adjunction $\LA \dashv \RA$ is actually a natural isomorphism, the functor $\RA \colon \CBC \to \EED$ is full and faithful.
It turns out, however, that the adjunction $\LA \dashv \RA$ is not an equivalence, because $\LA$ is not faithful. The following example shows why.

\begin{example}\label{ex:counter}
Let $\Sigma_1$ and $\Sigma_2$ be the signatures in Example~\ref{ex:Lawmorphism} and $\Pred$ be some signature of predicate symbols.
Let $LT_1 \colon \op{\Law{\Sigma_1}} \to \InfSL$  be the indexed Lindenbaum-Tarski algebras for $\Sigma_1$ defined as in Example~\ref{example:Lindenbaum-Tarski eed}.
Recall from Example~\ref{ex:Lawmorphism} the strict cartesian functor $Q \colon \Law{\Sigma_2} \to \Law{\Sigma_1}$, mapping $g_1,g_2\in \Sigma_2$ to $f\in \Sigma_1$, and observe that \[LT_1' = LT_1 \circ \op Q \colon \op{\Law{\Sigma_2}} \to \InfSL\]
is an elementary, existential doctrine by Remark~\ref{rem:precomposition of doctrine with cartesian functor}.
Its behaviour is somewhat peculiar: it maps $n\in \Nat$ to the set of formulas $\phi \sort{n}{0}$ built from $\Sigma_1$ and $\Pred$ and to any tuple of terms $\langle t_1, \dots, t_m \rangle \colon n \to m$ in $\Sigma_2$ assigns the function mapping a formula $\phi \sort{m}{0}$ to $\phi [^{Q(t_1), \dots, Q(t_m)}   / _{x_1, \dots, x_m}]$. Observe that each $Q(t_i)$ is again a term in $\Sigma_1$: the symbols $g_1$ and $g_2$ are both translated to $f$. Somehow $LT_1'$ behaves like $LT_1$ but they are different doctrines since their index categories, $\Law{\Sigma_2}$ and $\Law{\Sigma_1}$, are not isomorphic. However, when transforming them into cartesian bicategories via $\LA$, one obtains that $\LA(LT_1')=\LA(LT_1)$: objects are natural numbers and morphisms $n \to m$ are the elements of the set
\[
LT_1'(n + m) = LT_1(Q(n + m)) = LT_1(Q(n) + Q(m)) = LT_1(n+m).
\]
To formally show that $\LA$ is not faithful we now define two morphisms in $\EED$, both from $LT_1$ to $LT_1'$, and show that their image along $\LA$ is the same. Consider the strict cartesian functors $F_1,F_2 \colon \Law{\Sigma_1} \to \Law{\Sigma_2}$ from Example~\ref{ex:Lawmorphism} and observe that
$QF_i=\id_{\Law{\Sigma_1}}$ for $i=1,2$. We are in the following situation:
\[
\raisebox{3pt}{
\scalebox{0.85}{
\begin{tikzcd}[bend angle=10,ampersand replacement=\&]
{\op{\Law{\Sigma_1}}} \ar[dr,"LT_1"] \ar[dd,bend left,xshift=.75em,"\op{F_2}"] \ar[dd,bend right,"\op{F_1}"',xshift=-.75em]  \ar[loop,looseness=3.5,"\id"'] \\
\& \InfSL  \\
\op{\Law{\Sigma_2}} \ar[uu,"\op Q"description] \ar[ur,"LT_1'"']
\end{tikzcd}
}
}
\quad \text{ which means that } LT_1' \circ \op{F_i} = LT_1 \circ \op Q \circ \op{F_i} = LT_1,
\]
thus $(F_1,\id_{LT_1})$ and $(F_2,\id_{LT_1})$ are distinct morphisms in $\EED$ from $LT_1$ to $LT_1'$. Since $\LA(LT_1)=\LA(LT_1')$, it follows from the definition of $\LA$ that
$
\LA(F_1,\id) = \id_{\Bicat_{LT}} = \LA(F_2,\id).
$
\end{example}

 \section{An Equivalence}\label{sec:equivalence}
In the previous section we identified a doctrine with peculiar behaviour, and used it to show that~\eqref{eq:adjunction} is not an equivalence. Here we characterise the additional constraints on doctrines that are needed for an equivalence.

To make the adjunction~\eqref{eq:adjunction} an equivalence, we need its unit $\eta \colon \id_\EED \to \RA\LA$ to be a natural isomorphism. This would mean that
\[
\eta_P = (\Gamma_P,P\rho) \colon P \to \RA\LA(P)
\]
ought to be invertible in $\EED$ for any elementary existential doctrine $P \colon \op\C \to \InfSL$. Since
\[
\RA\LA(P)=\Hom_{\Bicat_P} (-,I) \colon \op{\Map(\Bicat_P)} \to \InfSL,
\]
by definition $\eta_P$ has an inverse in $\EED$ if and only if the functor $\Gamma_P \colon \C \to \Map(\Bicat_P)$ is an isomorphism of cartesian categories, in which case $(\inv\Gamma_P, P\inv\rho)$ would be the inverse of $\eta_P$.

But $\Gamma_P$ is the identity on objects: to be an isomorphism, full and faithful suffices. That is, the maps of $\Bicat_P$ must bijectively correspond
with the arrows of the base category $\C$ of $P$. This is not necessarily the case, as arrows of $\C$ are not involved in the construction of $\Bicat_P$.

The technical tools that allow us to bridge the gap between morphisms in $\C$ and maps in $\Bicat_P$ are provided by~\cite{maietti_triposes_2017}: the next two definitions are equivalent to faithfulness and, respectively, fullness of $\Gamma_P$.

\begin{definition} 
Let $P \colon \op\C \to \InfSL$ be an elementary existential doctrine and $\alpha \in P(A)$. A \emph{comprehension} of $\alpha$ is an arrow $\{ \! | \alpha |\!\} \colon X \to A$ in $\C$ such that $\top_{PX} \le P_{\{ \! | \alpha |\!\}} (\alpha)$ and such that for every $h \colon Z \to A$ for which $\top_{PZ} \le P_h (\alpha)$, there is a unique arrow $h' \colon Z \to X$ such that $h= \{\!| \alpha | \! \} \circ h'$. $P$ \emph{has comprehensive diagonals} if every diagonal arrow $\Delta_A \colon A \to A \times A$ is the comprehension of $\delta^P_A$.
\end{definition}
\begin{example}
The doctrine $LT_1'\colon \op{\Law{\Sigma_2}} \to \InfSL$ from Example \ref{ex:counter} does not have comprehensive diagonal. Indeed $\Delta_1$ is not the comprehension of $\delta_1$: take as $h$ the arrow $\langle g_1,g_2 \rangle\colon 1 \to 2$ and observe that ${LT_1'}_{\langle g_1,g_2 \rangle}(\delta_1)=\bigl(f(x_1)=f(x_1)\bigr) \sort{1}{0}$ and $\top \sort{1}{0} \leq \bigl(f(x_1)=f(x_1)\bigr) \sort{1}{0}$. Yet there exists no $h'\colon 1 \to 1$ in $\Law{\Sigma_2}$ such that $\langle g_1,g_2 \rangle = \Delta_1 \circ h'$.
\end{example}
\begin{definition}
Let $P \colon \op\C \to \InfSL$ be an elementary, existential doctrine. We say that $P$ \emph{satisfies the Rule of Unique Choice} (RUC) if for every  $R \in P(X \times Y)$ which is a map in $\Bicat_P$ there exists an arrow $f \colon X\to Y$ such that
$
\top_{PX} \le P_{\langle \id_X, f \rangle} (R).
$
\end{definition}
\begin{example}
All the doctrines considered so far satisfy RUC. For an example of a doctrine that does not satisfy it consider the composition of $\op{F_1} \colon \op{\Law{\Sigma_1}} \to \op{\Law{\Sigma_2}}$ (Example \ref{ex:Lawmorphism}) and $LT_2 \colon \op{\Law{\Sigma_2}} \to \InfSL$ (Example~\ref{example:Lindenbaum-Tarski eed}) that, by Remark \ref{rem:precomposition of doctrine with cartesian functor}, is an elementary existential doctrine. This doctrine maps $n$ to the set of formulas $\phi\sort{n}{0}$ where terms are built from $g_1$ and $g_2$, but the index category $L_{\Sigma_1}$ contains terms built from $f$ that is translated to $g_1$ by $F_1$. Now, the formula $\phi \equiv \bigl( g_2(x_1)=x_2 \bigr) \sort{2}{0}$ belongs to $LT_2 \circ \op{F_1}(1+1)$ and gives rise to a
map in $\Bicat_{LT_2 \circ \op{F_1}}$, but there is no arrow $t \colon 1 \to 1$ in $\Law{\Sigma_1}$ such that $\top \le {LT_2}_{\langle \id, F_1(t) \rangle} (\phi)$.
\end{example}
We denote by $\EEDff$ the full sub-category of $\EED$ consisting only of those elementary existential doctrines with comprehensive diagonals and satisfying the Rule of Unique Choice. Conveniently, it turns out that the image of $\RA$ is already contained in $\EEDff$.

\begin{proposition}
Let $\B$ be a cartesian bicategory. Then $\RA(\B)$ is in $\EEDff$.
\end{proposition}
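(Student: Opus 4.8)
The plan is to verify directly that $\RA(\B) = \Hom_\B(-,I) \colon \op{\Map(\B)} \to \InfSL$ has comprehensive diagonals and satisfies the Rule of Unique Choice, using the explicit string-diagrammatic description of its structure established in the proof of Theorem~\ref{thm:doctrine associated to cbc} and the characterisation of maps via right adjoints in Lemma~\ref{lemma:MapRightAdj}. Recall in particular that $\delta^{\RA(\B)}_A = \stikzfig{Cap} \colon A \otimes A \to I$ (the cup, viewed as a morphism into $I$), that for a map $f \colon X \to Y$ in $\Map(\B)$ the reindexing $\RA(\B)_f(U)$ is $U \circ f$, and that the category $\Map(\Bicat_{\RA(\B)})$ is, via $\epsilon_\B$, isomorphic to $\B$ with maps corresponding to maps.

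\textbf{Comprehensive diagonals.} I would show that for every object $A$ the diagonal $\Delta_A \colon A \to A \otimes A$ in $\Map(\B)$ is a comprehension of $\delta^{\RA(\B)}_A$. The first condition, $\top \le \RA(\B)_{\Delta_A}(\delta^{\RA(\B)}_A)$, amounts to $\stikzfig{Top} \le \delta^{\RA(\B)}_A \circ \Delta_A$, which is an equality: bending the cup back along the copy map yields the discard $\stikzfig{Dis}$ on $A$, i.e.\ the top element of $\Hom_\B(A,I)$, by the counit law and the snake equation. For the universal property, suppose $h \colon Z \to A \otimes A$ is a map with $\top \le \RA(\B)_h(\delta^{\RA(\B)}_A) = \delta^{\RA(\B)}_A \circ h$. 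Writing $h = \langle h_1, h_2 \rangle$ with $h_i = h \seq \pi_i$ in the cartesian category $\Map(\B)$, the inequality $\top_{\Hom_\B(Z,I)} \le \stikzfig{Cap} \circ h$ forces $h_1 = h_2$: bending the cup and using that $h$ is a comonoid homomorphism, the right-hand side becomes the composite expressing ``$h_1$ equals $h_2$'' in the internal logic, and being $\top$ means precisely $h_1 = h_2$ (here one uses that a map has a right adjoint, Lemma~\ref{lemma:MapRightAdj}, so this composite is $\top$ iff the two maps coincide). Then $h' := h_1 \colon Z \to A$ satisfies $\Delta_A \seq h' = \langle h_1, h_1 \rangle = h$ and is the unique such arrow since $\Delta_A \seq (-) \seq \pi_1 = \id$. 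The main care here is making the ``$\top$ iff equal'' step precise; I expect this to follow by post-composing with the right adjoint of $h_1$ (or of $h$) and using the adjunction (co)unit inequalities, exactly the argument showing single-valuedness in $\Rel$.

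\textbf{Rule of Unique Choice.} Suppose $R \in \RA(\B)(X \times Y) = \Hom_\B(X \otimes Y, I)$ is a map in $\Bicat_{\RA(\B)}$. Under the isomorphism $\epsilon_\B \colon \Bicat_{\RA(\B)} \to \B$, which by Remark~\ref{rem:inv epsilon = Gamma R(B) on maps} identifies maps with maps, $R$ corresponds to a map $\epsilon_\B(R) \colon X \to Y$ in $\B$, i.e.\ an arrow $f$ of $\Map(\B)$ — and by Lemma~\ref{lem:MapCart}, $\Map(\B)$ is exactly the base category of $\RA(\B)$. I claim this $f$ witnesses RUC: one must check $\top_{\Hom_\B(X,I)} \le \RA(\B)_{\langle \id_X, f \rangle}(R) = R \circ \langle \id_X, f \rangle$. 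Translating to diagrams, $R \circ \langle \id_X, f\rangle$ is obtained by copying the $X$-wire, feeding one copy into $f$ and plugging both into $R = \stikzfig{RfromXYtoI}$; since $R$ is the ``bent'' form of $f$ (that is $R = \inv\epsilon_\B(f)$, which by Remark~\ref{rem:inv epsilon = Gamma R(B) on maps} equals $\delta^{\RA(\B)}_Y \circ (f \times \id_Y) = \Gamma_{\RA(\B)}(f)$), this composite unbends to $\stikzfig{Cap}$ bent against $f$ and $f$, which collapses via the snake equation and the counit law to $\stikzfig{Dis}$ on $X$, namely $\top$. So the inequality is in fact an equality, and RUC holds.

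Since $\RA(\B)$ was already shown to be an elementary existential doctrine in Theorem~\ref{thm:doctrine associated to cbc}, establishing these two properties places it in $\EEDff$, as claimed. The only genuinely delicate point is the fullness/``internal equality'' argument in the comprehension step — spelling out why $\top \le \delta \circ h$ forces the two legs of $h$ to be literally equal as morphisms of $\Map(\B)$ — but this is precisely the single-valuedness half of the standard $\Rel$ computation, carried out abstractly using the right adjoint supplied by Lemma~\ref{lemma:MapRightAdj}; everything else is bending wires and applying the (co)unit laws and snake equations already recorded in Section~\ref{sec:cartesianbi}.
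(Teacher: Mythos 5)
Your argument is correct in outline, but it follows a genuinely different route from the paper. The paper never checks the two defining conditions of $\EEDff$ directly: it uses that comprehensive diagonals and the Rule of Unique Choice are equivalent, respectively, to faithfulness (\cite{maietti_triposes_2017}) and fullness (Proposition~\ref{prop:RUC iff Gamma full}) of the graph functor $\Gamma_{\RA(\B)}$, and then obtains both at once from $\Gamma_{\RA(\B)} = \inv\epsilon_\B\restriction_{\Map(\B)}$ with $\epsilon_\B$ an isomorphism in $\CBC$; fullness is exactly your RUC witness, $f=\epsilon_\B(R)$, which is a map by Proposition~\ref{prop:restriction of a morphism in CBC to maps} (this is the result you need here, not Remark~\ref{rem:inv epsilon = Gamma R(B) on maps}, which only identifies $\inv\epsilon_\B$ with $\Gamma_{\RA(\B)}$ on maps). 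So your RUC half is essentially the paper's fullness argument spelled out diagrammatically, while your comprehension half replaces the appeal to \cite{maietti_triposes_2017} by a hands-on verification; that is legitimate, but the step you flag is the real content and needs more than the (co)units of the adjoint of $h_1$: from $\top \le \delta_A \circ \langle h_1,h_2\rangle$, i.e.\ $!_Z \le (h_1\wedge h_2)\seq !_A$ where $h_1\wedge h_2 = \Delta_Z\seq(h_1\otimes h_2)\seq\nabla_A$ by Lemma~\ref{lemma:homsets in CBC are InfSLattices}, one concludes $h_1=h_2$ by a computation that uses the Frobenius law and the comonoid-homomorphism property of maps, e.g.\ $h_1 = \Delta_Z\seq(h_1\otimes !_Z) \le \Delta_Z\seq\bigl(h_1\otimes((h_1\wedge h_2)\seq !_A)\bigr)$, which via coassociativity, $\Delta_Z\seq(h_1\otimes h_1)=h_1\seq\Delta_A$, the Frobenius law $(\Delta_A\otimes \id_A)\seq(\id_A\otimes\nabla_A)=\nabla_A\seq\Delta_A$ and the counit law collapses to $h_1\wedge h_2\le h_2$, giving $h_1\le h_2$ and symmetrically $h_2\le h_1$. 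Also note that for the first comprehension condition only the inequality $\top\le\delta_A\circ\Delta_A$ is needed, which follows from the adjunction unit $\id_A\le\Delta_A\seq\nabla_A$; you do not need it to be an equality. With that lemma supplied your direct verification goes through and is more self-contained on the bicategorical side, whereas the paper's proof is shorter precisely because it delegates this point to the doctrine-theoretic characterisations of fullness and faithfulness.
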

\begin{proof}It is enough to show that $\Gamma_{\RA(\B)} \colon \Map(\B) \to \Map(\Bicat_{\RA(\B)}) = \Map(\LA\RA(\B))$ is full and faithful. As noticed in Remark~\ref{rem:inv epsilon = Gamma R(B) on maps},
$
\Gamma_{\RA(\B)} = \inv\epsilon_\B \restriction_{\Map(\B)}. $
Since $\inv\epsilon_\B$ is an isomorphism in $\CBC$, it is faithful, therefore its restriction $\Gamma_{\RA(\B)}$ is as well. Moreover, $\inv\epsilon_\B$ is full: if $R \colon X \to Y$ in $\LA\RA(\B)$ is a map, then there exists $f \colon X \to Y$ in $\B$ such that $\inv\epsilon_\B (f) = R$. In fact, $f=\epsilon_\B(R)$ and since $\epsilon_\B$ is a morphism in $\CBC$, by Proposition~\ref{prop:restriction of a morphism in CBC to maps} we have that $f$ is a map in $\B$. Therefore, $\Gamma_{\RA(\B)}$ is full.
\end{proof}

\begin{theorem}\label{thm:EEDff and CBC are equivalent}
The categories $\EEDff$ and $\CBC$ are equivalent via adjunction~\eqref{eq:adjunction} where $\LA$ and $\RA$ are respectively restricted and corestricted to $\EEDff$.
\end{theorem}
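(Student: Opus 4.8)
The plan is to harvest everything already in place. By Theorem~\ref{thm:adjunctionNEW} we have the adjunction $\LA \dashv \RA$ with counit $\epsilon$ a natural \emph{isomorphism}, and by the preceding Proposition $\RA$ factors through the full subcategory $\EEDff \hookrightarrow \EED$. So first I would observe that, since a right adjoint landing in a full subcategory forces the adjunction to restrict, we get an adjunction $\LA \colon \EEDff \rightleftarrows \CBC \colon \RA$ with the corestricted $\RA$: indeed, for $P \in \EEDff$ the bijection $\Hom_\CBC(\LA P, \B) \cong \Hom_\EED(P, \RA\B)$ is unchanged because $P$ and $\RA\B$ both lie in $\EEDff$ and $\EEDff$ is full; moreover $\RA\LA(P) \in \EEDff$ by the Proposition applied to $\B = \LA(P)$, so this is genuinely an adjunction of endofunctor data on $\EEDff$. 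Its counit is still $\epsilon$, hence still a natural isomorphism, and its unit is $\eta$ restricted to $\EEDff$. Since an adjunction is an equivalence exactly when both unit and counit are natural isomorphisms, everything reduces to showing that $\eta_P \colon P \to \RA\LA(P)$ is an isomorphism in $\EEDff$ (equivalently in $\EED$, as $\EEDff$ is full) for every $P \in \EEDff$.

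The second step is to unpack when $\eta_P = (\Gamma_P, P\rho)$ is invertible. A morphism $(F,b)$ in $\EED$ is an isomorphism iff $F$ is an isomorphism of cartesian categories and every $b_A$ is an isomorphism of inf-semilattices. Here the natural-transformation part is $b_X = P_{\rho_X}$, reindexing along the right unitor $\rho_X \colon X \times I \to X$, which is an iso in $\C$, so $P_{\rho_X}$ is automatically an iso; hence $\eta_P$ is invertible iff $\Gamma_P \colon \C \to \Map(\Bicat_P)$ is an isomorphism of cartesian categories. By Proposition~\ref{prop:GammaP(f) has right adjoint}, $\Gamma_P$ is a strict cartesian functor, and it is the identity on objects by construction, so it is an isomorphism precisely when it is bijective on every hom-set, i.e.\ full and faithful.

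The third step identifies full-and-faithfulness of $\Gamma_P$ with membership in $\EEDff$. As recalled in the excerpt (following~\cite{maietti_triposes_2017}), having comprehensive diagonals is equivalent to faithfulness of $\Gamma_P$, and satisfying the Rule of Unique Choice is equivalent to fullness of $\Gamma_P$. Thus $\Gamma_P$ is full and faithful exactly when $P \in \EEDff$; for such $P$ the unit $\eta_P$ is an isomorphism, so the restricted/corestricted adjunction is an adjoint equivalence $\EEDff \simeq \CBC$, as required.

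I expect the genuine obstacle to be the content behind the two equivalences in the last step. For faithfulness one must translate $\Gamma_P(f) = \Gamma_P(g)$ into $\top_{PX} \le P_{\langle f,g\rangle}(\delta_Y)$ — reindexing $\Gamma_P(f) = P_{f \times \id_Y}(\delta_Y)$ along $\langle \id_X, g\rangle$ and using reflexivity $\top \le P_{\Delta_Y}(\delta_Y)$ (itself a consequence of Remark~\ref{rmk:BC}) — and then invoke the universal property of the comprehension $\Delta_Y = \{\!|\delta_Y|\!\}$ together with $\Delta_Y \circ h' = \langle h', h'\rangle$ to conclude $f = g$. For fullness one must show that the arrow $f$ produced by RUC for a map $R \in P(X\times Y)$ actually satisfies $\Gamma_P(f) = R$: a short computation gives $\top_{PX} \le P_{\langle \id_X, f\rangle}(\Gamma_P(f))$ as well, and the final identification relies on maps in a cartesian bicategory being order-discrete (via Lemma~\ref{lemma:MapRightAdj} and adjunction manipulations), so that $\Gamma_P(f)$ and $R$, being comparable maps, must coincide. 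I would carry out these two verifications in detail in an appendix and keep the body-level argument at the level of the reductions above.
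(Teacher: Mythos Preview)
Your proposal is correct and follows essentially the same route as the paper: restrict the adjunction to $\EEDff$, keep $\epsilon$ as a natural isomorphism, and reduce invertibility of $\eta_P=(\Gamma_P,P\rho)$ to $\Gamma_P$ being full and faithful, which is exactly the conjunction of comprehensive diagonals and RUC. The paper handles the auxiliary equivalences by citing \cite{maietti_triposes_2017} for faithfulness and proving fullness by a direct computation (Proposition~\ref{prop:RUC iff Gamma full}) rather than via order-discreteness of maps, but your sketches of both are sound.
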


\section{Conclusion}\label{sec:conclusion}

We gave an exhaustive analysis of the relationship between two different categorifications of regular logic: the \emph{universal} approach of elementary existential doctrines, and the \emph{algebraic} approach of cartesian bicategories. We showed that cartesian bicategories give rise to elementary existential doctrines and, expanding a remark in \cite{maietti2015unifying}, that also the other direction is possible. We proved that this correspondence is functorial, in the sense that we have a pair of functors $\LA \colon \EED \to \CBC$ and $\RA \colon \CBC \to \EED$ which are moreover adjoint (Theorem \ref{thm:adjunctionNEW}).

This adjunction can be strengthened to an equivalence provided that we refine the notion of doctrine, excluding some problematic examples (e.g.\ Example \ref{ex:counter}). These cases lay outside the image of $\RA$ and thus this restriction does not affect cartesian bicategories (Theorem \ref{thm:EEDff and CBC are equivalent}).

We hope that understanding the relationship between $\CBC$ and $\EED$ may provide some hints on the nature of the additional algebraic structure needed for cartesian bicategories to capture full first order logic, which one can do on the $\EED$, universal side by considering Lawvere's original hyperdoctrines. It is probable that the end result will be closely related to Peirce's existential graphs~\cite{peirce2019_the-logic-of-the-future}, a 19th century proto-string-diagrammatic logical syntax. This direction has already started to be explored, from diverse perspectives, in~\cite{Brady2000,Brady2000a,Haydon2020}.

\bibliography{Bibliography}

\begin{thebibliography}{10}

\bibitem{BaezErbele-CategoriesInControl}
John Baez and Jason Erbele.
\newblock Categories in control.
\newblock {\em Theory and Application of Categories}, 30(24):836--881, 2015.
\newblock URL: \url{http://www.tac.mta.ca/tac/volumes/30/24/30-24abs.html}.

\bibitem{DBLP:journals/pacmpl/BonchiHPSZ19}
Filippo Bonchi, Joshua Holland, Robin Piedeleu, Pawel Sobocinski, and Fabio
  Zanasi.
\newblock Diagrammatic algebra: from linear to concurrent systems.
\newblock {\em Proc. {ACM} Program. Lang.}, 3({POPL}):25:1--25:28, 2019.
\newblock \href {https://doi.org/10.1145/3290338} {\path{doi:10.1145/3290338}}.

\bibitem{GCQ}
Filippo Bonchi, Jens Seeber, and Pawel Sobocinski.
\newblock {Graphical Conjunctive Queries}.
\newblock In Dan Ghica and Achim Jung, editors, {\em 27th EACSL Annual
  Conference on Computer Science Logic (CSL 2018)}, volume 119 of {\em Leibniz
  International Proceedings in Informatics (LIPIcs)}, pages 13:1--13:23,
  Dagstuhl, Germany, 2018. Schloss Dagstuhl--Leibniz-Zentrum fuer Informatik.
\newblock \href {https://doi.org/10.4230/LIPIcs.CSL.2018.13}
  {\path{doi:10.4230/LIPIcs.CSL.2018.13}}.

\bibitem{Bonchi2015}
Filippo Bonchi, Pawel Sobocinski, and Fabio Zanasi.
\newblock Full {Abstraction} for {Signal} {Flow} {Graphs}.
\newblock In {\em Proceedings of the 42nd {Annual} {ACM} {SIGPLAN}-{SIGACT}
  {Symposium} on {Principles} of {Programming} {Languages}}, {POPL} '15, pages
  515--526, New York, NY, USA, January 2015. Association for Computing
  Machinery.
\newblock \href {https://doi.org/10.1145/2676726.2676993}
  {\path{doi:10.1145/2676726.2676993}}.

\bibitem{DBLP:journals/jlp/BonchiSZ18}
Filippo Bonchi, Pawel Sobocinski, and Fabio Zanasi.
\newblock Deconstructing {L}awvere with distributive laws.
\newblock {\em Journal of Logical and Algebraic Methods in Programming},
  95:128--146, 2018.
\newblock \href {https://doi.org/10.1016/j.jlamp.2017.12.002}
  {\path{doi:10.1016/j.jlamp.2017.12.002}}.

\bibitem{Brady2000a}
Geraldine Brady and Todd Trimble.
\newblock A string diagram calculus for predicate logic and {C}. {S}.
  {P}eirce's system beta.
\newblock Unpublished, available online at
  \url{https://ncatlab.org/nlab/files/BradyTrimbleString.pdf}, 2000.

\bibitem{Brady2000}
Geraldine Brady and Todd~H. Trimble.
\newblock A categorical interpretation of {C}.{S}. {Peirce}'s propositional
  logic {Alpha}.
\newblock {\em Journal of Pure and Applied Algebra}, 149(3):213--239, June
  2000.
\newblock \href {https://doi.org/10.1016/S0022-4049(98)00179-0}
  {\path{doi:10.1016/S0022-4049(98)00179-0}}.

\bibitem{carboni1987cartesian}
Aurelio Carboni and Robert~F.C. Walters.
\newblock {Cartesian Bicategories I}.
\newblock {\em Journal of Pure and Applied Algebra}, 49(1):11--32, 1987.
\newblock \href {https://doi.org/https://doi.org/10.1016/0022-4049(87)90121-6}
  {\path{doi:https://doi.org/10.1016/0022-4049(87)90121-6}}.

\bibitem{coecke2011interacting}
Bob Coecke and Ross Duncan.
\newblock Interacting quantum observables: categorical algebra and
  diagrammatics.
\newblock {\em New Journal of Physics}, 13(4):043016, April 2011.
\newblock Publisher: IOP Publishing.
\newblock \href {https://doi.org/10.1088/1367-2630/13/4/043016}
  {\path{doi:10.1088/1367-2630/13/4/043016}}.

\bibitem{Liberti2021}
Ivan Di~Liberti, Fosco Loregian, Chad Nester, and Paweł Sobociński.
\newblock Functorial semantics for partial theories.
\newblock In {\em Proceedings of the {ACM} on {Programming} {Languages}},
  volume~5, pages 57:1--57:28, January 2021.
\newblock \href {https://doi.org/10.1145/3434338} {\path{doi:10.1145/3434338}}.

\bibitem{Fong2015}
Brendan Fong, Paweł Sobociński, and Paolo Rapisarda.
\newblock A categorical approach to open and interconnected dynamical systems.
\newblock In {\em Proceedings of the 31st {Annual} {ACM}/{IEEE} {Symposium} on
  {Logic} in {Computer} {Science}}, {LICS} '16, pages 495--504, New York, NY,
  USA, July 2016. Association for Computing Machinery.
\newblock \href {https://doi.org/10.1145/2933575.2934556}
  {\path{doi:10.1145/2933575.2934556}}.

\bibitem{DBLP:journals/corr/abs-2009-06836}
Brendan Fong and David Spivak.
\newblock String diagrams for regular logic (extended abstract).
\newblock In John Baez and Bob Coecke, editors, {\em Applied Category Theory
  2019}, volume 323 of {\em Electronic Proceedings in Theoretical Computer
  Science}, page 196–229. Open Publishing Association, Sep 2020.
\newblock \href {https://doi.org/10.4204/eptcs.323.14}
  {\path{doi:10.4204/eptcs.323.14}}.

\bibitem{fong2018graphical}
Brendan Fong and David~I Spivak.
\newblock Graphical regular logic, 2019.
\newblock \href {http://arxiv.org/abs/1812.05765} {\path{arXiv:1812.05765}}.

\bibitem{fox_coalgebras_1976}
Thomas Fox.
\newblock Coalgebras and cartesian categories.
\newblock {\em Communications in Algebra}, 4(7):665--667, 1976.
\newblock \href {https://doi.org/10.1080/00927877608822127}
  {\path{doi:10.1080/00927877608822127}}.

\bibitem{freyd1990categories}
Peter Freyd and Andre Scedrov.
\newblock {\em Categories, {Allegories}}, volume~39 of {\em North-{Holland}
  {Mathematical} {Library}}.
\newblock Elsevier B.V, 1990.
\newblock URL:
  \url{https://www.sciencedirect.com/bookseries/north-holland-mathematical-library/vol/39/suppl/C}.

\bibitem{Ghica2016}
Dan~R. Ghica and Achim Jung.
\newblock Categorical semantics of digital circuits.
\newblock In {\em 2016 {Formal} {Methods} in {Computer}-{Aided} {Design}
  ({FMCAD})}, pages 41--48, October 2016.
\newblock \href {https://doi.org/10.1109/FMCAD.2016.7886659}
  {\path{doi:10.1109/FMCAD.2016.7886659}}.

\bibitem{Haydon2020}
Nathan Haydon and Paweł Sobociński.
\newblock Compositional {Diagrammatic} {First}-{Order} {Logic}.
\newblock In {\em Diagrammatic {Representation} and {Inference}}, Lecture
  {Notes} in {Computer} {Science}, 11th International Conference, Diagrams
  2020, Tallinn, Estonia,, 2020. Springer International Publishing.
\newblock \href {https://doi.org/https://doi.org/10.1007/978-3-030-54249-8_32}
  {\path{doi:https://doi.org/10.1007/978-3-030-54249-8_32}}.

\bibitem{Lack2004a}
Stephen Lack.
\newblock Composing {PROPs}.
\newblock {\em Theory and Application of Categories}, 13(9):147--163, 2004.
\newblock URL: \url{http://www.tac.mta.ca/tac/volumes/13/9/13-09abs.html}.

\bibitem{lawvere1963functorial}
F.~William Lawvere.
\newblock Functorial semantics of algebraic theories.
\newblock {\em Proceedings of the National Academy of Sciences},
  50(5):869--872, 1963.

\bibitem{lawvere_adjointness_1969}
F.~William Lawvere.
\newblock Adjointness in {Foundations}.
\newblock {\em Dialectica}, 23(3/4):281--296, 1969.
\newblock Publisher: Wiley.
\newblock \href {https://doi.org/10.1111/j.1746-8361.1969.tb01194.x}
  {\path{doi:10.1111/j.1746-8361.1969.tb01194.x}}.

\bibitem{lawvere_diagonal_1969}
F.~William Lawvere.
\newblock Diagonal arguments and cartesian closed categories.
\newblock In {\em Category {Theory}, {Homology} {Theory} and their
  {Applications} {II}}, Lecture {Notes} in {Mathematics}, pages 134--145,
  Berlin, Heidelberg, 1969. Springer.
\newblock \href {https://doi.org/10.1007/BFb0080769}
  {\path{doi:10.1007/BFb0080769}}.

\bibitem{lawvere_equality_1970}
F.~William Lawvere.
\newblock Equality in hyperdoctrines and comprehension schema as an adjoint
  functor.
\newblock In Alex Heller, editor, {\em Applications of {Categorical}
  {Algebra}}, volume~17, pages 1--14, New York, NY, 1970. American Mathematical
  Society.
\newblock \href {https://doi.org/https://doi.org/10.1090/pspum/017}
  {\path{doi:https://doi.org/10.1090/pspum/017}}.

\bibitem{MacLane1965}
Saunders MacLane.
\newblock Categorical algebra.
\newblock {\em Bulletin of the American Mathematical Society}, 71(1):40--106,
  1965.
\newblock \href {https://doi.org/10.1090/S0002-9904-1965-11234-4}
  {\path{doi:10.1090/S0002-9904-1965-11234-4}}.

\bibitem{maietti_triposes_2017}
Maria~Emilia Maietti, Fabio Pasquali, and Giuseppe Rosolini.
\newblock Triposes, exact completions, and {Hilbert}'s ε-operator.
\newblock {\em Tbilisi Mathematical Journal}, 10(3):141--166, June 2017.
\newblock Publisher: Tbilisi Centre for Mathematical Sciences.
\newblock \href {https://doi.org/10.1515/tmj-2017-0106}
  {\path{doi:10.1515/tmj-2017-0106}}.

\bibitem{maietti_quotient_2013}
Maria~Emilia Maietti and Giuseppe Rosolini.
\newblock Quotient {Completion} for the {Foundation} of {Constructive}
  {Mathematics}.
\newblock {\em Logica Universalis}, 7(3):371--402, September 2013.
\newblock \href {https://doi.org/10.1007/s11787-013-0080-2}
  {\path{doi:10.1007/s11787-013-0080-2}}.

\bibitem{maietti2015unifying}
Maria~Emilia Maietti and Giuseppe Rosolini.
\newblock Unifying {Exact} {Completions}.
\newblock {\em Applied Categorical Structures}, 23(1):43--52, February 2015.
\newblock \href {https://doi.org/10.1007/s10485-013-9360-5}
  {\path{doi:10.1007/s10485-013-9360-5}}.

\bibitem{DBLP:conf/lics/MuroyaCG18}
Koko Muroya, Steven W.~T. Cheung, and Dan~R. Ghica.
\newblock The geometry of computation-graph abstraction.
\newblock In Anuj Dawar and Erich Gr{\"{a}}del, editors, {\em Proceedings of
  the 33rd Annual {ACM/IEEE} Symposium on Logic in Computer Science, {LICS}
  2018, Oxford, UK, July 09-12, 2018}, pages 749--758. {ACM}, 2018.
\newblock \href {https://doi.org/10.1145/3209108.3209127}
  {\path{doi:10.1145/3209108.3209127}}.

\bibitem{patterson2017knowledge}
Evan Patterson.
\newblock Knowledge representation in bicategories of relations, 2017.
\newblock \href {http://arxiv.org/abs/1706.00526} {\path{arXiv:1706.00526}}.

\bibitem{Piedeleu2021}
Robin Piedeleu and Fabio Zanasi.
\newblock A {String} {Diagrammatic} {Axiomatisation} of {Finite}-{State}
  {Automata}.
\newblock In Stefan Kiefer and Christine Tasson, editors, {\em Foundations of
  {Software} {Science} and {Computation} {Structures}}, Lecture {Notes} in
  {Computer} {Science}, pages 469--489, Cham, 2021. Springer International
  Publishing.
\newblock \href {https://doi.org/10.1007/978-3-030-71995-1_24}
  {\path{doi:10.1007/978-3-030-71995-1_24}}.

\bibitem{peirce2019_the-logic-of-the-future}
Ahti-Veikko Pietarinen.
\newblock {\em The Logic of the Future}, volume~1.
\newblock De Gruyter, 2019.

\bibitem{SeeberThesis}
Jens Seeber.
\newblock {\em Logical completeness for string diagrams}.
\newblock PhD thesis, IMT Lucca, 2020.

\bibitem{selinger2010survey}
P.~Selinger.
\newblock A {Survey} of {Graphical} {Languages} for {Monoidal} {Categories}.
\newblock In B.~Coecke, editor, {\em New {Structures} for {Physics}}, volume
  813 of {\em Lecture {Notes} in {Physics}}, pages 289--355. Springer, Berlin,
  Heidelberg, 2010.
\newblock \href {https://doi.org/10.1007/978-3-642-12821-9_4}
  {\path{doi:10.1007/978-3-642-12821-9_4}}.

\end{thebibliography}

\appendix
\section{Appendix to \S~\ref{sec:eed}}\label{appendix:eed}

In the following, we will write $\Delta_A$ or even $A \to AA$ for the morphism $\langle \id_A, \id_A \rangle$ in a cartesian category. Also, projections $\pi_i \colon A_1 \times \dots \times A_n \to A_i$ or pairings of them $\langle \pi_i, \pi_j \rangle \colon A_1 \times \dots \times A_n \to A_i \times A_j$ will simply be written as $A_1\dots A_n \to A_i$ or $A_1 \dots A_n \to A_i A_j$ respectively. For example, $\pi_2 \colon A \times B \times C \to B$ will be denoted as $ABC \to B$, and $\langle \pi_1,\pi_3 \rangle \colon A \times B \times C \to A\times C$ as $ABC \to AC$. In particular, the symmetry in a cartesian category $\sigma_{A,B}=\langle \pi_2, \pi_1 \rangle \colon A \times B \to B \times A$ will be written as $AB \to BA$.

\paragraph*{Details on the Elementary Existential Doctrine $\pow \colon \op\Set \to \InfSL$} The fact that $\pow \colon \op\Set \to \InfSL$ is an elementary existential doctrine is well-known and the computations to check it are rather easy. Anyway, we think that it might be useful for some readers to report such computations hereafter.

It is convenient to start with~\eqref{eq:elementary1}, even if this property is entailed by Definition \ref{def:eeh} (see Remark \ref{rmk:BC}).
Recall that, for $f \colon Y \to X$ in $\Set$, $\pow_f \colon \pow(X) \to \pow(Y)$ is defined as $\pow_f (Z) = \{ y \in Y \mid f(y) \in Z \}$, in particular for a set $A$,
\[
    \begin{tikzcd}[row sep=0em]
    \pow(A \times A) \ar[r,"\pow_{\Delta_A}"] & \pow(A) \\
    Q \ar[r,|->] & \{ a \in A \mid (a,a) \in Q \}
    \end{tikzcd}
\]
Then the function $\exists_{\Delta_A} \colon \pow(A) \to \pow(A \times A)$ defined as in~\eqref{eq:elementary1}, namely for all $B \subseteq A$
\begin{align*}
    \exists_{\Delta_A}(B) &= \pow_{\pi_1}(B) \land \delta_A\\
                          &= \{ (x,b) \mid b \in B \} \cap \{ (a,a) \mid a \in A \}\\
                          &= \{ (b,b) \mid b \in B \}
\end{align*}
is in fact the left adjoint of $\pow_{\Delta_A}$: indeed for $B \in \pow(A)$ and $Q \in \pow(A \times A)$
\[
\exists_{\Delta_A}(B) \subseteq Q \iff \forall b \in B \ldotp (b,b) \in Q \iff B \subseteq \pow_{\Delta_A}(Q).
\]
Similarly, for $e \colon X \times A \to X \times A \times A$ given by $e(x,a) = (x,a,a)$, the function $\exists_e \colon \pow(X \times A) \to \pow(X \times A \times A)$ defined as in~\eqref{eq:elementary2}:
\begin{align*}
    \exists_e(S) &= \pow_{\langle \pi_1, \pi_2 \rangle} (S) \land \pow_{\langle \pi_2, \pi3 \rangle} (\delta_A) \\
    &= \{ (x,a,b) \mid (x,a) \in S \} \cap \{ (x,a,b) \mid (a,b) \in \delta_A \} \\
    &= \{ (x,a,a) \mid (x,a) \in S \}
\end{align*}
is easily seen to be left adjoint of $\pow_e \colon \pow(X \times A \times A) \to \pow(X \times A)$. Moreover, for $\pi \colon X \times A \to A$, $S \subseteq X \times A$ and $B \subseteq A$,
\begin{align*}
\exists_\pi (S) \subseteq B &\iff \forall a \in A \ldotp \bigl( \exists x \in X \ldotp (x,a) \in S \implies a \in B  \bigr) \\
&\iff \forall (x,a) \in S \ldotp a \in B  \\
&\iff S \subseteq \pow_\pi(B)
\end{align*}
so $\exists_\pi$ is left adjoint to $\pow_\pi$.

Regarding the Beck-Chevalley condition: if $f \colon A' \to A$ and $\pi \colon X \times A \to A$, then the pullback of $f$ against $\pi$ is:
\[
\begin{tikzcd}
X \times A' \ar[r,"\pi'"] \ar[d,"\id_X \times f"'] & A' \ar[d,"f"] \\
X \times A \ar[r,"\pi"] & A
\end{tikzcd}
\]
Take $S \in \pow(X \times A)$. Then $\pow_{\id_X \times f}(S)=\{ (x,a') \mid (x,f(a')) \in S \}$, therefore
\begin{align*}
    \exists_{\pi'} (\pow_{\id_X \times f} (S)) &= \{ a' \in A' \mid \exists x \in X \ldotp (x,a') \in \pow_{\id_X \times f} (S) \} \\
    &= \{ a' \in A' \mid \exists x \in X \ldotp (x,f(a')) \in S \}
\end{align*}
while
\begin{align*}
    \pow_f (\exists_\pi (S)) &= \{ a' \in A' \mid f(a') \in \pow_f (\exists_\pi(S))\} \\
    &= \{ a' \in A' \mid \exists x \in X \ldotp (x,f(a')) \in S \}
\end{align*}
Hence the Beck-Chevalley condition is satisfied.
Finally, for $\pi \colon X \times A \to A$, $B \in \pow(A)$ and $S \in \pow(X \times A)$:
\begin{align*}
    \exists_\pi (\pow_\pi(B) \land S) &= \exists_\pi \bigl( \{ (x,b)  \in X \times A \mid b \in B \} \cap S  \bigr) \\
    &= \exists_\pi \bigl( \{(x,b) \in S \mid b \in B \} \bigr) \\
    &= \{ b \in B \mid \exists x \in X \ldotp (x,b) \in S \} \\
    &= B \cap \{ a \in A \mid \exists x \in X \ldotp (x,a) \in S \} \\
    &= B \land \exists_\pi(S)
\end{align*}
therefore also the Frobenius reciprocity is satisfied.

\paragraph*{Additional General Facts about Elementary Existential Doctrines}

We report a few well known, simple results that are going to be useful later on.

\begin{lemma}\label{lemma:delta_AxB}
	Let $P$ be an existential, elementary doctrine. Then $\delta_A = \exists_{\Delta_A}(\top)$. Moreover,
	\[
	\delta_{A \times B} = P_{ABAB \to AA} (\delta_A) \wedge P_{ABAB \to BB} (\delta_B)
	\]
	for all objects $A$ and $B$.
	\label{lem:DeltaExistsTop}
\end{lemma}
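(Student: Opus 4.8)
The plan is to verify both claims directly from the defining equations in Definition~\ref{def:eeh} and Remark~\ref{rmk:BC}, using only the adjunctions $\exists_{\Delta_A} \dashv P_{\Delta_A}$ and the formula~\eqref{eq:elementary1}.

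For the first claim, $\delta_A = \exists_{\Delta_A}(\top)$, I would simply instantiate~\eqref{eq:elementary1} at $\alpha = \top_{PA}$. Since $P_{\pi_1}$ is an inf-preserving (hence top-preserving) functor, $P_{\pi_1}(\top_{PA}) = \top_{P(A\times A)}$, so $\exists_{\Delta_A}(\top) = \top_{P(A\times A)} \wedge \delta_A = \delta_A$. This is immediate.

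For the second claim I would argue that $\beta := P_{ABAB \to AA}(\delta_A) \wedge P_{ABAB \to BB}(\delta_B)$ satisfies the same universal property as $\exists_{\Delta_{A\times B}}(\top)$, and then invoke the first claim (applied with $A\times B$ in place of $A$) to conclude $\delta_{A\times B} = \exists_{\Delta_{A\times B}}(\top) = \beta$. Concretely, by the adjunction $\exists_{\Delta_{A\times B}} \dashv P_{\Delta_{A\times B}}$ it suffices to check, for every $\gamma \in P((A\times B)\times(A\times B))$, that $\beta \le \gamma$ iff $\top \le P_{\Delta_{A\times B}}(\gamma)$. The reverse direction: $\top \le P_{\Delta_{A\times B}}(\gamma)$ gives $\exists_{\Delta_{A\times B}}(\top) \le \gamma$ by adjunction, and using the first claim this is $\delta_{A\times B}\le\gamma$; but the real content is to identify $\delta_{A\times B}$ with $\beta$, so I would instead run the argument the other way. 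A cleaner route: show $P_{\Delta_{A\times B}}(\beta) = \top$ and that $\beta$ is the \emph{least} element with a given reindexing property, matching~\eqref{eq:elementary1} for $A\times B$. Observe that the map $\Delta_{A\times B}\colon A\times B \to (A\times B)\times(A\times B)$ composed with the projection $(A\times B)\times(A\times B) \to A\times A$ (i.e.\ $ABAB\to AA$) equals $\Delta_A \circ \pi_1$, and similarly with $B$; hence $P_{\Delta_{A\times B}}(\beta) = P_{\pi_A}(P_{\Delta_A}(\delta_A)) \wedge P_{\pi_B}(P_{\Delta_B}(\delta_B))$, and $P_{\Delta_A}(\delta_A) = \top$ follows from the first claim together with the unit of $\exists_{\Delta_A}\dashv P_{\Delta_A}$ (or directly: $\top \le P_{\Delta_A}(\exists_{\Delta_A}(\top)) = P_{\Delta_A}(\delta_A)$). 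So $P_{\Delta_{A\times B}}(\beta) = \top$, which gives $\delta_{A\times B} = \exists_{\Delta_{A\times B}}(\top) \le \beta$. For the reverse inequality $\beta \le \delta_{A\times B}$, I would expand $\delta_{A\times B} = P_{\pi_1}(\top)\wedge\delta_{A\times B}$ via~\eqref{eq:elementary1} and compare with $\beta$ componentwise, using that $\delta_A$ is itself the bottom such element for the $A$-factor; alternatively, apply $\exists_{\Delta_{A\times B}}$ to suitable elements and use Frobenius reciprocity.

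The step I expect to be the main obstacle is the inequality $\beta \le \delta_{A\times B}$: the easy direction is $\delta_{A\times B}\le\beta$ (it only needs that the two projections $\Delta_{A\times B}$-reindex to $\top$), whereas the other direction requires genuinely using the minimality built into the elementary structure — I would get it by showing $\exists_{\Delta_{A\times B}}$ applied to $\top$, re-expressed through the Beck--Chevalley condition for the pullback squares relating $\Delta_{A\times B}$ to $\Delta_A\times\Delta_B$ up to symmetry, coincides with $\beta$. This bookkeeping with symmetries in the cartesian base (reshuffling $ABAB$ to $AABB$) is the fiddly part, but it is routine given Remark~\ref{rem:Beck-Chevalley pullback diagram simple}.
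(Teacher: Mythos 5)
Your proof of the first claim and of the inequality $\delta_{A\times B}\le\beta$, where $\beta := P_{ABAB \to AA}(\delta_A)\wedge P_{ABAB \to BB}(\delta_B)$, is correct: since $\Delta_{A\times B}$ followed by the projection $ABAB\to AA$ is $\Delta_A\circ\pi_A$, and $\top\le P_{\Delta_A}\exists_{\Delta_A}(\top)=P_{\Delta_A}(\delta_A)$, one gets $P_{\Delta_{A\times B}}(\beta)=\top$ and hence $\exists_{\Delta_{A\times B}}(\top)\le\beta$ by adjunction. The genuine gap is the converse inequality $\beta\le\delta_{A\times B}$, which you correctly identify as the main obstacle but never actually establish. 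The adjunction $\exists_{\Delta_{A\times B}}\dashv P_{\Delta_{A\times B}}$ cannot give it: it characterises the elements lying \emph{above} $\exists_{\Delta_{A\times B}}(\top)$, not those lying below, so the content of this half is exactly the explicit description of the left adjoint that makes the doctrine elementary. Moreover, the tools you propose for this half do not apply as stated: the Beck--Chevalley condition in Definition~\ref{def:eeh} (and Remark~\ref{rem:Beck-Chevalley pullback diagram simple}) is only assumed for pullbacks of a \emph{projection} along an arrow, so there is no licensed Beck--Chevalley square ``relating $\Delta_{A\times B}$ to $\Delta_A\times\Delta_B$''; Frobenius reciprocity is likewise only assumed for projections; and saying you would ``show that $\exists_{\Delta_{A\times B}}(\top)$ coincides with $\beta$'' is just a restatement of the goal.

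What the missing half actually needs is formula~\eqref{eq:elementary2} read in the other direction: a meet of the form $P_{\langle\pi_1,\pi_2\rangle}(\alpha)\wedge P_{\langle\pi_2,\pi_3\rangle}(\delta)$ \emph{is} the value of the left adjoint $\exists_{\id\times\Delta}$ at $\alpha$. This is how the paper argues: after reshuffling $ABAB$ to $AABB$ via the symmetry $s$ (whose reindexing $P_s$ is itself a left adjoint, namely of $P_{s^{-1}}$), one applies~\eqref{eq:elementary2} twice to rewrite $\beta=\exists_{AABB\to ABAB}\exists_{e}\exists_{e'}(\top)$ with $e=\id_{A\times A}\times\Delta_B$ and $e'=\Delta_A\times\id_B$, and then, since left adjoints compose and $e'\seq e\seq s^{-1}=\Delta_{A\times B}$, concludes $\beta=\exists_{\Delta_{A\times B}}(\top)=\delta_{A\times B}$ outright, with no antisymmetry argument needed. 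Your strategy can be repaired by inserting this computation for the hard inequality (at which point your easy inequality becomes redundant), but as written the proposal does not prove the second statement of the lemma, and the appeal to Beck--Chevalley there is a misdirection.
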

\begin{proof}
	By~\eqref{eq:elementary1}, we have \[\exists_{\Delta}(\top) = P_{\pi_1}(\top) \wedge \delta_A = \top \wedge \delta_A = \delta_A.\]
	As for the second part of the statement:
since the projection $ABAB\to AA$ coincides with the composite $ABAB \to AAB \to AA$ in $\C$, by functoriality of $P$ and the fact that $P_f$ preserves meets for any morphism $f$ in $\C$ we have
	\begin{align*}
	& P_{ABAB \to AA} (\delta_A) \wedge P_{ABAB \to BB} (\delta_B) \\
	&= P_{ABAB \to AABB} \Bigl( P_{AABB \to AAB} \bigl( P_{AAB \to AA} (\delta_A) \bigr) \wedge P_{AABB \to BB} (\delta_B) \Bigr) \\
	&= P_{ABAB \to AABB} \Bigl( \exists_e \bigl( P_{AAB \to AA} (\delta_A) \bigr) \Bigr)
	\end{align*}
	where $e = \id_A \times \id_A \times \Delta_B$, due to~\eqref{eq:elementary2}. Notice that the morphism $\id_A \times \sigma_{B,A} \times \id_B$ is an isomorphism, therefore $P_{ABAB \to AABB}$ is the left adjoint of its inverse. Thus we write $P_{ABAB \to AABB} = \exists_{AABB \to ABAB}$ and we have:
	\begin{align*}
	&\exists_{AABB \to ABAB} \exists_e \bigl( P_{AAB \to AA}(\delta_A) \wedge P_{AAB \to AB} (\top) \bigr) \\
	&= \exists_{AABB \to ABAB} \exists_e \exists_{e'}(\top)
	\end{align*}
	where $e' = \Delta_A \times \id_B$. Since left adjoints compose and ${e'} \seq e \seq (\id_A \times \sigma \times \id_B) = \Delta_{A \times B}$, we obtain
	\begin{align*}
	P_{ABAB \to AA} (\delta_A) \wedge P_{ABAB \to BB} (\delta_B) &= \exists_{\Delta_{A \times B}} (\top) \\
	&= \delta_{A \times B}. \qedhere	
	\end{align*}
\end{proof}

\begin{proposition}
	Let $P \from \op\Cat \to \Poset$ be any functor. Let
	\[
	\begin{tikzcd}
	A \arrow{r}{f} \arrow[swap]{d}{h} & B \arrow{d}{g} \\
	C \arrow{r}{k} & D
	\end{tikzcd}
	\]
	be a commutative diagram such that $P_f$ has a left-adjoint $\exists_f$ and likewise for $k$. Then
	\[
	\exists_f(P_h(\gamma)) \leq P_g(\exists_k(\gamma))
	\]
	for any $\gamma \in P(C)$.
	\label{prop:ExistsSquare}
\end{proposition}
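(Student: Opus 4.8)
The plan is to prove this purely formally, using only the adjunction transposes, monotonicity, and the contravariant functoriality of $P$; no pullback hypothesis is needed, and the square being merely commutative suffices. The key observation is that commutativity of the square, i.e.\ $k \circ h = g \circ f$, together with contravariance of $P$, gives the identity of monotone maps $P_h \circ P_k = P_f \circ P_g \colon P(D) \to P(A)$.

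First I would transpose the desired inequality across the adjunction $\exists_f \dashv P_f$. For $\gamma \in P(C)$ we have $P_h(\gamma) \in P(A)$, hence $\exists_f(P_h(\gamma)) \in P(B)$, and also $P_g(\exists_k(\gamma)) \in P(B)$, so the comparison lives in $P(B)$ and
\[
\exists_f(P_h(\gamma)) \leq P_g(\exists_k(\gamma)) \quad \Longleftrightarrow \quad P_h(\gamma) \leq P_f\bigl(P_g(\exists_k(\gamma))\bigr).
\]
Next I would rewrite the right-hand side using functoriality and commutativity: $P_f(P_g(\exists_k(\gamma))) = (P_f \circ P_g)(\exists_k(\gamma)) = (P_h \circ P_k)(\exists_k(\gamma)) = P_h\bigl(P_k(\exists_k(\gamma))\bigr)$. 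So it remains to show $P_h(\gamma) \leq P_h\bigl(P_k(\exists_k(\gamma))\bigr)$.

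Finally I would invoke the unit of the adjunction $\exists_k \dashv P_k$, which gives $\gamma \leq P_k(\exists_k(\gamma))$ in $P(C)$, and then apply the monotone map $P_h$ to both sides to obtain $P_h(\gamma) \leq P_h\bigl(P_k(\exists_k(\gamma))\bigr)$, as required. There is no real obstacle here; the only thing to be careful about is bookkeeping of variance — remembering that $P$ is contravariant, so that $P_{k \circ h} = P_h \circ P_k$ rather than the other order — and keeping straight in which poset ($P(A)$, $P(B)$, $P(C)$, or $P(D)$) each inequality is being asserted.
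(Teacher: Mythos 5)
Your proof is correct and follows exactly the same route as the paper's: transpose the inequality across the adjunction $\exists_f \dashv P_f$, use $P_f \circ P_g = P_h \circ P_k$ (commutativity plus contravariance), and conclude from the unit $\gamma \leq P_k(\exists_k(\gamma))$ together with monotonicity of $P_h$. No issues.
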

\begin{proof}
	Let $\gamma \in P(C)$. Then $\exists_f(P_h(\gamma)) \leq P_g(\exists_k(\gamma))$ if and only if
	\[
	P_h(\gamma) \leq P_f \bigl(P_g(\exists_k(\gamma))\bigr) = P_h \bigl(P_k(\exists_k(\gamma))\bigr)
	\]
	which is always true because $\gamma \leq P_k (\exists_k(\gamma))$ and $P_h$ is monotone.
\end{proof}

Combining Lemma~\ref{lem:DeltaExistsTop} and Proposition~\ref{prop:ExistsSquare} we obtain the following simple result.

\begin{corollary}\label{cor:P_!(delta_I)=top}
	Let $P \colon \op\Cat \to \InfSL$ be an existential elementary doctrine and $I$ be a terminal object in $\Cat$. Then for all $X$
	\[
	P_{{}! \colon X \to I \times I} (\delta_I) = \top_{P(X)}.
	\]
\end{corollary}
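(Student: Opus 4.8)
We must show $P_{! \colon X \to I \times I}(\delta_I) = \top_{P(X)}$ for every $X$, where $I$ is terminal in $\C$ and $P \colon \op\C \to \InfSL$ is an existential elementary doctrine.

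\medskip

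The plan is to reduce the claim at a general $X$ to the single case $X = I$, and then dispatch that base case using Lemma~\ref{lem:DeltaExistsTop} together with Proposition~\ref{prop:ExistsSquare}. First I would observe that the morphism $! \colon X \to I \times I$ factors as $X \xrightarrow{!_X} I \xrightarrow{\Delta_I} I \times I$, since $I$ is terminal and so $I \times I$ receives exactly one map from any object; in particular the map $! \colon I \to I \times I$ in the statement is necessarily $\Delta_I$ itself (again by terminality: $\Delta_I$ and $!$ both being the unique arrow $I \to I\times I$). By functoriality of $P$, $P_{!\colon X \to I\times I}(\delta_I) = P_{!_X}\bigl(P_{\Delta_I}(\delta_I)\bigr)$, so it suffices to prove $P_{\Delta_I}(\delta_I) = \top_{P(I)}$, because $P_{!_X}$ preserves $\top$ (it is a morphism in $\InfSL$).

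\medskip

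For the base case $P_{\Delta_I}(\delta_I) = \top_{P(I)}$, the inequality $P_{\Delta_I}(\delta_I) \le \top_{P(I)}$ is automatic. For the reverse, I would use that $\delta_I = \exists_{\Delta_I}(\top)$ from Lemma~\ref{lem:DeltaExistsTop}, so $P_{\Delta_I}(\delta_I) = P_{\Delta_I}\bigl(\exists_{\Delta_I}(\top_{P(I)})\bigr) \ge \top_{P(I)}$, using the unit of the adjunction $\exists_{\Delta_I} \dashv P_{\Delta_I}$ (which says $\mathrm{id} \le P_{\Delta_I}\circ\exists_{\Delta_I}$). Alternatively, and perhaps more in the spirit of the stated hint about combining the two results, one applies Proposition~\ref{prop:ExistsSquare} to the commuting square
\[
\begin{tikzcd}
I \arrow{r}{\Delta_I} \arrow[swap]{d}{\id_I} & I\times I \arrow{d}{\id_{I\times I}} \\
I \arrow{r}{\Delta_I} & I\times I
\end{tikzcd}
\]
with $\gamma = \top_{P(I)}$, which gives $\exists_{\Delta_I}(\top) \le P_{\id}(\exists_{\Delta_I}(\top)) = \exists_{\Delta_I}(\top)$ — not directly what we want — so the cleaner route is the unit-of-adjunction argument above, possibly invoking Proposition~\ref{prop:ExistsSquare} only to streamline the bookkeeping around $\exists_{\Delta_I}(\top)=\delta_I$.

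\medskip

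I do not expect a serious obstacle here: the only subtlety is the bookkeeping about which arrow $X \to I\times I$ is meant and the observation that it factors through $\Delta_I$ via the terminal $I$; once that is pinned down the result is a one-line consequence of $\delta_I = \exists_{\Delta_I}(\top)$ and the unit inequality of the elementary adjunction. If one wants to avoid even mentioning $\exists_{\Delta_I}$ as a named left adjoint, note $\exists_{\Delta_I}$ exists by Remark~\ref{rmk:BC} (taking $X$ there to be terminal), so the argument stays within the hypotheses of an existential elementary doctrine with terminal object.
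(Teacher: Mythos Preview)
Your argument is correct and amounts to the same combination the paper hints at: use $\delta_I = \exists_{\Delta_I}(\top)$ from Lemma~\ref{lem:DeltaExistsTop}, then the unit inequality $\top \le P_{\Delta_I}\exists_{\Delta_I}(\top)$, then pull back along $!_X$. If you want a non-trivial application of Proposition~\ref{prop:ExistsSquare} (rather than the identity square you tried), take the square with $\id_X$ on top, $!\colon X\to I$ and $!\colon X\to I\times I$ as the verticals, and $\Delta_I$ on the bottom---it commutes since $I\times I$ is terminal---and set $\gamma=\top_{P(I)}$: this yields $\top_{P(X)}=\exists_{\id_X}\bigl(P_{!}(\top)\bigr)\le P_{!}\bigl(\exists_{\Delta_I}(\top)\bigr)=P_{!}(\delta_I)$ in one step, which is presumably the combination the paper has in mind.
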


\section{Appendix to \S~\ref{sec: from EED to CBC}}\label{appendix: EED to CBC}

Here we show that the graph functor $\Gamma_P$ of an elementary existential doctrine $P \colon \op\C \to \InfSL$ is indeed functorial (Lemma~\ref{lem:Functor}) and that $\Gamma_P(f)$ has a right adjoint in $\Bicat_P$ for every arrow $f$ in $\C$ (Lemma~\ref{lem:Gamma(f) has a right adjoint}).

\paragraph*{Proof of Proposition~\ref{prop:GammaP(f) has right adjoint}}

\begin{proposition}\label{prop:op}
    The symmetry $\sigma_{A,B} \colon A \times B \to B \times A$ of the cartesian product on $\C$ induces a bijection
    \[
    \begin{tikzcd}[row sep=0em]
    \Hom_{\Bicat_P}(X , Y) \ar[r,"\op\bullet"] & \Hom_{\Bicat_P}(Y,X) \\
    f \ar[r,|->] & P_{\sigma_{Y,X}}(f)
    \end{tikzcd}
    \]
    This is contravariant with respect to composition in $\Bicat_P$, meaning that
	\[
	\op{g} \seq \op{f} = \op{(f \seq g)} \quad \text{and} \quad \op{\delta_X} = \delta_X.
	\]
\end{proposition}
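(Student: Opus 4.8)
The plan is to verify the three claims in turn: that $\op\bullet$ is a bijection, that it is contravariant with respect to composition, and that it fixes $\delta_X$. For the bijection, recall that $\Hom_{\Bicat_P}(X,Y) = P(X\times Y)$ and $\Hom_{\Bicat_P}(Y,X) = P(Y\times X)$, while $\sigma_{Y,X}\colon Y\times X \to X\times Y$ is an isomorphism in $\C$. Since $P$ is a functor into $\InfSL$, it sends isomorphisms to isomorphisms, so $P_{\sigma_{Y,X}}\colon P(X\times Y)\to P(Y\times X)$ is a bijection whose inverse is $P_{\sigma_{X,Y}} = P_{\sigma_{Y,X}^{-1}}$. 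That $\op{\op{f}} = f$ follows immediately from $\sigma_{X,Y}\circ\sigma_{Y,X} = \id_{X\times Y}$ and functoriality of $P$. So this first part is essentially a one-line observation.

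For $\op{\delta_X}=\delta_X$, the key point is that $\sigma_{X,X}\colon X\times X\to X\times X$, when composed with $\Delta_X\colon X\to X\times X$, gives back $\Delta_X$, because both components of the pair are $\id_X$. By Lemma~\ref{lemma:delta_AxB} (or directly Remark~\ref{rmk:BC}/\eqref{eq:elementary1}) we have $\delta_X = \exists_{\Delta_X}(\top)$; alternatively one can argue directly from the defining adjunction. From $\Delta_X = \sigma_{X,X}\circ\Delta_X$ and functoriality, $\delta_X = \exists_{\Delta_X}(\top)$, and since $\exists_{\sigma_{X,X}\circ\Delta_X}$ factors appropriately — here it is cleanest to use $\op{\delta_X} = P_{\sigma_{X,X}}(\delta_X)$ and the identity $P_{\sigma_{X,X}}\circ P_{\sigma_{X,X}} = \id$, combined with the fact that $\delta_X$ is symmetric, which is a standard property of the fibred equality predicate. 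I would prove symmetry of $\delta_X$ either by appealing to the corresponding well-known fact in the literature or by a short direct computation: $P_{\sigma_{X,X}}(\delta_X) = P_{\sigma_{X,X}}(\exists_{\Delta_X}(\top))$, and since $\sigma_{X,X}\circ\Delta_X = \Delta_X$, Beck–Chevalley applied to the (trivial) pullback square relating $\Delta_X$ and $\sigma_{X,X}$, or simply naturality, yields $\exists_{\Delta_X}(\top)=\delta_X$.

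The contravariance $\op{g}\seq\op{f} = \op{(f\seq g)}$ is the one step requiring genuine (if routine) computation, and is the main obstacle. Unwinding the definition of composition in $\Bicat_P$ for $f\in P(X\times Y)$, $g\in P(Y\times Z)$: $f\seq g = \exists_{\pi_Y}\bigl(P_{\pi_Z}(f)\wedge P_{\pi_X}(g)\bigr)$, an element of $P(X\times Z)$, using the projections out of $X\times Y\times Z$. Then $\op{(f\seq g)} = P_{\sigma_{Z,X}}(f\seq g)\in P(Z\times X)$. On the other side, $\op{g}\in P(Z\times Y)$, $\op{f}\in P(Y\times X)$, and $\op{g}\seq\op{f} = \exists_{\pi_Y'}\bigl(P_{\pi_X'}(\op{g})\wedge P_{\pi_Z'}(\op{f})\bigr)$ using projections out of $Z\times Y\times X$. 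The strategy is to transport the first expression along the evident isomorphism $\theta\colon Z\times Y\times X \to X\times Y\times Z$ (the ``reversal'' permutation) in $\C$, observing that $\theta$ intertwines the two triples of projections up to the symmetries $\sigma$, and that $\exists$ commutes with reindexing along isomorphisms (again Beck–Chevalley, or directly: $P_\theta\circ\exists_{\pi_Y} = \exists_{\pi_Y'}\circ P_{\theta'}$ where $\theta'$ is the induced iso on the smaller objects). Since $P_\theta$, $P_{\pi_Z}$, etc., all preserve binary meets ($\InfSL$-morphisms), pushing $P_\theta$ through the meet and matching up $P_\theta\circ P_{\pi_Z} = P_{\pi_X'}\circ P_{\sigma}$ with the corresponding terms on the right-hand side finishes the identification. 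The bookkeeping of which permutation sends which projection to which is the only delicacy; everything else is functoriality, preservation of meets, and Beck–Chevalley for isomorphism squares.
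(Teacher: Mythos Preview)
Your proposal is correct and follows essentially the same approach as the paper. The paper's proof focuses almost entirely on the contravariance claim and uses exactly the strategy you describe: it identifies the reversal permutation $\langle\pi_3,\pi_2,\pi_1\rangle\colon Z\times Y\times X\to X\times Y\times Z$, applies Beck--Chevalley to the pullback square with $\sigma_{Z,X}$ on the right, and matches up the projections; the bijection and $\op{\delta_X}=\delta_X$ are left implicit (as they are immediate), whereas you spell them out in more detail.
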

\begin{proof}
	For $f \colon X \to Y$, we define $\op f = P_{\sigma_{Y,X}}(f)$, where $\sigma_{Y,X} \colon Y \times X \to X \times Y$ is the symmetry of $\Cat$. 
	
	Consider also $g \colon Y \to Z$. Then
	\[
	\op{(f \seq g)} = P_{\sigma_{Z,X}} \Bigl( \exists_{XYZ \to XZ} \bigl(P_{XYZ\to XY}(f) \wedge P_{XYZ \to YZ }(g)\bigr)\Bigr)
	\]
	while
	\begin{align*}
	\op g \seq \op f &= \exists_{ZYX \to ZX} \Bigl( P_{ZYX \to ZY} \bigl(P_{\sigma_{Z,Y}}(g)\bigr) \wedge P_{ZYX \to YX} \bigl(P_{\sigma_{Y,X}} (f)\bigr)\Bigr) \\
	&= \exists_{ZYX \to ZX} \Bigl( P_{ZYX \to XYZ} \bigl( P_{XYZ \to YZ}(g) \wedge P_{XYZ \to XY} (f) \bigr)  \Bigr)
	\end{align*}
	which is equal to $\op{(f \seq g)}$ thanks to the Beck-Chevalley condition applied to the pullback:
	\[
	\begin{tikzcd}
	Z \times Y \times X \ar[r,"{\langle \pi_1,\pi_3 \rangle}" ] \ar[d,"{\langle \pi_3, \pi_2, \pi_1 \rangle}"'] & Z \times X \ar[d,"{\sigma_{Z,X}}"] \\
	X \times Y \times Z \ar[r,"{\langle \pi_1,\pi_3 \rangle}"] & X \times Z
	\end{tikzcd} \qedhere
	\]
\end{proof}

\begin{lemma}\label{lem:Functor}
	Let $P \from \op\Cat \to \InfSL$ be an elementary existential doctrine.
	There is a functor $\Gamma_P \from \Cat \to \Bicat_P$, called the \emph{graph functor of $P$}, defined to be the identity on objects and on morphisms defined by sending $f \from X \to Y$ to
	\[\Gamma_P(f) = P_{f \times \id_Y}(\delta_Y) \in P(X \times Y) = \Hom_{\Bicat_P}(X,Y).\]
\end{lemma}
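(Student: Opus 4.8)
The plan is to verify the two functor laws for $\Gamma_P \colon \C \to \Bicat_P$. Preservation of identities is immediate: $\Gamma_P(\id_X) = P_{\id_X \times \id_X}(\delta_X) = P_{\id_{X \times X}}(\delta_X) = \delta_X$ by functoriality of $P$, and $\delta_X$ is by definition the identity on $X$ in $\Bicat_P$. So the only real content is preservation of composition.

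Fix $f \colon X \to Y$ and $g \colon Y \to Z$; recalling that $f \seq g$ denotes $g \circ f$, I must show $\Gamma_P(f) \seq \Gamma_P(g) = \Gamma_P(g \circ f)$. Unfold the composition of $\Bicat_P$ over $X \times Y \times Z$, with $\pi_Z \colon XYZ \to XY$, $\pi_Y \colon XYZ \to XZ$, $\pi_X \colon XYZ \to YZ$ the projections appearing in the definition of $\seq$, and $p_X, p_Y, p_Z$ the three coordinate projections of $XYZ$:
\[
\Gamma_P(f) \seq \Gamma_P(g) = \exists_{\pi_Y}\bigl( P_{\pi_Z}(\Gamma_P(f)) \wedge P_{\pi_X}(\Gamma_P(g)) \bigr).
\]
By functoriality of $P$, the first conjunct is $P_{\pi_Z}(P_{f \times \id_Y}(\delta_Y)) = P_{\langle f p_X, p_Y \rangle}(\delta_Y) \in P(XYZ)$ — intuitively ``$f(x) = y$'' — and the second is $P_{\langle g p_Y, p_Z \rangle}(\delta_Z)$ — ``$g(y) = z$''.

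The heart of the argument is the \emph{substitution (Leibniz) property} of the equality element: whenever $\delta_A$ forces $u = v$ for $u, v \colon W \to A$, one may replace $u$ by $v$ inside a reindexing, i.e.\ $P_{\langle u, v \rangle}(\delta_A) \wedge P_{\langle v, k \rangle}(\beta) = P_{\langle u, v \rangle}(\delta_A) \wedge P_{\langle u, k \rangle}(\beta)$. This is standard for elementary doctrines and follows from $\delta_A = \exists_{\Delta_A}(\top)$ (Lemma~\ref{lemma:delta_AxB}), symmetry of $\delta_A$, and~\eqref{eq:elementary1}/Remark~\ref{rmk:BC}. Applying it with $u = f p_X$, $v = p_Y$ to rewrite the ``$y$'' in ``$g(y) = z$'', the meet above becomes $P_{\langle f p_X, p_Y \rangle}(\delta_Y) \wedge P_{\langle g f p_X, p_Z \rangle}(\delta_Z)$; the second factor no longer mentions $p_Y$, so it equals $P_{\pi_Y}\bigl(\Gamma_P(g \circ f)\bigr)$. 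Frobenius reciprocity for $\exists_{\pi_Y}$ then gives
\[
\Gamma_P(f) \seq \Gamma_P(g) = \Gamma_P(g \circ f) \wedge \exists_{\pi_Y}\bigl( P_{\langle f p_X, p_Y \rangle}(\delta_Y) \bigr).
\]
Finally the last existential is $\top$ (``$\exists y.\, f(x) = y$''): the map $s \colon X \times Z \to X \times Y \times Z$, $(x,z) \mapsto (x, f(x), z)$, is a section of $\pi_Y$, and $P_s\bigl(P_{\langle f p_X, p_Y \rangle}(\delta_Y)\bigr)$ is a reindexing of $\delta_Y$ along a diagonal, hence $\top$ (since $\delta_Y = \exists_{\Delta_Y}(\top)$ gives $\top \le P_{\Delta_Y}(\delta_Y)$); then $\top = P_s(\,\cdot\,) \le P_s P_{\pi_Y}\bigl(\exists_{\pi_Y}(\,\cdot\,)\bigr) = \exists_{\pi_Y}(\,\cdot\,) \le \top$ using the unit of $\exists_{\pi_Y} \dashv P_{\pi_Y}$ and $\pi_Y \circ s = \id$. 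Thus $\Gamma_P(f) \seq \Gamma_P(g) = \Gamma_P(g \circ f) = \Gamma_P(f \seq g)$.

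I expect the main obstacle to be bookkeeping rather than anything conceptual: correctly tracking which projection each $P_{(-)}$ reindexes along, verifying that $s$ genuinely splits $\pi_Y$, and — if one expands the proof of the Leibniz property instead of citing it — checking that the auxiliary squares invoked are pullbacks of projections of the shape covered by Definition~\ref{def:eeh} and Remark~\ref{rem:Beck-Chevalley pullback diagram simple}, possibly after conjugating by symmetries of $\C$. Beyond that, nothing more than the two adjunctions, Frobenius reciprocity, and Lemma~\ref{lemma:delta_AxB} is required.
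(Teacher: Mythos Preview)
Your proof is correct but follows a genuinely different route from the paper's. The paper factors both conjuncts through $P_{f \times \id_Y \times \id_Z}$, recognises the resulting meet as an instance of the elementary axiom~\eqref{eq:elementary2} (namely $\exists_e$ with $e = \Delta_Y \times \id_Z$), then applies Beck--Chevalley to the pullback of $f \times \id_Z$ along a projection so as to commute $\exists_{\pi_Y}$ past $P_{f \times \id}$, and finishes by observing that the composite $\exists_\rho \circ \exists_e$ is the identity. You instead invoke the Leibniz substitution property of $\delta$ to rewrite the ``$g(y)=z$'' conjunct as ``$g(f(x))=z$'', pull this factor out via Frobenius reciprocity, and kill the residual existential $\exists_{\pi_Y}\bigl(P_{\langle f p_X, p_Y\rangle}(\delta_Y)\bigr)$ with a section argument. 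Your approach mirrors more transparently the informal equivalence $\exists y.\,(f(x)=y \wedge g(y)=z) \Leftrightarrow g(f(x))=z$ and avoids Beck--Chevalley entirely; the paper's approach stays closer to the raw axioms and does not rely on the Leibniz lemma, which, while standard for elementary doctrines, is not stated in the paper and would need the short derivation from~\eqref{eq:elementary2} that you sketch. Either argument is fine; the bookkeeping you anticipate is indeed the only delicate part.
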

\begin{proof}
Firstly $\Gamma_P$ preserves identities because $\Gamma_P(\id_X) = P_{\id_X \times \id_X}(\delta_X) = \delta_X$. Therefore it remains to show that $\Gamma_P$ preserves
	composition. Let $f \from X \to Y$ and $g \from Y \to Z$ be morphisms in $\Cat$.
	Consider the projections
	\[
	\begin{tikzcd}
	& X \times Y \times Z \arrow[swap]{dl}{\pi_Z} \arrow{d}{\pi_Y} \arrow{dr}{\pi_X} & \\
	X \times Y & X \times Z & Y \times Z
	\end{tikzcd}
	\]
	Then
	\[
	\Gamma_P(f) \seq \Gamma_P(g) = \exists_{\pi_Y}(P_{\pi_Z}(P_{f \times \id_Y}(\delta_Y)) \wedge P_{\pi_X}(P_{g \times \id_Z}(\delta_Z))).
	\]
	Now the diagram
	\[
	\begin{tikzcd}[sep=huge]
	X \times Y \times Z \arrow[swap]{d}{\pi_Z} \arrow{r}{f \times \id_Y \times \id_Z} & Y \times Y \times Z \arrow{d}{\pi} \\
	X \times Y \arrow{r}{f \times \id_Y} & Y \times Y
	\end{tikzcd}
	\]
	commutes and therefore
	\[
	P_{\pi_Z}(P_{f \times \id_Y}(\delta_Y)) = P_{f \times \id_Y \times \id_Z}(P_{\pi}(\delta_Y))
	\]
	and likewise
	\[
	\begin{tikzcd}[sep=huge]
	X \times Y \times Z \arrow[swap]{dr}{\pi_X} \arrow{r}{f \times \id_Y \times \id_Z} & Y \times Y \times Z \arrow{d}{\pi'} \\
	& Y \times Z \\
	\end{tikzcd}
	\]
	commutes, where $\pi'$ projects onto the second and third component.
	Therefore we also have
	\[
	P_{\pi_X}(P_{g \times \id_Z}(\delta_Z))) = P_{f \times \id_Y \times \id_Z}(P_{\pi'}(P_{g \times \id_Z}(\delta_Y))).
	\]
	This gives us
	\[
	\Gamma_P(f) \seq \Gamma_P(g) = \exists_{\pi_Y}( P_{f \times \id_Y \times \id_Z}(P_{\pi}(\delta_Y) \wedge P_{\pi'}(P_{g \times \id_Z}(\delta_Y))).
	\]
	Let $e = \Delta_Y \times \id_Z \from Y \times Z \to Y \times Y \times Z$, then by~\eqref{eq:elementary2}, we have
	\[
	\Gamma_P(f) \seq \Gamma_P(g) = \exists_{\pi_Y}( P_{f \times \id_Y \times \id_Z}(\exists_e(P_{g \times \id_Z}(\delta_Z)))).
	\]
	Using the Beck-Chevalley condition with the square
	\[
	\begin{tikzcd}
	X \times Y \times Z \arrow{r}{\pi_Y} \arrow[swap]{d}{f \times \id_Y \times \id_Z} & X \times Z \arrow{d}{f \times \id_Z} \\
	Y \times Y \times Z \arrow{r}{\rho} & Y \times Z
	\end{tikzcd}
	\]
	where $\rho$ projects onto the first and third component, this becomes
	\[
	\Gamma_P(f) \seq \Gamma_P(g) = P_{f \times \id_Z}(\exists_{\rho}(\exists_e(P_{g \times \id_Z}(\delta_Z))))
	\]
	Now since $e \seq \rho = \id$ and by uniqueness of adjoints, this becomes
	\[
	\Gamma_P(f) \seq \Gamma_P(g) = P_{f \times \id_Z}(P_{g \times \id_Z}(\delta_Z)) = P_{(f \seq g) \times \id_Z}(\delta_Z) = \Gamma_P(f \seq g). \qedhere
	\]
\end{proof}

\begin{remark}\label{rem:F(f times g) = Ff otimes Fg}
	It follow immediately from the definitions of $\Gamma_P$ and of tensor product in $\Bicat_P$ that for any $f,g$ in $\C$, $\Gamma_P(f \times g) = \Gamma_P(g) \otimes \Gamma_P(g)$.
\end{remark}

\begin{lemma}\label{lem:Gamma(f) has a right adjoint}
	Let $P \from \op\Cat \to \InfSL$ be an elementary, existential doctrine, $\Gamma_P \from \Cat \to \Bicat_P$ the functor from Lemma~\ref{lem:Functor}.
	For any morphism $f \from X \to Y$ in $\Cat$, $\op{\Gamma_P(f)}$ is right-adjoint to $\Gamma_P(f)$, that is
	\[
	\delta_X \leq \Gamma_P(f) \seq \op{\Gamma_P(f)} \quad \text{and} \quad \op{\Gamma_P(f)} \seq \Gamma_P(f) \leq \delta_Y.
	\]
\end{lemma}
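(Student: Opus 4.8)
The plan is to prove the two inequalities separately. In each case I would first unfold the composite into an expression in $P$, using the definition of composition in $\Bicat_P$, the description $\op r = P_{\sigma}(r)$ of the converse (Proposition~\ref{prop:op}), and $\Gamma_P(f) = P_{f \times \id_Y}(\delta_Y)$; then I would transpose along a suitable adjunction $\exists \dashv P$ so that what remains is a chain of Beck--Chevalley rewrites and elementary identities about $\delta$. Throughout, $\langle \pi_i, \ldots \rangle$ denotes a reindexing map between products of $X$'s and $Y$'s in the evident way.

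\emph{First inequality, $\delta_X \le \Gamma_P(f) \seq \op{\Gamma_P(f)}$.} Unfolding gives $\Gamma_P(f) \seq \op{\Gamma_P(f)} = \exists_{\pi}( P_{\langle f\pi_1, \pi_2\rangle}(\delta_Y) \wedge P_{\langle f\pi_3, \pi_2\rangle}(\delta_Y) )$, where $\pi \colon X \times Y \times X \to X \times X$ forgets the middle factor. Since $\delta_X = \exists_{\Delta_X}(\top)$ by Lemma~\ref{lem:DeltaExistsTop}, transposing along $\exists_{\Delta_X} \dashv P_{\Delta_X}$ (Remark~\ref{rmk:BC}) reduces the claim to $\top \le P_{\Delta_X}( \Gamma_P(f) \seq \op{\Gamma_P(f)} )$. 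Now $\Delta_X$ and $\pi$ fit into a pullback square with apex $X \times Y$ and legs $p \colon (x,y) \mapsto (x,y,x)$ and $(x,y) \mapsto x$, so Beck--Chevalley rewrites the right-hand side as $\exists_{\pi_1}( P_{p}(P_{\langle f\pi_1,\pi_2\rangle}(\delta_Y) \wedge P_{\langle f\pi_3,\pi_2\rangle}(\delta_Y)) )$; since $\langle f\pi_1,\pi_2\rangle \circ p = \langle f\pi_3,\pi_2\rangle \circ p = f \times \id_Y$, the argument of $\exists_{\pi_1}$ collapses to $P_{f \times \id_Y}(\delta_Y) = \Gamma_P(f)$. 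Finally $\top \le \exists_{\pi_1}(\Gamma_P(f))$ follows from the unit $\Gamma_P(f) \le P_{\pi_1}(\exists_{\pi_1}(\Gamma_P(f)))$ by applying $P_{\langle \id_X, f\rangle}$, using $P_{\langle \id_X, f\rangle}(\Gamma_P(f)) = P_{\Delta_Y \circ f}(\delta_Y) = P_f(P_{\Delta_Y}(\delta_Y)) = P_f(\top) = \top$, where $P_{\Delta_Y}(\delta_Y) = \top$ is reflexivity of $\delta_Y$ (again the unit of $\exists_{\Delta_Y} \dashv P_{\Delta_Y}$).

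\emph{Second inequality, $\op{\Gamma_P(f)} \seq \Gamma_P(f) \le \delta_Y$.} Here unfolding gives $\op{\Gamma_P(f)} \seq \Gamma_P(f) = \exists_{\pi}( P_{\langle f\pi_2,\pi_1\rangle}(\delta_Y) \wedge P_{\langle f\pi_2,\pi_3\rangle}(\delta_Y) )$ with $\pi \colon Y\times X\times Y \to Y\times Y$ forgetting the middle factor, and transposing along $\exists_\pi \dashv P_\pi$ reduces the claim to $P_{\langle f\pi_2,\pi_1\rangle}(\delta_Y) \wedge P_{\langle f\pi_2,\pi_3\rangle}(\delta_Y) \le P_{\langle \pi_1,\pi_3\rangle}(\delta_Y)$. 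All three reindexings factor through $g \colon (y,x,y') \mapsto (f(x),y,y')$, so by monotonicity of $P_g$ it suffices to prove the transitivity inequality $P_{\langle\pi_1,\pi_2\rangle}(\delta_Y) \wedge P_{\langle\pi_1,\pi_3\rangle}(\delta_Y) \le P_{\langle\pi_2,\pi_3\rangle}(\delta_Y)$ in $P(Y\times Y\times Y)$. For this I would use that, for $e = \id_Y \times \Delta_Y$, formula~\eqref{eq:elementary2} gives $\exists_e(\delta_Y) = P_{\langle\pi_1,\pi_2\rangle}(\delta_Y) \wedge P_{\langle\pi_2,\pi_3\rangle}(\delta_Y)$; transposing $\exists_e(\delta_Y) \le P_{\langle\pi_1,\pi_3\rangle}(\delta_Y)$ along $\exists_e \dashv P_e$ turns it into $\delta_Y \le P_{\langle\pi_1,\pi_3\rangle \circ e}(\delta_Y) = P_{\id}(\delta_Y)$, which is trivial; reindexing the resulting inequality $P_{\langle\pi_1,\pi_2\rangle}(\delta_Y) \wedge P_{\langle\pi_2,\pi_3\rangle}(\delta_Y) \le P_{\langle\pi_1,\pi_3\rangle}(\delta_Y)$ along the symmetry swapping the first two coordinates, and using $\op{\delta_Y} = \delta_Y$ (Proposition~\ref{prop:op}) to identify the swapped factor, produces exactly the inequality required.

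I expect the bookkeeping of the reindexing maps in the two unfoldings to be the most delicate part, and the only genuinely conceptual ingredient to be the transitivity of the equality predicate $\delta_Y$ used in the second inequality; everything else is a mechanical sequence of adjoint transpositions, Beck--Chevalley rewrites, and the identity $P_{\Delta}(\delta) = \top$.
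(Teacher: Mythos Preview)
Your proof is correct; the reindexing bookkeeping checks out in both halves. It does, however, take a genuinely different route from the paper's. The paper first simplifies each composite to a closed form: using~\eqref{eq:elementary2} to compress the meet into a single $\exists_e$ and then Beck--Chevalley, it obtains $\Gamma_P(f)\seq\op{\Gamma_P(f)} = P_{f\times f}(\delta_Y)$ exactly, and similarly rewrites $\op{\Gamma_P(f)}\seq\Gamma_P(f)$ as $\exists_{\pi_{1,3}}\bigl(P_{\id_Y\times f\times\id_Y}(\exists_e(\delta_Y))\bigr)$. Both inequalities are then read off from Proposition~\ref{prop:ExistsSquare}, the lax Beck--Chevalley inequality for an arbitrary commutative square (applied once with $f$ against $\Delta$, once with $\id_Y\times f\times\id_Y$ against $\id_{Y\times Y}$). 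You instead transpose each goal across an adjunction at the outset and reduce the first to reflexivity of $\delta_Y$ (via $P_{\Delta_Y}(\delta_Y)=\top$) and the second to transitivity of $\delta_Y$, which you derive directly from~\eqref{eq:elementary2}. The paper's computation yields the identity $\Gamma_P(f)\seq\op{\Gamma_P(f)} = P_{f\times f}(\delta_Y)$ as a byproduct, which can be independently useful; your argument is more conceptual in isolating reflexivity and transitivity of the equality predicate as the actual content of the two triangle inequalities, and avoids invoking Proposition~\ref{prop:ExistsSquare} as a separate lemma.
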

\begin{proof}
    It is easy to check that $\op{\Gamma_P(f)} = P_{\id \times f}(\delta_Y)$.
	\begin{itemize}
		\item We start by proving $\delta_X \leq \Gamma_P(f) \seq \op{\Gamma_P(f)}$. Let
		\[
		\begin{tikzcd}
		& X \times Y \times X \arrow[swap]{dl}{\pi_{1,2}} \arrow{d}{\pi_{1,3}} \arrow{dr}{\pi_{2,3}} & \\
		X \times Y & X \times X & Y \times X
		\end{tikzcd}
		\]
		be the projections. Then
		\begin{align*}
		\Gamma_P(f) \seq \op{\Gamma_P(f)} &= \exists_{\pi_{1,3}}(P_{\pi_{1,2}}(P_{f \times \id_Y}(\delta_Y)) \wedge P_{\pi_{2,3}}(P_{\id_Y \times f}(\delta_Y))) \\
		&= \exists_{\pi_{1,3}}(P_{f \times \id_Y \times f}(P_{\pi_{1,2}}(\delta_Y) \wedge P_{\pi_{2,3}}(\delta_Y)) \\
		&= \exists_{\pi_{1,3}}(P_{f \times \id_Y \times f}(\exists_e(\delta_Y))
		\end{align*}
		where the last step uses~\eqref{eq:elementary2} with $e = \Delta_Y \times \id_Y$.
		Applying Beck-Chevalley using the pullback		\[
		\begin{tikzcd}
		X \times Y \times X \arrow[swap]{d}{f \times \id \times f} \arrow{r}{\pi_{1,3}} & X \times X \arrow{d}{f \times f} \\
		Y \times Y \times Y \arrow{r}{\pi_{1,3}} & Y \times Y
		\end{tikzcd}
		\]
		yields
		\[ \Gamma_P(f) \seq \op{\Gamma_P(f)} = P_{f \times f}(\exists_{\pi_{1,3}}(\exists_e(\delta_Y))) = P_{f \times f}(\delta_Y)\]
		where $\exists_{\pi_{1,3}} \circ \exists_e = \id$.
		Now applying Lemma~\ref{lem:DeltaExistsTop} and Proposition~\ref{prop:ExistsSquare} yields
		\[ \Gamma_P(f) \seq \op{\Gamma_P(f)} = P_{f \times f}(\exists_{\Delta}(\top)) \geq \exists_{\Delta}(P_f(\top)) = \exists_{\Delta}(\top) = \delta_Y \]
		\item It remains to prove that $\op{\Gamma_P(f)} \seq \Gamma_P(f) \leq \delta_Y$. Let
		\[
		\begin{tikzcd}
		& Y \times X \times Y \arrow[swap]{dl}{\pi_{1,2}} \arrow{d}{\pi_{1,3}} \arrow{dr}{\pi_{2,3}} & \\
		Y \times X & Y \times Y & X \times Y
		\end{tikzcd}
		\]
		be the projections. Then
		\begin{align*}
		\op{\Gamma_P(f)} \seq \Gamma_P(f) &= \exists_{\pi_{1,3}}(P_{\pi_{1,2}}(P_{\id_Y \times f}(\delta_Y)) \wedge P_{\pi_{2,3}}(P_{f \times \id_Y}(\delta_Y))) \\
		&= \exists_{\pi_{1,3}}(P_{\id_Y \times f \times \id_Y}(P_{\pi_{1,2}}(\delta_Y) \wedge P_{\pi_{2,3}}(\delta_Y)) \\
		&= \exists_{\pi_{1,3}}(P_{\id_Y \times f \times \id_Y}(\exists_e(\delta_Y))
		\end{align*}
		where the last step uses~\eqref{eq:elementary2} with $e = \Delta_Y \times \id_Y$.
		Using Proposition~\ref{prop:ExistsSquare} with the commutative square
		\[
		\begin{tikzcd}
		Y \times X \times Y \arrow[swap]{d}{\id \times f \times \id} \arrow{r}{\pi_{1,3}} & Y \times Y \arrow{d}{\id} \\
		Y \times Y \times Y \arrow{r}{\pi_{1,3}} & Y \times Y
		\end{tikzcd}
		\]
		we get
		\[ \op{\Gamma_P(f)} \seq \Gamma_P(f) \leq \exists_{\pi_{1,3}}(\exists_{e}(\delta_Y)) = \delta_Y. \qedhere \]
	\end{itemize}
\end{proof}

\paragraph*{Proof of Theorem~\ref{thm: LA is a functor}}

Here we prove that $\Bicat_P$ is a cartesian bicategory (Theorem~\ref{thm:A_P is a cartesian bicategory}) and that the assignment $P \mapsto \Bicat_P$ extends to a functor $\LA \colon \EED \to \CBC$ (Proposition~\ref{prop: LA is a functor}).

\begin{theorem}	\label{thm:BicatPCat}
	Let $P \from \op\Cat \to \InfSL$ be an existential, elementary doctrine. Then $\Bicat_P$ is a poset-enriched 
category.
\end{theorem}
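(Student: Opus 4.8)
The statement bundles three things: that every hom-object $\Hom_{\Bicat_P}(X,Y)=P(X\times Y)$ is a poset, that $\delta_X$ is a two-sided unit for composition $\seq$, and that $\seq$ is associative; poset-enrichment of the resulting category --- monotonicity of $\seq$ separately in each argument --- is then automatic. I would dispatch the easy parts first: $P$ takes values in $\InfSL$, so each $P(X\times Y)$ is a poset, and since $f\seq g=\exists_{\pi_Y}\bigl(P_{\pi_Z}(f)\wedge P_{\pi_X}(g)\bigr)$ is built out of reindexings, binary meet and a left adjoint --- all monotone --- $\seq$ is monotone in $f$ and in $g$. Throughout I adopt the generalized-projection notation of Appendix~\ref{appendix:eed}, and use freely that, because left adjoints compose and reindexing along a symmetry is invertible, every generalized projection $p$ admits a left adjoint $\exists_p$ for which Beck--Chevalley --- in the convenient shape of Remark~\ref{rem:Beck-Chevalley pullback diagram simple} --- and Frobenius reciprocity remain valid after transporting the literal clauses of Definition~\ref{def:eeh} along the relevant symmetry isomorphisms.

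For the unit laws, let $g\in P(X\times Y)=\Hom_{\Bicat_P}(X,Y)$. Unfolding composition with middle object $X$ gives $\delta_X\seq g=\exists_{\langle\pi_1,\pi_3\rangle}\bigl(P_{\langle\pi_1,\pi_2\rangle}(\delta_X)\wedge P_{\langle\pi_2,\pi_3\rangle}(g)\bigr)$, all projections out of $X\times X\times Y$. The inner meet is precisely the instance of formula~\eqref{eq:elementary2} that computes $\exists_e(g)$ for the ``diagonal on the first factor'' map $e\colon X\times Y\to X\times X\times Y$, $(x,y)\mapsto(x,x,y)$ --- which is $\id\times\Delta_X$ up to a symmetry. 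Since $\exists_e\dashv P_e$ and left adjoints compose, $\delta_X\seq g=\exists_{\langle\pi_1,\pi_3\rangle}\exists_e(g)=\exists_{e\seq\langle\pi_1,\pi_3\rangle}(g)$; but $e$ followed by $\langle\pi_1,\pi_3\rangle\colon X\times X\times Y\to X\times Y$ is $\id_{X\times Y}$ and $\exists_{\id}=\id$, so $\delta_X\seq g=g$. The mirror identity $f\seq\delta_Y=f$ follows symmetrically, or from the one just proved via the contravariant involution $\op\bullet$ of Proposition~\ref{prop:op}, which carries $f\seq\delta_Y$ to $\delta_Y\seq\op f=\op f$ and fixes $\delta_Y$.

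For associativity, fix $f\in P(X\times Y)$, $g\in P(Y\times Z)$, $h\in P(Z\times W)$; I would show that both bracketings equal the symmetric expression
\[
\exists_{XYZW\to XW}\bigl(P_{XYZW\to XY}(f)\wedge P_{XYZW\to YZ}(g)\wedge P_{XYZW\to ZW}(h)\bigr).
\]
Starting from $(f\seq g)\seq h$: expand $f\seq g=\exists_{XYZ\to XZ}\bigl(P_{XYZ\to XY}(f)\wedge P_{XYZ\to YZ}(g)\bigr)$; reindex along $XZW\to XZ$ and apply Beck--Chevalley (to the pullback of Remark~\ref{rem:Beck-Chevalley pullback diagram simple}) to turn $P_{XZW\to XZ}\circ\exists_{XYZ\to XZ}$ into $\exists_{XYZW\to XZW}$ of the reindexed datum; functoriality of $P$ and preservation of $\wedge$ by reindexing then place the first two factors of the display inside that existential. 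Next, Frobenius reciprocity along the projection $XYZW\to XZW$ (which forgets $Y$) absorbs the remaining factor $P_{XYZW\to ZW}(h)=P_{XYZW\to XZW}\bigl(P_{XZW\to ZW}(h)\bigr)$ into the same existential, and finally $\exists_{XZW\to XW}\circ\exists_{XYZW\to XZW}=\exists_{XYZW\to XW}$ since left adjoints compose --- yielding the display. A symmetric computation, expanding $g\seq h$ first and quantifying over $Z$ last, gives the same expression for $f\seq(g\seq h)$. I expect the only real friction to be notational: each invocation of Beck--Chevalley and of Frobenius reciprocity must be matched against the literal statements in Definition~\ref{def:eeh}, which concern projections of the form $X\times A\to A$, so the multi-object projections above must first be brought into that form by conjugating with symmetries --- routine but fiddly. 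No idea beyond the usual calculus of relations is needed.
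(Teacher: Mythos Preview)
Your proposal is correct and follows essentially the same route as the paper: identities via~\eqref{eq:elementary2} and the involution of Proposition~\ref{prop:op}, associativity by reducing both bracketings to the symmetric ternary expression via Beck--Chevalley and Frobenius reciprocity, and monotonicity from that of $P_\pi$, $\wedge$, and $\exists_\pi$. The only cosmetic difference is that the paper proves $f\seq\delta_Y=f$ directly (where~\eqref{eq:elementary2} applies without any preliminary symmetry juggling) and then invokes Proposition~\ref{prop:op} for the left unit, whereas you do the left unit first; your acknowledgment of the ``notational friction'' in matching generalized projections to the literal form of Definition~\ref{def:eeh} is exactly the bookkeeping the paper also sweeps into its projection shorthand.
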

\begin{proof}
That $\delta_X$ serves as the identity of composition follows easily from~\eqref{eq:elementary2}:
	Let $f \in \Hom_{\Bicat_{P}}(X,Y) = P(X \times Y)$ and
	\[
	\begin{tikzcd}
	& X \times Y \times Y \arrow[swap]{dl}{\pi_{1,2}} \arrow{d}{\pi_{1,3}} \arrow{dr}{\pi_{2,3}} & \\
	X \times Y & X \times Y & Y \times Y
	\end{tikzcd}
	\]
	the projections. Then
	\[
	f \seq \delta_Y = \exists_{\pi_{1,3}}(P_{\pi_{1,2}}(f) \wedge P_{\pi_{2,3}}(\delta_Y)).
	\]
	By~\eqref{eq:elementary2}, we have $P_{\pi_{1,2}}(f) \wedge P_{\pi_{2,3}}(\delta_Y) = \exists_e(f)$ where
	$e = \id_X \times \Delta_Y \from X \times Y \to X \times Y \times Y$.
	Therefore, $f \seq \delta_Y = \exists_{\pi_{1,3}}(\exists_{e}(f))$. Since left-adjoints compose, we have that $\exists_{\pi_{1,3}} \circ \exists_{e}$
	is left-adjoint to $P_{\pi_{1,3} \circ e} = P_{\id} = \id$. By uniqueness of adjoints,  also $\exists_{\pi_{1,3}} \circ \exists_{e}$ is the identity and hence
	\[f \seq \delta_Y = \exists_{\pi_{1,3}}(\exists_{e}(f)) = f.\]
	That also $\delta_X \seq f = f$ follows by Proposition~\ref{prop:op}.

To see that composition is associative, let $f \from A \to B, g \from B \to C, h \from C \to D$. 
We have
	\[
f \seq g = \exists_{ABC \to AC} (P_{ABC \to AB} (f) \land P_{ABC \to BC} (g)).
	\]
	Let us give the inner expression a name, and let $\alpha = P_{ABC \to AB} (f) \land P_{ABC \to BC} (g)$ so that
	\[
f \seq g = \exists_{ABC \to AC} (\alpha).
	\]
	Then
	\begin{align*}
(f \seq g) \seq h &= \exists_{ACD \to AD}(P_{ACD \to AC} (f \seq g) \land P_{ACD \to CD} (h)) \\
    &= \exists_{ACD \to AD} (P_{ACD \to AC}(\exists_{ABC \to AC}(\alpha) \land P_{ACD \to CD} (h))).
	\end{align*}
	Now we can use the Beck-Chevalley condition with the following square of projections:
	\[
\begin{tikzcd}[sep=huge]
	A \times B \times C \times D \arrow{r}{} \arrow{d}{} & A \times C \times D \arrow{d}{} \\
	A \times B \times C \arrow{r}{} & A \times C
	\end{tikzcd}
	\]
	to infer that
	\[ 
P_{{ACD \to AC}}(\exists_{ABC \to AC}(\alpha)) = \exists_{ABCD \to ACD}(P_{ABCD \to ABC}(\alpha))
	\]
	and therefore
	\[
(f \seq g) \seq h = \exists_{ACD \to AD}(\exists_{ABCD \to ACD}(P_{{ABCD \to ABC}}(\alpha)) \wedge P_{{ACD \to CD}}(h)).
	\]
	Using Frobenius reciprocity to get everything into the existential quantifier we get
	\[
(f \seq g) \seq h = \exists_{{ACD \to AD}}(\exists_{{ABCD \to ACD}}(P_{{ABCD \to ABC}}(\alpha) \wedge P_{{ABCD \to CD}}(h)))
	\]
	and therefore by uniqueness of adjoints also
	\[
(f \seq g) \seq h = \exists_{{ABCD \to AD}}(P_{{ABCD \to ABC}}(\alpha) \wedge P_{{ABCD \to CD}}(h)).
	\]
	Now using the definition of $\alpha$ and the fact that any $P_{\pi}$ preserves meet, this expands into
	\[
(f \seq g) \seq h = \exists_{{ABCD \to AD}}(P_{{ABCD \to AB}}(f) \wedge P_{{ABCD \to BC}}(g) \wedge P_{{ABCD \to CD}}(h)).
	\]
	Completely analogously one can prove that also
	\[
f \seq (g \seq h) = \exists_{{ABCD \to AD}}(P_{{ABCD \to AB}}(f) \wedge P_{{ABCD \to BC}}(g) \wedge P_{{ABCD \to CD}}(h))
	\]
	thus composition is therefore associative. 

	Finally, composition is a monotonous operation because $P_\pi$ and $\exists_\pi$ are monotonous functions, thus if $f_1 \le f_2$ then $f_1 ; g \le f_2 ; g$:
	therefore the category $\Bicat_P$ is poset-enriched.\qedhere

\end{proof}

\begin{proposition}\label{prop:A_P is a symmetric monoidal category}
	Let $P \from \op\Cat \to \InfSL$ be an existential, elementary doctrine. Then $\Bicat_P$ is a symmetric monoidal category.
\end{proposition}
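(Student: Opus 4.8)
Since Theorem~\ref{thm:BicatPCat} already equips $\Bicat_P$ with the structure of a poset-enriched category, what remains is to add a symmetric monoidal structure: on objects the tensor is the cartesian product $\times$ of $\Cat$, on morphisms it is the stated formula $f\otimes g = P_{\langle\pi_1,\pi_3\rangle}(f)\wedge P_{\langle\pi_2,\pi_4\rangle}(g)$, and the unit is a terminal object $I$ of $\Cat$. The plan is to transport all the coherence data from $\Cat$ along the strict monoidal graph functor $\Gamma_P\colon\Cat\to\Bicat_P$ of Proposition~\ref{prop:GammaP(f) has right adjoint}: I would define the associator, the two unitors and the symmetry of $\Bicat_P$ to be the $\Gamma_P$-images of those of $\Cat$, so that every coherence axiom of $\Bicat_P$ becomes the $\Gamma_P$-image of the same axiom in the cartesian category $\Cat$. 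With this strategy, the only genuinely new content is (a) bifunctoriality of $\otimes$ and (b) naturality of the transported isomorphisms with respect to \emph{arbitrary} (not merely graph) morphisms.

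For (a), preservation of identities is immediate, since $\delta_A\otimes\delta_B = P_{ABAB\to AA}(\delta_A)\wedge P_{ABAB\to BB}(\delta_B)$ equals $\delta_{A\times B}$ by the second part of Lemma~\ref{lemma:delta_AxB}. Preservation of composition is the interchange law $(f\seq f')\otimes(g\seq g') = (f\otimes g)\seq(f'\otimes g')$; I would expand both sides into a single normal form $\exists_\pi\bigl(\bigwedge_i P_{\pi_i}(-)\bigr)$ over the product of all the objects in play, exactly as in the associativity computation inside the proof of Theorem~\ref{thm:BicatPCat} --- pushing reindexings across existentials with the Beck--Chevalley condition, collecting all meets under one existential with Frobenius reciprocity, and concluding by uniqueness of left adjoints.

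For the coherence isomorphisms, set $\alpha_{A,B,C}=\Gamma_P(\alpha^{\Cat}_{A,B,C})$, $\lambda_A=\Gamma_P(\lambda^{\Cat}_A)$, $\rho_A=\Gamma_P(\rho^{\Cat}_A)$ and $\sigma_{A,B}=\Gamma_P(\sigma^{\Cat}_{A,B})$; each is invertible in $\Bicat_P$ because $\Gamma_P$ is a functor applied to an isomorphism of $\Cat$. Using $\Gamma_P(u\times v)=\Gamma_P(u)\otimes\Gamma_P(v)$ (Remark~\ref{rem:F(f times g) = Ff otimes Fg}) together with strict monoidality and identity-on-objects of $\Gamma_P$, the pentagon, the triangle, the two hexagons and $\sigma\seq\sigma=\id$ hold in $\Bicat_P$ by applying $\Gamma_P$ to those same equations in $\Cat$, and naturality against graph morphisms is automatic from functoriality of $\Gamma_P$. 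To get naturality against an arbitrary morphism $r\in P(X\times Y)$, I would first establish the reindexing identities $\Gamma_P(v)\seq r = P_{v\times\id_Y}(r)$ for $v$ a morphism of $\Cat$, and $r\seq\Gamma_P(u) = \exists_{\id_X\times u}(r)$, the latter invoking the standard fact that in an elementary existential doctrine $P_f$ admits a left adjoint for \emph{every} morphism $f$ of $\Cat$; each of these is a short Beck--Chevalley-and-uniqueness-of-adjoints argument in the style of Lemma~\ref{lem:Functor}. Given them, naturality of $\alpha$, $\lambda$, $\rho$ and $\sigma$ reduces to a functoriality-of-$P$ calculation using the commuting squares of projections and structural isomorphisms in $\Cat$.

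The main obstacle is the interchange law of step (a): it is the one place where one really has to compute with the doctrine axioms, and it is a longer --- though structurally identical --- version of the associativity argument already carried out for composition. Everything else is either pulled back verbatim from the cartesian category $\Cat$ through the strict monoidal functor $\Gamma_P$, or collapses, via the reindexing identities, into routine uses of the functoriality of $P$.
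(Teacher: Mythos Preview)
Your proposal is correct and follows essentially the same plan as the paper: establish bifunctoriality of $\otimes$ by a Beck--Chevalley/Frobenius-reciprocity computation parallel to the associativity argument, then transport the associator, unitors and symmetry from $\Cat$ along the strict monoidal functor $\Gamma_P$. You are in fact more careful than the paper on one point: the paper dismisses naturality of the transported isomorphisms in a single clause (``being whiskerings of a functor with natural isomorphisms''), whereas you rightly observe that naturality must be checked against \emph{arbitrary} morphisms of $\Bicat_P$, not only those in the image of $\Gamma_P$, and your reindexing identities $\Gamma_P(v)\seq r = P_{v\times\id}(r)$ and $r\seq\Gamma_P(u)=\exists_{\id\times u}(r)$ are exactly the tools that make this step go through.
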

\begin{proof}
	We first prove that $\otimes$ is a functor. Let $f_1 \colon A \to B$, $g_1 \colon C \to D$, $f_2 \colon B \to E$, $g_2 \colon D \to F$. We need to show that
	\[
	(f_1 \otimes g_1) \seq (f_2 \otimes g_2) = (f_1 \seq f_2) \otimes (g_1 \seq g_2).
	\]
	Starting from the left-hand side:
	\begin{align*}
	&{\bigl( P_{ACBD \to AB} (f_1) \wedge P_{ACBD \to CD} (g_1) \bigr)} \seq 
	{ \bigl( P_{BDEF \to BE} (f_2) \wedge P_{BDEF \to DF} (g_2) \bigr) } \\
	{}={}&\exists_{ACBDEF \to ACEF}
	\Bigl( 
	P_{ACBDEF \to ACBD} \bigl( P_{ACBD \to AB }(f_1) \wedge P_{ACBD \to CD} (g_1) \bigr) \\
	&\wedge 
	P_{ACBDEF \to BDEF} \bigl( P_{BDEF \to BE }(f_2) \wedge P_{BDEF \to DF} (g_2) \bigr) 
	\Bigr) \\
	{}={}&\exists_{ACBDEF \to ACEF}
	\bigl(
	P_{ACBDEF \to AB}(f_1) \wedge P_{ACBDEF \to CD}(g_1)\\
	&\wedge P_{ACBDEF \to BE}(f_2) \wedge P_{ACBDEF}(g_2)
	\bigr) \\
	{}={}&\exists_{ACBDEF \to ACEF}
	\Bigl(
	P_{ACBDEF\to ABE} \bigl( P_{ABE \to AB}(f_1) \wedge P_{ABE \to BE} (f_2) \bigr) \\
	&\wedge P_{ACBDEF \to CDF} \bigl( P_{CDF \to CD}(g_1) \wedge P_{CDF \to DF} (g_2) \bigr)
	\Bigr).
	\end{align*}
	Let us call, for the sake of brevity, 
	\[
	\alpha = P_{ABE \to AB}(f_1) \wedge P_{ABE \to BE} (f_2) \text{ and } \beta = P_{CDF \to CD}(g_1) \wedge P_{CDF \to DF} (g_2).
	\]
	Then we proved that
	\[
	(f_1 \otimes g_1) \seq (f_2 \otimes g_2) = \exists_{ACBDEF \to ACEF} \bigl( P_{ACBDEF \to ABE}(\alpha) \wedge P_{ACBDEF \to CDF} (\beta) \bigr).
	\]
	We notice that the projection $ACBDEF \to ACEF$, over which we compute the existential quantifier, can be performed in two steps: first we forget $D$, then $B$. Similarly, the projection $ACBDEF \to ABE$, whose image along $P$ we calculate in $\alpha$, can be decomposed in two steps. We use this to put ourselves in the position of applying the Frobenius reciprocity property of $P$:
	\begin{align*}
	& \exists_{ACBDEF \to ACEF} \bigl( P_{ACBDEF \to ABE}(\alpha) \wedge P_{ACBDEF \to CDF} (\beta) \bigr) \\
	=&\exists_{ACBEF \to ACEF} \\
	&\exists_{ACBDEF \to ACBEF} \Bigl( P_{ACBDEF \to ACBEF} \bigl( P_{ACBEF \to ABE} (\alpha) \bigr) \\
	& \hspace{9em}\wedge P_{ACBDEF \to CDF} (\beta) \Bigr) \\
	=&\exists_{ACBEF \to ACEF} \Bigl( P_{ACBEF \to ABE}(\alpha) \\
	& \hspace{7.5em}\wedge \exists_{ACBDEF \to ACBEF} \bigl( P_{ACBDEF \to CDF} (\beta) \bigr) \Bigr) .
	\end{align*}
	Now we use the Beck-Chevalley condition on the following pullback, consisting only of projections:
	\[
	\begin{tikzcd}
	ACBDEF \ar[r] \ar[d] & ACBEF \ar[d] \\
	CDF \ar[r] & CF
	\end{tikzcd}
	\]
	resulting in exchanging the order of $\exists$ and $P$, obtaining:
	\[
	\exists_{ACBEF \to ACEF} \Bigl( P_{ACBEF \to CF} \bigl( \exists_{CDF \to CF} (\beta) \bigr) \wedge P_{ACBEF \to ABE} (\alpha) \Bigr).
	\]
	By writing the projection $ACBEF \to CF$ as a the composite
	\[
	ACBEF \to ACEF \to CF,
	\] 
	one can again use the Frobenius reciprocity and then the Beck-Chevalley condition to finally obtain
	\[
	P_{ACEF \to CF} \bigl( \exists_{CDF \to CF} (\beta) \bigr) \wedge P_{ACEF \to AE} \bigl( \exists_{ABE \to AE} (\alpha) \bigr)
	\]
	which is equal, by definition, to $(f_1 \seq f_2) \otimes (g_1 \seq g_2)$, as required. Thus $\otimes$ preserves compositions; preservation of identities, which is tantamount to
	\[
	P_{ABAB \to AA} (\delta_A) \wedge P_{ABAB \to BB} (\delta_B) = \delta_{A \times B},
	\]
	follows from Lemma~\ref{lem:DeltaExistsTop}. Therefore $\otimes$ is a functor.
	
	Next, we show that $\otimes$ is coherently associative, unitary and commutative. 
We have already noticed in Remark~\ref{rem:F(f times g) = Ff otimes Fg} that the functor $\Gamma_P$ from Lemma~\ref{lem:Functor} transports the product functor of $\Cat$ into the tensor functor $\otimes$ of $\Bicat_P$: this suggests to transfer the whole cartesian structure of $\Cat$ into $\Bicat_P$, thus the associator, the unitors and the symmetry for $\otimes$ are defined as the image of their correspondent isomorphisms in $\Cat$. This automatically ensures that they are all natural isomorphisms (being whiskerings of a functor with natural isomorphisms) coherent with $\otimes$.
\end{proof}

\begin{theorem}\label{thm:A_P is a cartesian bicategory}
	Let $P \from \op\Cat \to \InfSL$ be an elementary existential doctrine. Then $\Bicat_P$ is a cartesian bicategory.
\end{theorem}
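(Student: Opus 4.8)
The plan is to verify the five clauses of Definition~\ref{def:cartBicat} for $\Bicat_P$, leaning on the structure already assembled in this appendix. By Theorem~\ref{thm:BicatPCat} and Proposition~\ref{prop:A_P is a symmetric monoidal category}, $\Bicat_P$ is already a poset-enriched symmetric monoidal category (monotonicity of $\otimes$ being immediate from its definition via $P_\pi$ and $\wedge$), so the task is to equip each object with the (co)monoid data and discharge clauses~1--5. First I would \emph{define} the comonoid on an object $X$ to be $(\Gamma_P(\Delta_X),\Gamma_P(!_X))$: the image under the graph functor of the canonical comonoid carried by $X$ in the cartesian category $\C$. Since $\Gamma_P$ is strict monoidal (Proposition~\ref{prop:GammaP(f) has right adjoint}) and in $\C$ every object carries a cocommutative, natural and coherent comonoid (Definition~\ref{def:cartCat}), clauses~1 and~5 transport verbatim along $\Gamma_P$, and moreover $\Gamma_P(f)$ is a \emph{strict} comonoid homomorphism for every arrow $f$ of $\C$. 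Clause~2 is then exactly Lemma~\ref{lem:Gamma(f) has a right adjoint} specialised to $f=\Delta_X$ and $f=!_X$: the morphisms $\op{\Gamma_P(\Delta_X)}$ and $\op{\Gamma_P(!_X)}$ are the required right adjoints of copy and discard, and the unit and counit inequality of each of these two adjunctions are exactly the four (co)unit laws asked for by clause~2.

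Next comes clause~4, that every $R\colon X\to Y$ is a \emph{lax} comonoid homomorphism. The discharger half is trivial: $\Gamma_P(!_X)=P_{!_X\times\id_I}(\delta_I)$ equals $\top_{\Hom_{\Bicat_P}(X,I)}$ by Corollary~\ref{cor:P_!(delta_I)=top} (since $I\times I$ is terminal and $!_X\times\id_I$ is the unique arrow $X\times I\to I\times I$), whence $R\seq\Gamma_P(!_Y)\le\Gamma_P(!_X)$ holds for free. For the copier half, $R\seq\Gamma_P(\Delta_Y)\le\Gamma_P(\Delta_X)\seq(R\otimes R)$, I would factor $R$ as $\op{\Gamma_P(\pi_1)}\seq e_R\seq\Gamma_P(\pi_2)$, where $\pi_1,\pi_2$ are the two projections out of $X\times Y$ in $\C$ and $e_R=\delta_{X\times Y}\wedge P_{\pi_1}(R)$ is the coreflexive on $X\times Y$ associated with $R$; establishing this identity is a short computation with the two $\exists$-contractions in the definition of composition in $\Bicat_P$. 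Now lax comonoid homomorphisms are closed under $\seq$ and $\otimes$; strict comonoid homomorphisms such as $\Gamma_P(\pi_2)$ are in particular lax; coreflexives are lax by a small computation using Proposition~\ref{prop:ExistsSquare}; and the op of a comonoid homomorphism is lax by a purely formal adjunction argument (if $g$ is right adjoint to a comonoid homomorphism $f\colon X\to Y$ then $g\seq\Delta_X\le g\seq\Delta_X\seq(f\otimes f)\seq(g\otimes g)=g\seq f\seq\Delta_Y\seq(g\otimes g)\le\Delta_Y\seq(g\otimes g)$, using $\id\le f\seq g$, $\Delta_X\seq(f\otimes f)=f\seq\Delta_Y$ and $g\seq f\le\id$, together with the dual chain $g\seq\epsilon_X=g\seq f\seq\epsilon_Y\le\epsilon_Y$). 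Chaining these facts along the factorisation of $R$ gives clause~4.

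Clause~3, the Frobenius law, is where the real work lies. Writing $c=\Gamma_P(\Delta_X)$ for copy and $m=\op{c}=\op{\Gamma_P(\Delta_X)}$ for its (unique) right adjoint, I would unfold the three composites $(c\otimes\id)\seq(\id\otimes m)$, $m\seq c$ and $(\id\otimes c)\seq(m\otimes\id)$ straight from the definitions of $\seq$ and $\otimes$ in $\Bicat_P$ and show that the two outer ones both reduce to the middle one (morally, each is the element of $P(X\times X\times X\times X)$ forcing all four ports to coincide). The reductions run through Beck--Chevalley, Frobenius reciprocity of $P$, Lemma~\ref{lem:DeltaExistsTop} and uniqueness of adjoints --- precisely the toolkit already used for associativity of $\seq$ in Theorem~\ref{thm:BicatPCat} and for functoriality of $\otimes$ in Proposition~\ref{prop:A_P is a symmetric monoidal category}. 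Being three independent multi-step $\exists$-calculations, this is the main obstacle; once clauses~1--5 are in place the theorem follows, and $\Gamma_P$ is then automatically the strict cartesian functor $\C\to\Map(\Bicat_P)$ promised by Proposition~\ref{prop:GammaP(f) has right adjoint}.
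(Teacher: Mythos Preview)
Your proposal is correct in overall structure and matches the paper closely for the scaffolding (clauses~1, 2 and~5 via transport along $\Gamma_P$ and Lemma~\ref{lem:Gamma(f) has a right adjoint}), for the discharger half of clause~4 via Corollary~\ref{cor:P_!(delta_I)=top}, and for clause~3, where both you and the paper unfold the three Frobenius composites and reduce them by Beck--Chevalley, Frobenius reciprocity and Lemma~\ref{lem:DeltaExistsTop} to a common normal form (the paper's target is $\exists_{\langle\id_A,\id_A,\id_A,\id_A\rangle}(\top)$).

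The one genuine divergence is the copier half of clause~4. The paper computes each side directly: the left-hand side collapses to $\exists_{X\times\Delta_Y}(R)$, the right-hand side to $P_{X\times\pi_1}(R)\wedge P_{X\times\pi_2}(R)$, and then the inequality follows in one line from the adjunction $\exists_{X\times\Delta_Y}\dashv P_{X\times\Delta_Y}$ since $P_{X\times\Delta_Y}P_{X\times\pi_i}(R)=R$. Your factorisation route $R=\op{\Gamma_P(\pi_1)}\seq e_R\seq\Gamma_P(\pi_2)$ is a legitimate allegory-style technique, and your adjunction argument that right adjoints of comonoid homomorphisms are lax is clean and correct. However, the step ``coreflexives are lax by a small computation using Proposition~\ref{prop:ExistsSquare}'' is under-specified: you cannot yet appeal to clause~4 itself, and showing that the specific coreflexive $e_R=\delta_{X\times Y}\wedge P_\pi(R)$ is a lax comonoid homomorphism in $\Bicat_P$ still requires unfolding a composition and a tensor and doing a $P/\exists$ chase comparable in length to the paper's direct computation. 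So the detour does not buy brevity here; the paper's two-line endgame after simplifying both sides is hard to beat.
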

\begin{proof}
	By Theorem~\ref{thm:BicatPCat} and Proposition~\ref{prop:A_P is a symmetric monoidal category}, $\Bicat_P$ is a poset-enriched, symmetric monoidal category.
Since $\Cat$ is a cartesian category, every object is canonically equipped with a comonoid given by copying and discarding.
	We can use the functor $\Gamma_P \from \Cat \to \Bicat_P$ to transport
	this comonoid to $\Bicat_P$ and Lemma~\ref{lem:Gamma(f) has a right adjoint} shows that copying and discarding both have right-adjoints.
	Since the comonoids in $\Cat$ are coherent with the monoidal structure, the same is true in $\Bicat_P$ as the structure is transported through the
	strict monoidal functor $\Gamma_P$.
	It therefore remains to prove that every morphism is a lax-comonoid homomorphism and that the Frobenius law holds.
To see that every morphism is a lax-comonoid homomorphism, we need to prove that
	\begin{equation}\label{eq:every morphism is a lax comonoid homo in A_P}
	\stikzfig{LaxCopy} \quad \text{and} \quad  \stikzfig{LaxDis}
	\end{equation}
	for any morphism $R \in \Hom_{\Bicat_P}(X,Y) = P(X \times Y)$.
	The latter is very easy to see because the right-hand side is the top element in $P(X \times I)$, see Corollary~\ref{cor:P_!(delta_I)=top}. So we focus on the former inequality.
	
	We start with the left-hand side. 
In order to distinguish among the three occurrences of $Y$ in the composite of $R$ with the diagonal in $\Bicat_P$, we will write $Y_1$, $Y_2$ and $Y_3$, even though they all represent the same object $Y$ of $\Bicat_P$. We have:
	\begin{align*}
	&\exists_{X Y_1 Y_2 Y_3 \to X Y_2 Y_3} \Bigl( P_{X Y_1 Y_2 Y_3 \to X Y_1} (R) \wedge P_{X Y_1 Y_2 Y_3 \to Y_1 Y_2 Y_3} \bigl(P_{\Delta_Y Y_2 Y_3} (\delta_{Y^2}) \bigr) \Bigr) 
	\end{align*}
	and since $\delta_{Y^2} = P_{\langle \pi_1, \pi_3 \rangle} (\delta_Y) \wedge P_{\langle \pi_2, \pi_4 \rangle} (\delta_Y)$ by Lemma~\ref{lem:DeltaExistsTop}, we get
	\begin{align*}
	&\exists_{X Y_1 Y_2 Y_3 \to X Y_2 Y_3} \bigl( P_{X Y_1 Y_2 Y_3 \to X Y_1}(R) \wedge P_{X Y_1 Y_2 Y_3 \to Y_1 Y_2} (\delta_Y) \wedge P_{X Y_1 Y_2 Y_3 \to Y_2 Y_3} (\delta_Y)  \bigr) \\
	&=\exists_{X Y_1 Y_2 Y_3 \to X Y_2 Y_3} \Bigl( P_{X Y_1 Y_2 Y_3 \to X Y_1 Y_2} \bigl( P_{X Y_1 Y_2 \to X Y_1} (R) \wedge P_{X Y_1 Y_2 \to Y_1 Y_2} (\delta_Y) \bigr) \\
	&\hspace{9em} \wedge P_{X Y_1 Y_2 Y_3 \to Y_1 Y_3} (\delta_Y) \Bigr).
	\end{align*}
	Using~\eqref{eq:elementary2}, the symmetry in $\Cat$ and~\eqref{eq:elementary2} again, we obtain:
	\begin{align*}
	&\exists_{X Y_1 Y_2 Y_3 \to X Y_2 Y_3}
	\Bigl( 
	P_{X Y_1 Y_2 Y_3 \to X Y_1 Y_2} \bigl( \exists_{X\Delta_Y} (R) \bigr) 
	\wedge
	P_{X Y_1 Y_2 Y_3 \to Y_1 Y_3} (\delta_Y)
	\Bigr) \\
	&=\exists_{X Y_1 Y_2 Y_3 \to X Y_2 Y_3}
	\Bigl[
	P_{X\sigma Y} \Bigl(
	P_{X Y_2 Y_1 Y_3 \to X Y_2 Y_1}
	\bigl( P_{X \sigma} ( \exists_{X \Delta_Y} (R)) \bigr)  \\
	& \hspace{12em} \wedge P_{X Y_2 Y_1 Y_3 \to Y_1 Y_3} (\delta_Y)
	\Bigr) \Bigr] \\
	&= \exists_{X Y_1 Y_2 Y_3 \to X Y_2 Y_3}
	\Bigl[
	P_{X\sigma Y} \Bigl(
	\exists_{XY\Delta_Y} \bigl(
	P_{X \sigma} (\exists_{X \Delta_Y} (R))
	\bigr)
	\Bigr)
	\Bigr].
	\end{align*}
	Since $P_{X \sigma Y}$ and $P_{X \sigma}$ are isomorphisms, they are left adjoints (to their inverses): by composing all the left adjoints in the above expression we obtain
	\[
	\exists_{X \times \Delta_Y} (R).
	\] 
	The right-hand side, instead, is
	\begin{align*}
	&\exists_{X X^2 Y^2 \to X Y^2} 
	\Bigl(
	P_{X X^2 Y^2 \to X X^2} \bigl( P_{\Delta_X \times X^2} (\delta_{X^2}) \bigr)
	\wedge
	P_{X X^2 Y^2 \to X^2 Y^2} (R \otimes R)
	\Bigr) \\
	&=\exists_{X X^2 Y^2 \to X Y^2} 
	\Bigl(
	P_{\Delta_X \times X^2 \times Y^2} 
	\bigl( P_{X^2 X^2 Y^2 \to X^2 X^2} (\delta_{X^2}) \wedge P_{X^2 X^2 Y^2 \to X^2 Y^2} (R \otimes R) \bigr)
	\Bigr) \\
	&=\exists_{X X^2 Y^2 \to X Y^2} 
	\Bigl(
	P_{\Delta_X \times X^2 \times Y^2} 
	\bigl( \exists_{\Delta_{X^2} \times Y^2} (R \otimes R) \bigr).
	\Bigr)
	\end{align*}
	Using the Beck-Chevalley condition on the pullback
	\[
	\begin{tikzcd}
	X X^2 Y^2 \ar[r,"{\langle \pi_1, \pi_3 \rangle}"] \ar[d,"\Delta_X \times X^2 \times Y^2"'] & X Y^2 \ar[d,"{\Delta_X\times Y^2}"] \\
	X^2 X^2 Y^2 \ar[r,"{\langle \pi_1, \pi_3 \rangle}"] & X^2 Y^2
	\end{tikzcd}
	\]
	we get
	\begin{align*}
	&P_{\Delta_X \times Y^2} \bigl( \exists_{\langle \pi_1, \pi_3 \rangle} \exists_{\Delta_{X^2 \times Y^2}} (R \otimes R) \bigr) \\
	&= P_{\Delta_X \times Y^2} (R \otimes R).
	\end{align*}

	Using that \[R \otimes R = P_{\langle\pi_1,\pi_3\rangle}(R) \wedge P_{\langle\pi_2, \pi_4\rangle}(R) \]
	we have
\[
	P_{\Delta_X \times Y^2} (R \otimes R)  = P_{X \times \pi_1 \colon X \times Y^2 \to X\times Y}(R) \wedge P_{X \times \pi_2 \colon X \times Y^2 \to X\times Y}(R) 
	\]
	Since we have established the right-hand side as a meet, it suffices to prove that \[\exists_{X \times \Delta_Y}(R) \leq P_{X \times \pi_i}(R)\]
	for $i \in \{1,2\}$ and since
	\[ P_{X \times \Delta_Y}(P_{X \times \pi_i}(R)) = R \]
	this follows by virtue of $\exists_{X \times \Delta_Y}$ being a left-adjoint.

	Lastly, we need to check the Frobenius law. For that it suffices to prove that
	\[ \stikzfig{FrobSimple} \]
	which are elements of $P(A^4)$. 
In the following we will calculate in detail the left-hand side of the equation above, where there are seven occurrences of the same object $A$. For notational convenience, we want to distinguish them, like this:
\[
	\stikzfig{FrobSimpleAlessio}
	\] 
	To name a projection out of a product of copies of $A$, we will use the subscripts to specify which factors we are projecting onto: for example, $\langle \pi_1, \pi_2, \pi_3 \rangle \colon A^7 \to A_1 \times A_2 \times A_3$ will simply be denoted as $1234567 \to 123$.
	
	By simply unravelling the definition of composition and tensor, and using the fact that $\delta_{A \times A} = \delta_A \otimes \delta_A$ (Lemma~\ref{lem:DeltaExistsTop}), it is easy to see that the left-hand side is equal to:
	\begin{align*}
	\exists_{1234567\to1267} \bigl(
	&P_{1234567\to13}(\delta_A) \wedge
	P_{1234567\to14}(\delta_A) \wedge
	P_{1234567\to47}(\delta_A) \\
	&\wedge 
	P_{1234567\to57}(\delta_A) \wedge
	P_{1234567\to25}(\delta_A) \wedge
	P_{1234567\to36}(\delta_A) 
	\bigr)
	\end{align*}
	which can be rewritten, using~\eqref{eq:elementary2}, as
	\begin{align*}
	\exists_{1234567\to1267} \Bigl(
	&P_{1234567\to136} \bigl(\exists_{A \times \Delta_A} (\delta_A) \bigr) \wedge
	P_{1234567\to147} \bigl(\exists_{A \times \Delta_A} (\delta_A) \bigr) \\
	&\wedge
	P_{1234567\to257} \bigl(\exists_{A \times \Delta_A} (\delta_A) \bigr)
	\Bigr).
	\end{align*}
	The projection $1234567\to1267$ of the existential quantifier can be decomposed in three projections which forget only one instance of $A$ at a time. Using this and rearranging the expression above, we can put ourselves in the position of being able to use the Frobenius reciprocity as follows: the above is equal to
	\begin{align*}
	&\exists_{12567\to1267} \exists_{124567\to12567} \\
	&\exists_{1234567\to124567}
	\Bigl[
	P_{1234567\to124567} \Bigl(
	P_{124567\to147} \bigl(\exists_{A \times \Delta_A}(\delta_A) \bigr)
	\\
	&\hspace{14.5em}\wedge
	P_{124567\to257} \bigl(\exists_{A \times \Delta_A}(\delta_A) \bigr)
	\Bigr) \\
	&\hspace{7em}\wedge 
	P_{1234567\to136} \bigl(\exists_{A \times \Delta_A}(\delta_A) \bigr)
	\Bigr] \\
	=& \exists_{12567\to1267}\\
	&\exists_{124567\to12567} 
	\Bigl[
	P_{124567\to147} \bigl(\exists_{A \times \Delta_A}(\delta_A) \bigr) 
	\wedge
	P_{124567\to257} \bigl(\exists_{A \times \Delta_A}(\delta_A) \bigr)
	\\
	&\hspace{6em}\wedge
	\exists_{1234567\to124567} \Bigl( P_{1234567\to136} \bigl(\exists_{A \times \Delta_A}(\delta_A) \bigr)  \Bigr)
	\Bigr]
	\end{align*}
	By the Beck-Chevalley condition applied to the following pullback, made just of projections:
	\[
	\begin{tikzcd}
	1234567 \ar[r] \ar[d] & 124567 \ar[d] \\
	136 \ar[r] & 16
	\end{tikzcd}
	\]
	we have that
	\begin{align*}
	\exists_{1234567\to124567} \Bigl( P_{1234567\to136} \bigl(\exists_{A \times \Delta_A}(\delta_A) \Bigr) &=
	P_{124567 \to 16} \Bigl( \exists_{136\to16} \bigl(\exists_{A \times \Delta_A}(\delta_A)\bigr)\Bigr) \\
	&= P_{124567\to16}(\delta_A).
	\end{align*}
	Using this fact and a similar strategy as before, we can reduce the left-hand side to
	\[
	P_{1267\to16}(\delta_A) \wedge P_{1267\to17}(\delta_A) \wedge P_{1267 \to 27} (\delta_A).
	\]
	We can use symmetries in $\Cat$ to arrange the above expression so that we can use~\eqref{eq:elementary2} in the following:
	\begin{align*}
	&P_{1267\to16}(\delta_A) \wedge P_{1267\to17}(\delta_A) \wedge P_{1267 \to 27} (\delta_A)	\\
	&=P_{1267\to2716}\Bigl[
	P_{2716\to271} \Bigl(P_{271\to27}(\delta_A) \wedge P_{271\to71} \bigl(P_\sigma(\delta_A) \bigr) \Bigr) \wedge P_{2716\to16}(\delta_A)
	\Bigr] \\
	&=P_{1267\to2716} \Bigl[
	\exists_{A^2 \times \Delta_A} \Bigl(
	\exists_{A \times \Delta_A} \bigl(
	\exists_{\Delta_A} (\top)
	\bigr)
	\Bigr)
	\Bigr]
	\end{align*}
	Now, $P_{1267\to2716}$ is the left-adjoint of its inverse, which composed to the other existential quantifiers yields simply
	\[
	\exists_{\langle \id_A, \id_A, \id_A, \id_A \rangle} (\top).
	\]
Analogously, one can prove that the right-hand side of the Frobenius equation corresponds to the same expression and therefore
	\[ \stikzfig{FrobSimple} \qedhere \]
\end{proof}

\begin{proposition}\label{prop: LA is a functor}
The assignment $P \mapsto \Bicat_P$ extends to a functor $\LA \colon \EED \to \CBC$ as follows.
\begin{itemize}
	\item On objects: for $P \colon \op\C \to \InfSL$, $\LA(P) = \Bicat_P$.
	\item On morphisms: for $P$ as above, $R \colon \op\D \to \InfSL$ and $(F,b) \colon P \to R$ in $\EED$,
	\[
	\begin{tikzcd}[row sep=0em]
	\Bicat_P \ar[r,"{\LA(F,b)}"] & \Bicat_R \\
	X \ar[r,|->] \ar[d,"P(X \times Y) \ni r"'] \ar[d,draw=none,""name=x] & FX \ar[d,"b_{X\times Y}(r) \in R(F(X) \times F(Y))"] \ar[d,draw=none,swap,""name=y]\\[2em]
	Y \ar[r,|->] & FY
	\arrow[|->,from=x,to=y]
	\end{tikzcd}
	\]
\end{itemize}
\end{proposition}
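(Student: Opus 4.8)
The plan is first to verify that, for each morphism $(F,b)\colon P\to R$ in $\EED$, the functor $\LA(F,b)$ described in the statement is a well-defined morphism in $\CBC$, and then that $\LA$ respects identities and composition. Write $\bar F=\LA(F,b)$: it agrees with $F$ on objects, so $\bar F(X\times Y)=FX\times FY$ because $F$ is strict cartesian, and it sends $r\in P(X\times Y)=\Hom_{\Bicat_P}(X,Y)$ to $b_{X\times Y}(r)\in R(FX\times FY)=\Hom_{\Bicat_R}(FX,FY)$. Throughout, the only facts needed are: naturality of $b$, so that $b_A\circ P_h=R_{Fh}\circ b_B$ for every $h\colon A\to B$ in $\C$; that each $b_A\colon P(A)\to R(FA)$ is a morphism of inf-semilattices, hence monotone and meet-preserving; that $b$ preserves equalities, $b_{A\times A}(\delta^P_A)=\delta^R_{F(A)}$; that $b$ preserves existential quantifiers along projections, as in~\eqref{eq:commutative square b preserves existential quantifier}; and that $F$, being strict cartesian, carries the chosen products, projections and diagonals of $\C$ to those of $\D$.

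First I would check that $\bar F$ is a functor. Preservation of identities is precisely preservation of equalities: $\bar F(\delta^P_X)=b_{X\times X}(\delta^P_X)=\delta^R_{F(X)}$, the identity of $FX$ in $\Bicat_R$. For composition, unfold $f\seq g=\exists^P_{\pi_Y}\bigl(P_{\pi_Z}(f)\wedge P_{\pi_X}(g)\bigr)$ as in Section~\ref{sec: from EED to CBC} and apply $b_{X\times Z}$, pushing it inside in three moves: (a) $b$ commutes with $\exists^P_{\pi_Y}$; (b) $b$ is meet-preserving; (c) $b$ is natural with respect to $\pi_Z$ and $\pi_X$; and then use that $F$ maps $\pi_X,\pi_Y,\pi_Z$ to the corresponding projections of $\D$. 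This gives $b(f\seq g)=\exists^R_{F(\pi_Y)}\bigl(R_{F(\pi_Z)}(bf)\wedge R_{F(\pi_X)}(bg)\bigr)=\bar F(f)\seq\bar F(g)$. The same pattern, with no existential involved, shows $\bar F$ preserves $\otimes$ on morphisms, starting from $f\otimes g=P_{\langle\pi_1,\pi_3\rangle}(f)\wedge P_{\langle\pi_2,\pi_4\rangle}(g)$; on objects $\otimes$ is the product in $\C$, resp.\ $\D$, and $\bar F$ acts as $F$, while the unit $I$ is preserved since $F$ preserves terminal objects. Monotonicity of the $b_A$ gives immediately that $\bar F$ preserves the local orders.

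It remains to see that $\bar F$ preserves the chosen comonoid and monoid. The copy on $X$ in $\Bicat_P$ is $\Gamma_P(\Delta_X)=P_{\Delta_X\times\id}(\delta^P_{X\times X})$ and the discharger is $\Gamma_P(!_X)$; by naturality of $b$ together with preservation of equalities (applied at the object $X\times X$), $b\bigl(\Gamma_P(\Delta_X)\bigr)=R_{F(\Delta_X\times\id)}(\delta^R_{F(X\times X)})=R_{\Delta_{FX}\times\id}(\delta^R_{FX\times FX})=\Gamma_R(\Delta_{FX})$, using that $F$ is strict cartesian, and similarly for the discharger. Strict monoidality of $\bar F$ then propagates this to the (co)monoid on every object. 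Since the monoid is the right adjoint of the comonoid (Proposition~\ref{prop:GammaP(f) has right adjoint}) and right adjoints are unique, preservation of the monoid follows either by repeating this computation on $P_{\id\times\Delta_X}(\delta^P_{X\times X})$ or by noting that a poset-enriched strict monoidal functor carries adjunctions to adjunctions. Hence $\bar F$ is a morphism in $\CBC$. Functoriality of $\LA$ is then immediate: $\LA(\id_\C,\id_P)$ is the identity of $\Bicat_P$, because $\id_\C$ is the identity on objects and $(\id_P)_{X\times Y}=\id$; and for $(F,b)\colon P\to R$, $(G,c)\colon R\to S$ one reads off, using $F(X\times Y)=FX\times FY$, that $\LA(GF,cF\circ b)(r)=c_{FX\times FY}\bigl(b_{X\times Y}(r)\bigr)=\LA(G,c)\bigl(\LA(F,b)(r)\bigr)$, and likewise on objects.

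I expect the only genuinely delicate point to be move (a) above: the hypothesis on $b$ concerns projections of the shape $X\times A\to A$, whereas $\pi_Y\colon X\times Y\times Z\to X\times Z$ forgets the \emph{middle} factor. This gap is closed by writing $\pi_Y$ as such a projection precomposed with a permutation isomorphism $\sigma$ of $\C$, using that left adjoints compose and $\exists^P_\sigma=P_{\sigma^{-1}}$, and commuting $b$ past $P_{\sigma^{-1}}$ by naturality; everything else is routine bookkeeping with naturality, meet-preservation and the strict cartesianness of $F$.
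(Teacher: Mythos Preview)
Your proposal is correct and follows essentially the same route as the paper's own proof: check that $\LA(F,b)$ is a functor (identities via preservation of $\delta$, composition via $b$ commuting with $\exists$, meets and reindexing), then that it is strict monoidal, order-preserving, and preserves the comonoid/monoid structure via the $\Gamma_P$ description, and finally that $\LA$ respects identities and composition. You are in fact more scrupulous than the paper at one point---the paper simply invokes ``$b$ preserves existential quantifiers'' for $\pi_Y\colon X\times Y\times Z\to X\times Z$, whereas you correctly flag that this projection is not literally of the shape required in Definition~\ref{def:category EED} and close the gap with a symmetry; the only item you leave implicit is preservation of the symmetry $\sigma^{\Bicat_P}=\Gamma_P(\sigma^\C)$, which follows by exactly the same naturality-plus-preservation-of-$\delta$ argument you give for the comonoid.
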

\begin{proof}

	First we check that $\LA(F,b)$ is a morphism of cartesian bicategories. Let $X$ be an object of $\C$: the identity morphism of $X$ in $\Bicat_P$ is $\delta^P_X$ and
	\[
	\LA(F,b)(\delta^P_X) = b_{X \times X} (\delta^P_X) = \delta^R_{F(X)} = \id^{\Bicat_R}_{F(X)}
	\]
	thus $\LA(F,b)$ preserves identities. Let now $r \in P(X \times Y)$ and $s \in P(Y \times Z)$. Then
	\begin{align*}
		\LA(F,b)(s \underset{\Bicat_P}{\circ} r) &= \LA(F,b) \Bigl( \exists^P_{XYZ \to XZ} \bigl( P_{XYZ \to XY} (r) \wedge P_{XYZ \to YZ}(s) \bigr)  \Bigr) \\
		&= b_{X \times Z} \Bigl( \exists^P_{XYZ \to XZ} \bigl( P_{XYZ \to XY} (r) \wedge P_{XYZ \to YZ}(s) \bigr)  \Bigr) \\
		&= \exists^R_{F(XYZ) \to F(XZ)} \Bigl( b_{X \times Y \times Z} \bigl( P_{XYZ \to XY} (r) \wedge P_{XYZ \to YZ} (s) \bigr) \Bigr) \\
		&= \exists^R_{F(XYZ) \to F(XZ)} \Bigl( b_{X \times Y \times Z} \bigl( P_{XYZ \to XY} (r) \bigr) \wedge b_{X \times Y \times Z} \bigl( P_{XYZ \to YZ}(s) \bigr) \! \Bigr) \\
		&= \exists^R_{F(XYZ) \to F(XZ)} \Bigl( R_{F (XYZ \to XY)} (b_{X \times Y} (r) \bigr) \wedge R_{F (XYZ \to YZ)} \bigl( b_{Y \times Z} (s) \bigr) \! \Bigr) \\
		&= b_{X \times Y} (s) \underset{\Bicat_R}{\circ} b_{X \times Y} (r) \\
		&=\LA(F,b) (s) \underset{\Bicat_R}{\circ} \LA(F,b)(r)
	\end{align*}
	where we used the fact that $b$ preserves existential quantifiers (third equation) and infima (fourth equation) and that $b$ is natural (fifth equation). Therefore $\LA(F,b)$ is a functor, we need to check that it preserves all the structure of cartesian bicategory of $\Bicat_P$.
	
	Preservation of the monoidal product of $\Bicat_P$ is done in an analogous way to the proof above of preservation of compositions, where commutativity of $b$ with existential quantifiers is not needed. Regarding the symmetry, recall that $\sigma^{\Bicat_P}_{X,Y} = \Gamma_P(\sigma^\C_{X,Y})$, hence:
	\begin{align*}
		\LA(F,b)(\sigma^{\Bicat_P}_{X,Y}) &= b_{X \times Y \times Y \times X} (\sigma^{\Bicat_P}_{X,Y}) \\
		&= b_{X \times Y \times Y \times X} \bigl( P_{\sigma^\C_{X,Y} \times \id_{Y \times X}} (\delta^P_{Y \times X})  \bigr) \\
		&= R_{F (\sigma^\C_{X,Y} \times \id_{Y \times X})} \bigl(b_{Y \times X \times Y \times X} (\delta^P_{Y \times X}) \bigr)
	\end{align*}
	by naturality of $b$. Since $F$ preserves the cartesian structure of $\C$, we have that $F (\sigma^\C_{X,Y} \times \id_{Y \times X}) = \sigma^\D_{FX,FY} \times \id_{FY \times FX}$; moreover, $b_{Y \times X \times Y \times X} (\delta^P_{Y \times X}) = \delta^R_{F(Y \times X)}$, therefore
	\[
	\LA(F,b) (\sigma^{\Bicat_P}_{X,Y}) = R_{\sigma^\D_{FX,FY} \times \id_{FY \times FX}} \bigl( \delta^R_{F(Y \times X)} \bigr) = \sigma^{\Bicat_R}_{FX, FY}.
	\]
	
	Analogously one shows that $\LA(F,b)$ preserves the comonoid structure of any object $X$ in $\Bicat_P$, since it is defined as the image along $\Gamma_P$ of copying and discarding in $\C$.  
	
	Finally, monotonicity of $b$ immediately implies that $\LA(F,b)$ preserves the partial order of $P(X \times Y)$ for every $X,Y$ in $\Bicat_P$. This proves that $\LA(F,b)$ is indeed a morphism of cartesian bicategories.
	
	It is left to prove that $\LA$ is functorial. It is immediate to see that for $P \colon \op\C \to \InfSL$, $\LA(\id_\C,\id_P) = \id_{\Bicat_P}$. As per preservation of compositions in $\EED$: let $R \colon \op\D \to \InfSL$ and $S \colon \op\E \to \InfSL$ be elementary existential doctrines, and let $(F,b) \colon P \to R$ and $(G,c) \colon R \to S$ in $\EED$. We have that $(G,c) \circ (F,b) = (GF, cF \circ b)$, therefore
	\[
	\begin{tikzcd}[row sep=0em,column sep=4em]
		\Bicat_P \ar[r,"{\LA \bigl( (G,c) \circ (F,b) \bigr)}"] & \Bicat_S \\
		X \ar[r,|->] \ar[d,"P(X \times Y) \ni r"'] \ar[d,draw=none,""name=x] & GFX \ar[d,"c_{F(X \times Y)} \bigl(b_{X\times Y}(r) \bigr)"] \ar[d,draw=none,swap,""name=y]\\[2em]
		Y \ar[r,|->] & FY
		\arrow[|->,from=x,to=y]	
	\end{tikzcd}
	\]
	while
	\[
	\begin{tikzcd}[row sep=0em]
	\Bicat_P \ar[r,"{\LA(F,b)}"] & \Bicat_R \ar[r,"{\LA(G,c)}"] & \Bicat_S \\
	X \ar[r,|->] \ar[d,"P(X \times Y) \ni r"'] \ar[d,draw=none,""name=x] & FX \ar[d,"b_{X\times Y}(r)"{name=y,description}]  \ar[r,|->] & GFX \ar[d,"{c_{FX \times FY} \bigl( b_{X \times Y} (r) \bigr)}"] \ar[d,draw=none,swap,""name=z]\\[3em]
	Y \ar[r,|->] & FY \ar[r,|->] & GFY
	\arrow[|->,from=x,to=y]
	\arrow[|->,from=y,to=z]
	\end{tikzcd}
	\]
	The two functors indeed coincide.
\end{proof}

\section{Appendix to \S~\ref{sec:adjunction}}\label{appendix:adjunction}

We first show that $\eta$~\eqref{eq:eta} is a natural transformation in Lemma~\ref{lemma:eta natural transformation} and that $\epsilon$~\eqref{eq:epsilon} is a natural isomorphism in Lemma~\ref{lemma:epsilon natural isomorphism}, and then that they satisfy the triangular equalities required by the definition of adjunction in Theorem~\ref{thm:adjunction in appendix}.

\paragraph*{Proof of Theorem~\ref{thm:adjunctionNEW}}
\begin{lemma}\label{lemma:eta natural transformation}
	There is a natural transformation
	\[
	\eta \colon \id_{\EED} \to \RA \LA
	\]
	whose $P$-th component, for $P \colon \op\C \to \InfSL$ an elementary existential doctrine, is $\eta_P = (\Gamma_P, P\rho)$, where $(P\rho)_X = P_{\rho_X}$, with $\rho_X \colon X \times I \to X$ the right unitor in $\C$ (and $I$ the terminal object).
\end{lemma}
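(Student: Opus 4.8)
The plan is to verify two things, in order: \textbf{(a)} that for every elementary existential doctrine $P \colon \op\C \to \InfSL$ the prescribed pair $\eta_P = (\Gamma_P, P\rho)$ is a well-defined morphism $P \to \RA\LA(P)$ in $\EED$, and \textbf{(b)} that these pairs are natural in $P$.

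For (a), recall that $\RA\LA(P) = \RA(\Bicat_P) = \Hom_{\Bicat_P}(-,I) \colon \op{\Map(\Bicat_P)} \to \InfSL$; since $\Gamma_P$ is the identity on objects, $\RA\LA(P)\circ\op{\Gamma_P}$ sends an object $X$ of $\C$ to $\Hom_{\Bicat_P}(X,I) = P(X\times I)$, so the components $(P\rho)_X = P_{\rho_X}\colon P(X)\to P(X\times I)$ have the right type, and by Proposition~\ref{prop:GammaP(f) has right adjoint} the functor part $\Gamma_P\colon\C\to\Map(\Bicat_P)$ is strict cartesian, as Definition~\ref{def:category EED} demands. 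Three conditions then remain. First, naturality of $P\rho$: here I would first record the auxiliary identity that precomposing with a graph is reindexing, $\Gamma_P(f)\seq g = P_{f\times\id_Z}(g)$ for $f\colon X\to Y$ in $\C$ and $g\in\Hom_{\Bicat_P}(Y,Z)=P(Y\times Z)$, proved by the same Beck--Chevalley manipulation as in the proof of Lemma~\ref{lem:Functor}. Since $\RA\LA(P)_{\Gamma_P(f)}(U) = \Gamma_P(f)\seq U = P_{f\times\id_I}(U)$, naturality of $P\rho$ reduces to $P_{\rho_X}(P_f(\beta)) = P_{f\times\id_I}(P_{\rho_Y}(\beta))$, and both sides equal $P_{\rho_Y\circ(f\times\id_I)}(\beta) = P_{f\circ\rho_X}(\beta)$ by naturality of the right unitor $\rho$. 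Second, $P\rho$ must preserve equality: expanding $\delta^{\RA\LA(P)}_A$ through its definition in $\RA(\Bicat_P)$ (Theorem~\ref{thm:doctrine associated to cbc}) together with the description of the monoid structure of $\Bicat_P$ as the right adjoints of the graphs of $\Delta_A$ and $!_A$ (Proposition~\ref{prop:GammaP(f) has right adjoint}), the equation $(P\rho)_{A\times A}(\delta^P_A)=\delta^{\RA\LA(P)}_A$ becomes an identity in $P(A\times A\times I)$ that follows from Lemma~\ref{lem:DeltaExistsTop}, Corollary~\ref{cor:P_!(delta_I)=top} and Beck--Chevalley. Third, $P\rho$ must make~\eqref{eq:commutative square b preserves existential quantifier} commute; unwinding the definition of $\exists$ first in $\Bicat_P$ and then in $\RA(\Bicat_P)$, this again reduces to Beck--Chevalley and Frobenius reciprocity for $P$, exactly as in the proof of Theorem~\ref{thm:doctrine associated to cbc}.

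For (b), let $(F,b)\colon P\to R$ in $\EED$ and write $\RA\LA(F,b)=(\bar F,\bar b)$, so that $\bar F = \LA(F,b)\restriction_{\Map(\Bicat_P)}$ and $\bar b_X = b_{X\times I}(-)$ by the formula $b^G_A(U)=G(U)$ defining $\RA$ on morphisms. The functor parts of $\eta_R\circ(F,b)$ and $\RA\LA(F,b)\circ\eta_P$ agree because $\LA(F,b)\circ\Gamma_P = \Gamma_R\circ F$: for $f\colon X\to Y$ in $\C$, $\LA(F,b)(\Gamma_P(f)) = b_{X\times Y}(P_{f\times\id_Y}(\delta^P_Y)) = R_{F(f\times\id_Y)}(b_{Y\times Y}(\delta^P_Y)) = R_{Ff\times\id_{FY}}(\delta^R_{FY}) = \Gamma_R(Ff)$, using naturality of $b$, strict cartesianness of $F$, and that $b$ preserves equality. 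The natural-transformation parts agree because the identity to check is $b_{X\times I}\circ P_{\rho_X} = R_{\rho_{FX}}\circ b_X$, which is naturality of $b$ at $\rho_X\colon X\times I\to X$ together with $F(\rho_X)=\rho_{FX}$. The main obstacle is the second and third conditions in (a): they require translating back and forth across the two nested constructions (from $\C$ to $\Bicat_P$, and then from $\Bicat_P$ to $\RA(\Bicat_P)$) and placing the Beck--Chevalley and Frobenius applications in the right spots, much as in the proofs of Theorem~\ref{thm:doctrine associated to cbc} and Lemma~\ref{lem:Gamma(f) has a right adjoint}; the auxiliary graph-composition identity used for naturality is the only genuinely new computation, and it is routine.
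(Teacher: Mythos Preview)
Your proposal is correct and follows essentially the same structure as the paper's proof: verify that $\eta_P=(\Gamma_P,P\rho)$ is a morphism in $\EED$ (strict-cartesianness of $\Gamma_P$, naturality of $P\rho$, preservation of $\delta$ and of $\exists$), then check naturality in $P$ by decomposing into the functor part and the transformation part. The only noteworthy difference is that you make explicit the auxiliary identity $\Gamma_P(f)\seq g = P_{f\times\id}(g)$ to justify naturality of $P\rho$, whereas the paper dismisses this step as ``obvious, being the whiskering of a functor with a natural transformation''; your version is arguably the more honest one, since the target of $P\rho$ is $\RA\LA(P)\circ\op{\Gamma_P}$ rather than $P\circ\op{(-\times I)}$ on the nose, and identifying the two on morphisms is exactly that auxiliary identity.
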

\begin{proof}
	Let us fix $P \colon \op\C \to \InfSL$ an elementary existential doctrine. Then $\LA(P)=\Bicat_P$, whose objects are the objects of $\C$, and hom-sets are $\Bicat_P(X,Y)=P(X \times Y)$. This means that 
	\[
	\RA\LA(P) = \Hom_{\Bicat_P}(-,I) = P({-}\times I) \colon \op{\Map({\Bicat_P})} \to \InfSL.
	\]
	To give a morphism $\eta_P \colon P \to \RA\LA(P)$ in $\EED$ means therefore to give a functor $F \colon \C \to \Map(\Bicat_P)$ and a natural transformation $b \colon P \to P(F(-) \times I)$ satisfying certain conditions. We have proved in Lemma~\ref{lem:Functor}, Remark~\ref{rem:F(f times g) = Ff otimes Fg} and Lemma~\ref{lem:Gamma(f) has a right adjoint} that $\Gamma_P$ is a functor whose image, in fact, is included in $\Map(\Bicat_P)$ and, moreover, it is cartesian. Being the identity on objects, the natural transformation part of the definition of $\eta_P$ must have components $b_X \colon P(X) \to P(X \times I)$, therefore $P\rho$ type-checks. That $P\rho$ is natural is obvious, being the whiskering of a functor with a natural transformation; we need to check that it preserves the equalities and existential quantifiers of $P$.
	
	Regarding equalities, we have to show that $P_{\rho_{A\times A}}(\delta^P_A) = \delta^{\RA\LA(P)}_{A}$. We have that 
	\[
	\delta^{\RA\LA(P)}_A = \stikzfig{DeltaA}
	\]
	which is the composite, in $\Bicat_P$, of $\op{\Gamma_P{(\Delta_A)}}$ and $\Gamma_P(! \colon A \to I)$. Since $\Gamma_P(! \colon A \to I) = \top_{P(A \times I)}$, we have:
	\[
		\delta^{\RA\LA(P)}_A = \exists_{A^2 \times A \times I \to A^2 \times I} \Bigl( P_{A^2 \times A \times I \to A^2 \times A} \bigl( P_{\id_{A^2} \times \Delta_A} (\delta^P_{A \times A}) \bigr) \Bigr).
	\]
	Using the Beck-Chevalley condition of $P$ on the following pullback:
	\[
	\begin{tikzcd}
	A^2 \times A \times I \ar[r,"{\langle \pi_1, \pi_3 \rangle}"] \ar[d,"\rho_{A^2 \times A}"'] & A^2 \times I \ar[d,"\rho_{A^2}"] \\
	A^2 \times A \ar[r,"\pi_1"] & A^2
	\end{tikzcd}
	\]
	and recalling that $\delta^P_{A \times A} = P_{\langle \pi_1,\pi_3 \rangle}(\delta^P_A) \wedge P_{\langle \pi_2, \pi_4 \rangle}(\delta^P_A)$, we obtain:
	\begin{align*}
		\delta^{\RA\LA(P)}_A &= P_{\rho_{A^2}} \Bigl( \exists_{A^2 \times A \to A} \bigl( P_{\id_{A^2} \times \Delta_A} (P_{\langle \pi_1,\pi_3 \rangle}(\delta^P_A) \wedge P_{\langle \pi_2, \pi_4 \rangle}(\delta^P_A)) \bigr) \Bigr) \\
		&= P_{\rho_{A^2}} \Bigl( \exists_{A^2 \times A \to A^2} \bigl( P_{\langle \pi_1, \pi_3 \rangle \colon A^3 \to A^2} (\delta^P_A) \wedge P_{\langle \pi_2, \pi_3 \rangle \colon A^3 \to A^2} (\delta^P_A) \bigr)  \Bigr) \\
		&=P_{\rho_{A^2}} \Bigl( \exists_{A^2 \times A \to A^2} \bigl( \exists_{\id_A \times \Delta_A} (\delta^P_A) \bigr) \Bigr) \\
		&= P_{\rho_{A^2}} (\delta^P_A).
	\end{align*}
	
	Regarding the preservation of the existential quantifiers of $P$, we need to check that the square~\eqref{eq:commutative square b preserves existential quantifier} commutes, which means that, for $r \in P(X \times A)$ and $\pi \colon X \times A \to A$, we have to prove that
	\begin{equation}\label{eq:Prho preserves existential quantifier}
	P_{\rho_A} \bigl( \exists^P_\pi (r) \bigr) = \exists^{\RA\LA(P)}_{\Gamma(\pi)} \bigl( P_{\rho_{X \times A}}(r) \bigr)
	\end{equation}
	(where we will continue to write $\Gamma$ instead of $\Gamma_P$ since there is no possibility of confusion).
	
	First we compute $\exists^{\RA\LA(P)}_{\Gamma(\pi)} \colon \Bicat_P(X \times A, I) \to \Bicat_P(A,I)$: by definition, it maps $u \colon X \times A \to I$ morphism in $\Bicat_P$ to the following composite in $\Bicat_P$:
	\[
	\begin{tikzcd}
	A \ar[r,"\lambda_A^{-1}"] & I \times A \ar[r,"{f}"] & X \times A \ar[r,"u"] & I
	\end{tikzcd}
	\]
	where $f = \stikzfig{CodisX} \times \id_A = \op{\Gamma(\stikzfig{DisX})} \times \delta^P_A$ (product computed in $\Bicat_P$) and $\lambda_A^{-1} = \Gamma(\lambda_A^{-1})$ (the first $\lambda$ being the unitor in $\Bicat_P$, the second in $\C$). We begin by calculating the composition in $\Bicat_P$ of $\Gamma(\lambda_A^{-1})$ with $f$. Since $\op{\Gamma(\stikzfig{DisX})} = P_{\id_I \times !_X}$ and $\id_I \times !_X = !_{I \times X} \colon I \times X \to I \times I$, we have, by Corollary~\ref{cor:P_!(delta_I)=top}, that $\op{\Gamma(\stikzfig{DisX})} = \top_{P(I \times X)}$. Therefore, if we annotate with subscripts different copies of the same object $A$ to indicate the various projections in a concise way:
	\begin{align*}
		f \underset{\Bicat_P}\circ {\lambda_A^{-1}} = \exists^P_{A_1 I A_2 X A_3 \to A_1 A_3} \Bigl( &P_{A_1 I A_2 X A_3 \to A_1 I A_2} \bigl( P_{\lambda_A^{-1} \times \id_{I \times A}} (\delta_{I \times A}) \bigr) \\
		& \wedge P_{A_1 I A_2 X A_3 \to A_2 A_3} (\delta_A) \Bigr).
	\end{align*}
	Since, moreover, 
	\[
	\delta_{I \times A} = P_{I A I A \to II} (\delta_I) \wedge P_{IAIA \to AA} (\delta_A) = \top \wedge P_{IAIA \to AA} (\delta_A) = P_{IAIA \to AA} (\delta_A),
	\]
	we have
	\begin{align*}
		f \underset{\Bicat_P}\circ {\lambda_A^{-1}} &= \exists_{A_1 I A_2 X A_3 \to A_1 X A_3} \bigl( P_{A_1 I A_2 X A_3 \to A_1 A_2}(\delta_A) \wedge P_{A_1 I A_2 X A_3 \to A_2 A_3} (\delta_A)\bigr) \\
		&=\exists_{A_1 I A_2 X A_3 \to A_1 X A_3} \Bigl( P_{A_1 I A_2 X A_3 \to A_1 I A_2 X} \bigl( P_{A_1 I A_2 X \to A_1 A_2} (\delta_A) \bigr) \\
		&\hspace{10em}\wedge P_{A_1 I A_2 X A_3 \to A_2 A_3} (\delta_A) \Bigr) \\
		&=\exists_{A_1 I A_2 X A_3 \to A_1 X A_3} \Bigl( \exists_{A_1 I A_2 X \to A_1 I A_2 X A_2} \bigl(P_{A_1 I A_2 X \to A_1 A_2} (\delta_A) \bigr) \Bigr) \\
		&=\exists_{A_1 I A_2 X \to A_1 X A_2}\bigl(P_{A_1 I A_2 X \to A_1 A_2} (\delta_A) \bigr) \\
		&=P_{A_1 X A_2 \to A_1 I A_2 X} \bigl(P_{A_1 I A_2 X \to A_1 A_2} (\delta_A) \bigr) \\
		&=P_{A_1 X A_2 \to A_1 A_2} (\delta_A).
	\end{align*}
	Composing the above with $P_{\rho_{X \times A}} (r)$, we obtain the right-hand side of~\eqref{eq:Prho preserves existential quantifier}, as follows:
	\begin{align*}
		\exists^{\RA\LA(P)}_{\Gamma(\pi)} \bigl(P_{\rho_{X \times A}} (r) \bigr) &=
		\exists_{AXAI \to AI} \Bigl( P_{AXAI \to AXA} \bigl( P_{AXA \to AA} (\delta_A) \bigr) \\
		&\hspace{6.5em}
		\wedge P_{A_1 X A_2 I \to X A_2 I} \bigl( P_{\rho_{X \times A}} (r) \bigr)\Bigr) \\
		&= \exists_{AXAI \to AI} \bigl( P_{AXAI \to AA} (\delta_A) \wedge P_{A_1 X A_2 I \to XA_2} (r) \bigr) \\
		&=\exists_{AXAI \to AI} \Bigl( P_{A_1 X A_2 I \to X A_2 I} \bigl( P_{X A_2 I \to XA_2} (r) \bigr) \\ &\hspace{6.5em} \wedge P_{AXAI \to AA}(\delta_A) \Bigr) \\
		&=\exists_{AXAI \to AI} \Bigl( \exists_{XAI \to AXAI} \bigl(P_{XAI \to XA}(r)\bigr)\Bigr) \\
		&= \exists_{XAI \to AI} \bigl( P_{\rho_{X \times A}} (r) \bigr).
	\end{align*}
	Therefore equation~\eqref{eq:Prho preserves existential quantifier} reduces to
	\[
	P_{\rho_A} \bigl(\exists^P_{\pi} (r)\bigr) = \exists^P_{XAI \to AI} \bigl( P_{\rho_{X \times A}} (r) \bigr)
	\]
	which holds because the following square is a pullback in $\C$:
	\[
	\begin{tikzcd}
	X \times A \times I \ar[r,"{\langle \pi_2, \pi_3 \rangle}"] \ar[d,"\rho_{X \times A}"'] & A \times I \ar[d,"\rho_A"] \\
	X \times A \ar[r,"\pi"] & A
	\end{tikzcd}
	\]
	therefore we can use the Beck-Chevalley condition of $P$.
	
	Finally, we prove that $\eta$ is a natural transformation. Let $R \colon \op\D \to \InfSL$ and $(F,c) \colon P \to R$ be a morphism in $\EED$. Then
	\[
	\RA\LA(F,c) = \bigl( \LA(F,c)\restriction_{\Map(\Bicat_P)} , b^{\LA(F,c)} \bigr)
	\]
	where recall that 
	\[
	\begin{tikzcd}[row sep=0em]
    \Bicat_P \ar[r,"{\LA(F,c)}"] & \Bicat_R \\
    X \ar[r,|->] \ar[d,"P(X \times Y) \ni r"'] & FX \ar[d,"c_{X \times Y} (r)"] \\[2em]
    Y \ar[r,|->] & FY
	\end{tikzcd}
	\quad \text{and} \quad
	\begin{tikzcd}[row sep=0em]
	P(X \times I) \ar[r,"{b^{\LA(F,c)}}"] & R(FX \times I) \\
	u \ar[r,|->] & c_{X \times I} (u)
	\end{tikzcd}
	\]
	We then need to prove that
	\begin{equation}\label{eq:naturality square eta}
	    \begin{tikzcd}
	    P \ar[r,"\eta_P"] \ar[d,"{(F,c)}"'] & \RA\LA(P) \ar[d,"{\RA\LA(F,c)}"] \\
	    R \ar[r,"\eta_R"'] & \RA\LA(R)
	    \end{tikzcd}
	\end{equation}
	commutes in $\EED$. In the first component, we are asking for the following square of functors to commute:
	\[
	\begin{tikzcd}
	\C \ar[r,"\Gamma_P"] \ar[d,"F"'] & \Map(\Bicat_P) \ar[d,"{\LA(F,c)\restriction_{\Map(\Bicat_P)}}"] \\
	\D \ar[r,"\Gamma_R"'] & \Map(\Bicat_R)
	\end{tikzcd}
	\]
	The upper and lower legs send a morphism $f \colon X \to Y$ in $\C$ to
	\[
	c_{X \times Y} \bigl( P_{f \times \id_Y} (\delta^P_Y) \bigr) \quad \text{and} \quad R_{Ff \times \id_{FY}} (\delta^R_{FY})
	\]
	respectively, and those expressions are the same map in $\Bicat_R$ due to the naturality condition of $c \colon P \to R\op F$ applied to the morphism $f \times \id_Y \colon X \times Y \to Y \times Y$, which states that
	\[
	\begin{tikzcd}
	P(X \times Y) \ar[r,"c_{X \times Y}"] & RF(X \times Y) \\
	P(Y \times Y) \ar[u,"P_{f \times \id_Y}"] \ar[r,"c_{Y \times Y}"'] & RF(Y \times Y) \ar[u,"RF_{f \times \id_Y} = R_{Ff \times \id_{FY}}"']
	\end{tikzcd}
	\]
	commutes, hence, given that $\delta^P_Y \in P(Y \times Y)$, we have
	\[
	c_{X \times Y} \bigl( P_{f \times \id_Y} (\delta^P_Y) \bigr) = R_{Ff \times \id_{FY}} \bigl(c_{Y \times Y} (\delta_Y) \bigr) = R_{Ff \times \id_{FY}} (\delta^R_{FY}).
	\]
	In the second component, \eqref{eq:naturality square eta} becomes
	\[
	\begin{tikzcd}
	P \ar[r,"P\rho"] \ar[d,"c"'] & P(- \times I) \ar[d,"c_{- \times I}"] \\
	R \ar[r,"R\rho"'] & R(- \times I)
	\end{tikzcd}
	\]
	which, on the $X$-th component, is a square in $\InfSL$:
	\[
	\begin{tikzcd}
	P(X) \ar[r,"P(\rho_X)"] \ar[d,"c_X"'] & P(X \times I) \ar[d,"c_{X \times I}"] \\
	RF(X) \ar[r,"R(\rho_{FX})"'] & R(FX \times I)
	\end{tikzcd}
	\]
	(Notice that $\rho_{FX} = F(\rho_X)$ because $F$ is strict cartesian.) This square does indeed commute due to the naturality of $c$ applied to the morphism $\rho_X \colon X \to  X \times I$ in $\op \C$.
	\end{proof}
	
	It is convenient to analyse the category $\LA\RA(\B)$ for a cartesian bicategory $\B$. Recall that $\RA(\B) = \Hom_{\B}(-,I)$ and that, for $f \colon X \to Y$ in $\Map(\B)$ and $u \colon Y \to I$ morphism in $\B$, $\RA(\B)_f(u)= u \circ f \colon X \to I$. By definition then, $\LA\RA(\B)=\Bicat_{\RA(\B)}$ has the same objects of $\B$ while a morphism $X \to Y$ in $\LA\RA(\B)$ is an element of $\B(X \times Y, I)$. 

\begin{itemize}
	\item \textbf{Composition.} 
If $R \colon X \times Y \to I$ and $S \colon Y \times Z \to I$ are morphisms of $\B$, then
\begin{equation}\label{eq:composition in LARA(B)}
\begin{aligned}
	S \underset{\LA\RA(\B)}{\circ} R &= \exists_{XYZ \to XZ}^{\RA(\B)} \bigl( \RA(\B)_{XYZ \to XY} (R) \wedge \RA(\B)_{XYZ \to YZ} (S) \bigr) \\
	&=\exists_{XYZ \to XZ}^{\RA(\B)} \Bigl( \stikzfig{RApiR} \wedge \stikzfig{RApiS} \Bigr) \\
	&=\exists_{XYZ \to XZ}^{\RA(\B)} \left( \stikzfig{RApiRinfRApiS} \right) \\
	&= {\stikzfig{ScircRinLARAB}}
\end{aligned}
\end{equation}
(where the above projections appearing in the subscripts are projections in $\Map(\B)$, which are given by ${\stikzfig{Dis}}$ and the unitors).

\item \textbf{Monoidal product.} Take $R \colon X \times X' \to I$ and $S \colon Y \times Y' \to I$ in $\B$, which means that $R \colon X \to X'$ and $S \colon Y \to Y'$ in $\LA\RA(\B)$. Then
\begin{equation}\label{eq:tensor in LARA(B)}
\begin{aligned}
	R \otimes S &= \RA(\B)_{XYX'Y' \to XX'} (R) \land \RA(\B)_{XYX'Y' \to YY'} (S) \\
				&= {\stikzfig{RApiF}} \quad \wedge \quad {\stikzfig{RApiG}} \\
				&= {\stikzfig{RApiFinfRApiG}} \\
				&= {\stikzfig{fTensorGinLARA}}
\end{aligned}
\end{equation}
\end{itemize}

\begin{lemma}\label{lemma:epsilon natural isomorphism}
	There is a natural isomorphism
	\[
	\epsilon \colon \LA\RA \to \id_{\CBC}
	\]
	whose $\B$-th component, for $\B$ a cartesian bicategory, sends $R \in \LA\RA(\B)(X,Y)=\B(X\times Y,I)$ to the composite
	\[
	\begin{tikzcd}
	X \ar[r,"\rho_X^{-1}"] & X \times I \ar[r,"\bar R"] & I \times Y \ar[r,"\lambda_Y"] & Y
	\end{tikzcd}
	\]
	where $\rho$ and $\lambda$ are, respectively, the right and left unitors in $\B$ and $\bar R$ is:
	\begin{equation}\label{eq:epsilon(r)}
	\stikzfig{EpsilonR}
	\end{equation}
	The inverse $\epsilon_\B^{-1}$ of $\epsilon_\B$ sends $S \colon X \to Y$ in $\B$ to
	\[
	\stikzfig{epsilonINVs}
	\]
\end{lemma}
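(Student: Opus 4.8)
The plan is to verify, in turn, that each component $\epsilon_\B$ is an isomorphism of cartesian bicategories $\LA\RA(\B)\to\B$ with inverse $\inv\epsilon_\B$, and then that the family $(\epsilon_\B)_\B$ is natural in $\B$. First I would settle the types and the bijection on hom-sets. Since $\RA(\B)$ has base category $\Map(\B)$, whose objects are those of $\B$, and $\LA$ does not change objects, the objects of $\LA\RA(\B)$ are exactly the objects of $\B$; hence $\epsilon_\B$ is the identity on objects, and a morphism $R\colon X\to Y$ of $\LA\RA(\B)$ is precisely an element of $\Hom_\B(X\times Y,I)$, which $\epsilon_\B$ sends to the morphism $X\to Y$ of $\B$ obtained by the bending in~\eqref{eq:epsilon(r)}. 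Both $\epsilon_\B$ and $\inv\epsilon_\B$ are assembled from composition and from tensoring with the cups and caps of the compact closed structure furnished by the Frobenius law (Property~\ref{frob} of Definition~\ref{def:cartBicat}); as composition and $\otimes$ are monotone, both are monotone on hom-posets, and the snake equations~\eqref{eq:snake} give $\inv\epsilon_\B\circ\epsilon_\B=\id$ and $\epsilon_\B\circ\inv\epsilon_\B=\id$. So each $\epsilon_\B$ is already an order-isomorphism $\LA\RA(\B)(X,Y)\cong\Hom_\B(X\times Y,I)$; what remains is to see that it is a morphism of cartesian bicategories.

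Next I would check strict monoidal functoriality. For composition I would use the explicit string-diagram description of composition in $\LA\RA(\B)$ recorded in~\eqref{eq:composition in LARA(B)}, which connects the $Y$-wires of $R$ and $S$: bending the remaining wires and applying the snake equations yields $\epsilon_\B(S\circ R)=\epsilon_\B(S)\circ\epsilon_\B(R)$; for identities, the identity on $X$ in $\LA\RA(\B)$ is $\delta^{\RA(\B)}_X$, the cap $X\otimes X\to I$, which a single snake equation sends to $\id_X$. On objects the monoidal product of $\LA\RA(\B)$ is the product of $\Map(\B)$, i.e.\ the tensor of $\B$, so $\epsilon_\B$ is strictly monoidal on objects and fixes $I$; for morphisms I would use the description of the tensor in~\eqref{eq:tensor in LARA(B)} and again slide wires to obtain $\epsilon_\B(R\otimes S)=\epsilon_\B(R)\otimes\epsilon_\B(S)$. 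Monotonicity was already noted, so the ordering is preserved.

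Then I would treat the (co)monoid structures. The comonoid on $X$ in $\LA\RA(\B)$ is, by the construction of $\LA$, the image under the graph functor $\Gamma_{\RA(\B)}$ of the canonical comonoid on $X$ in the cartesian category $\Map(\B)$, and by Remark~\ref{rem:inv epsilon = Gamma R(B) on maps} we have $\Gamma_{\RA(\B)}=\inv\epsilon_\B\restriction_{\Map(\B)}$; hence $\epsilon_\B$ carries this comonoid back to the chosen comonoid of $\B$. The monoid on $X$ in $\LA\RA(\B)$ is the right adjoint of its comonoid (Property~\ref{monoid}); since $\epsilon_\B$ is a monotone functor which is an order-isomorphism on hom-posets, it preserves and reflects adjunctions, so it sends this monoid to the right adjoint in $\B$ of the comonoid of $\B$, namely the chosen monoid of $\B$. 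This makes $\epsilon_\B$ a morphism in $\CBC$; being a bijection on objects and an order-isomorphism on hom-posets with two-sided inverse $\inv\epsilon_\B$, which is therefore itself a morphism in $\CBC$, the functor $\epsilon_\B$ is an isomorphism in $\CBC$.

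Finally, naturality. For $F\colon\A\to\B$ in $\CBC$ I would compare the two legs of the square $\LA\RA(\A)\to\B$: both are the identity on objects, and on a morphism $R\in\Hom_\A(X\times Y,I)$ of $\LA\RA(\A)$ the functor $\LA\RA(F)$ acts simply as $F$, since $\RA(F)=(F\restriction_{\Map(\A)},b^F)$ with $b^F_A(U)=F(U)$ and $\LA$ of a morphism acts through its natural-transformation component; so one leg gives $\epsilon_\B(F(R))$ while the other gives $F(\epsilon_\A(R))$. Because $F$ preserves the Frobenius structure, hence the cups and caps, and is strictly monoidal and composition-preserving, these agree, giving naturality. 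I expect the only genuinely laborious point to be the diagrammatic bookkeeping for functoriality and strict monoidality of $\epsilon_\B$ against~\eqref{eq:composition in LARA(B)} and~\eqref{eq:tensor in LARA(B)}; everything else is formal.
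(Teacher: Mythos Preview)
Your proposal is correct and follows essentially the same approach as the paper's proof: snake equations for the bijection on hom-sets, the identification $\inv\epsilon_\B=\Gamma_{\RA(\B)}$ on $\Map(\B)$ for the comonoid and symmetry, the explicit formulas~\eqref{eq:composition in LARA(B)} and~\eqref{eq:tensor in LARA(B)} for strict monoidal functoriality, and preservation of the Frobenius structure by $F$ for naturality. Your explicit handling of the monoid via preservation of adjunctions by an order-isomorphism is a mild improvement over the paper, which leaves this point implicit.
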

\begin{proof}
	The plan of the proof is as follows: first we show that $\epsilon_\B$ and $\epsilon_\B^{-1}$ are inverse of each other as functions on hom-sets, in the sense that for any $R \colon X \times Y \to I$ and $S \colon X \to Y$ in $\B$, $\inv\epsilon_\B \epsilon_\B (R)=R$ and $\epsilon_\B \inv\epsilon_\B (S)=S$. Next we observe that $\inv\epsilon_\B$ and $\Gamma_{\RA(B)}$ coincide on $\Map(\B)$. This immediately implies that $\inv\epsilon_\B$ preserves symmetries and the comonoid structure on every object of $\B$, while the Snake Lemma will imply that the same holds for $\epsilon_\B$. Then we prove that both $\epsilon_\B$ and its inverse are strict monoidal functors, thus morphisms in $\CBC$. Finally, we show naturality of $\epsilon$.
	
	First, from the snake equations~\eqref{eq:snake} it immediately follows that for any $R \colon X \times Y \to I$ and $S \colon X \to Y$ in $\B$, $\inv\epsilon_\B \epsilon_\B (R)=R$ and $\epsilon_\B \inv\epsilon_\B (S)=S$, so $\epsilon_\B$ and $\epsilon_\B^{-1}$ are inverse of each other as functions on hom-sets.
	
	Next, notice that
	\[
	\inv\epsilon_\B (S) = \delta^{\RA(\B)}_Y \circ (S \times \id_Y) = \Gamma_{\RA(\B)}(S)
	\]
	the last equation holding if and only if $S$ is a map in $\B$ (otherwise we cannot compute $\Gamma_{\RA(\B)}$ on it). This means that $\inv\epsilon_\B$ is an extension of $\Gamma_{\RA(\B)}$ to all of $\B$. Since the symmetry in $\LA\RA(\B)$ is defined as $\Gamma_{\RA(\B)}(\sigma^{\Map(\B)}) = \Gamma_{\RA(\B)} (\sigma^{\B})$ and $\sigma^\B$ is a map in $\B$, we have 
	\[
	\inv\epsilon_\B (\sigma^\B) = \Gamma_{\RA(\B)} (\sigma^\B) = \sigma^{\LA\RA(\B)}.
	\]
	The same can be said for the comultiplication and the counit of every object $X$ in $\B$. Applying $\epsilon_\B$ to both sides of the above equation, one can see that the same holds for $\epsilon_\B$, therefore $\inv\epsilon_\B$ and $\epsilon_\B$ preserve symmetries and the comonoid structure of every object of $\B$ and $\LA\RA(\B)$ respectively. Preservation of the ordering of the hom-sets is guaranteed by the fact that tensor and composition in $\B$ are monotone operations. 
	
	We proceed by showing that $\epsilon_\B$ and $\inv\epsilon_\B$ are strict monoidal functors. 

	Given $R \colon X \times Y \to I$ and $S \colon Y \times Z \to I$ in $\B$, from~\eqref{eq:composition in LARA(B)} we get that
	\[
	\epsilon_\B (s \circ r) = \stikzfig{epsilonScircR} \quad = \stikzfig{epsilonScircEpsilonR} \quad = \epsilon_\B (s) \circ \epsilon_\B(r).
	\]
Preservation of the identity is a direct consequence of the snake equations~\eqref{eq:snake}. Therefore $\epsilon_\B$ is a functor: we need to show that it is strict monoidal. Take now $R \colon X \otimes X' \to I$ and $S \colon Y \otimes Y' \to I$: by~\eqref{eq:composition in LARA(B)}, we have
	\[
	\epsilon_\B(R \otimes S) = {\stikzfig{epsilonFtensorG} = \stikzfig{epsilonFtensorEpsilonG}} = \epsilon_\B (f) \otimes \epsilon_\B(g)
	\]
	where the middle equality is due to the naturality of the unitors and the symmetry of $\B$, as well as the coherence for symmetric monoidal categories.
	
	As per $\inv\epsilon_\B$, from~\eqref{eq:composition in LARA(B)} and the snake equations\eqref{eq:snake} we get
	\[
	\inv\epsilon_\B(S) \circ \inv\epsilon_\B (R) = {\stikzfig{epsilonINVgEpsilonINVf}} \quad 
												 = {\stikzfig{epsilonINVgf}} \quad 
												 = \inv\epsilon_\B(S \circ R)
	\]
	therefore $\inv\epsilon_\B$ preserves composition; preservation of the identity is immediate from the definition of $\epsilon_\B$. Regarding the monoidal product, from~\eqref{eq:tensor in LARA(B)} and naturality of the symmetry of $\B$ we have:
	\[
		\inv\epsilon_\B(R \otimes S) = {\stikzfig{epsilonINVftensorg}} \quad = \quad  {\stikzfig{epsilonINVftensorEpsilonINVg}} = \inv\epsilon_\B(R) \otimes \inv\epsilon_\B (S).
	\]
	This proves that $\epsilon_\B$ and $\inv\epsilon_\B$ are inverse morphisms in $\CBC$. 
	
	All that is left to do is to prove the naturality of $\epsilon$. To that end, let $F \colon \A \to \B$ be a morphism in $\CBC$: we show that the following square
	\[
	\begin{tikzcd}
	\LA\RA(\A) \ar[r,"\epsilon_\A"] \ar[d,"\LA\RA(F)"'] & \A \ar[d,"F"] \\
	\LA\RA(\B) \ar[r,"\epsilon_\B"] & \B
	\end{tikzcd}
	\]
	commutes. We have that $\RA(F)=({F}\restriction_{\Map(\A)}, b^F)$ where for $u \colon A \to I$ in $\A$, $b^F_A(u)=F(u)$. Therefore $\LA\RA(F) \colon \Bicat_{\RA(\A)} \to \Bicat_{\RA(\B)}$ is the functor that maps $R \in \A(X \times Y, I)$ into $F(R) \in \B(FX \times FY, I)$. Hence the square above acts on $R \colon X \times Y \to I$ in $\A$ as follows:
	\[
	{\stikzfig{naturalityEpsilon}}
	\]
	where the right vertical arrow is correct because $F$ preserves all the structure of cartesian bicategory. Thus $\epsilon$ is natural.
\end{proof}

\begin{theorem}\label{thm:adjunction in appendix}
$\LA$ is left adjoint to $\RA$.
\end{theorem}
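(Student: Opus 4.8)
The plan is to take $\eta$ from Lemma~\ref{lemma:eta natural transformation} as the unit and $\epsilon$ from Lemma~\ref{lemma:epsilon natural isomorphism} as the counit, and to verify the two triangular identities~\eqref{eq:triangular equalities diagrams}. Both $\eta$ and $\epsilon$ have already been shown to be well-defined natural transformations, with $\epsilon$ a natural isomorphism, so nothing further is required on that front; the entire content lies in the triangle identities. Moreover, every functor appearing in those two triangles is the identity on objects (for the first triangle this is because $\Gamma_P$ and $\epsilon_{\Bicat_P}$ are identity on objects; for the second because $\Gamma_{\RA(\B)}$ and $\epsilon_\B\restriction_{\Map(\B)}$ are), so in each case it suffices to chase a single generic morphism.

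For the first triangle, $\epsilon_{\LA(P)} \circ \LA(\eta_P) = \id_{\LA(P)}$, fix an elementary existential doctrine $P \colon \op\C \to \InfSL$ and a morphism $r \in P(X \times Y) = \Hom_{\Bicat_P}(X,Y)$. By the definition of $\LA$ on morphisms and of $\eta_P = (\Gamma_P, P\rho)$ in~\eqref{eq:eta}, the functor $\LA(\eta_P)$ sends $r$ to $P_{\rho_{X \times Y}}(r)$, regarded as a morphism $X \times Y \to I$ in $\Bicat_P$ and hence as a morphism $X \to Y$ in $\LA\RA(\Bicat_P)$. I would then show that this element coincides with $\inv\epsilon_{\Bicat_P}(r)$, that is, with the string diagram obtained by bending the output wire of $r$; applying $\epsilon_{\Bicat_P}$ and using $\epsilon_{\Bicat_P}\inv\epsilon_{\Bicat_P} = \id$ (Lemma~\ref{lemma:epsilon natural isomorphism}) then returns $r$. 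The identity $P_{\rho_{X \times Y}}(r) = \inv\epsilon_{\Bicat_P}(r)$ is proved by unfolding the compact-closed structure of $\Bicat_P$ — its cup and cap are obtained by transporting copy and discard of $\C$ along $\Gamma_P$ and bending — and simplifying with Beck--Chevalley, Frobenius reciprocity and Corollary~\ref{cor:P_!(delta_I)=top}, in the same style as Lemma~\ref{lem:Gamma(f) has a right adjoint} and the computations inside Lemma~\ref{lemma:eta natural transformation}: concretely, both operations amount to attaching a trivial $I$-wire and one checks they yield the same element of $P((X \times Y)\times I)$.

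For the second triangle, $\RA(\epsilon_\B) \circ \eta_{\RA(\B)} = \id_{\RA(\B)}$, fix a cartesian bicategory $\B$. The first component asks for the composite of strict cartesian functors $\Gamma_{\RA(\B)} \colon \Map(\B) \to \Map(\LA\RA(\B))$ followed by $\epsilon_\B\restriction_{\Map(\LA\RA(\B))} \colon \Map(\LA\RA(\B)) \to \Map(\B)$ to be the identity; by Remark~\ref{rem:inv epsilon = Gamma R(B) on maps} we have $\Gamma_{\RA(\B)} = \inv\epsilon_\B\restriction_{\Map(\B)}$, Proposition~\ref{prop:restriction of a morphism in CBC to maps} guarantees that restricting $\epsilon_\B$ to maps is well-defined, and Lemma~\ref{lemma:epsilon natural isomorphism} gives $\epsilon_\B \circ \inv\epsilon_\B = \id_\B$, so the restriction is indeed $\id_{\Map(\B)}$. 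The second component is a commuting square in $\InfSL$ built from $\RA(\B)\rho$ and the natural-transformation part $b^{\epsilon_\B}$ of $\RA(\epsilon_\B)$, which acts on hom-sets as $u \mapsto \epsilon_\B(u)$; this commutes by the same bending/snake argument, reindexed along the unitor. Combining the two triangles with the naturality already established in the two preceding lemmas yields $\LA \dashv \RA$.

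The main obstacle is the first triangle, and specifically the string-diagrammatic identity $\LA(\eta_P)(r) = \inv\epsilon_{\Bicat_P}(r)$. Although it is morally transparent — "add an $I$-wire" performed in two ways — making it precise requires unwinding the construction of the compact-closed structure of $\Bicat_P$ in terms of $\Gamma_P$ and carrying out Beck--Chevalley and Frobenius manipulations analogous to, but slightly more delicate than, those in the appendix proof of Theorem~\ref{thm:A_P is a cartesian bicategory} and in Lemma~\ref{lemma:eta natural transformation}.
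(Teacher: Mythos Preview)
Your proposal is correct and follows essentially the same route as the paper. The second triangle is handled identically: the paper too observes that $\Gamma_{\RA(\B)} = \inv\epsilon_\B\restriction_{\Map(\B)}$ for the first component, and for the second component simply notes that $b^{\epsilon_\B}_X(R) = \epsilon_\B(R) = R \circ \inv\rho_X$ so that the composite $R \mapsto R \circ \rho_X \circ \inv\rho_X = R$ is the identity --- which is exactly your ``bending/snake argument, reindexed along the unitor'' spelled out.

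For the first triangle there is a small difference of framing. You propose to prove the identity $\LA(\eta_P)(r) = \inv\epsilon_{\Bicat_P}(r)$ in $P((X\times Y)\times I)$ and then invoke $\epsilon\circ\inv\epsilon=\id$; the paper instead computes $\epsilon_{\Bicat_P}\bigl(P_{\rho_{X\times Y}}(r)\bigr)$ directly and simplifies it down to $r$. Since $\epsilon_{\Bicat_P}$ is an isomorphism, these two statements are equivalent, and the underlying work is the same: one must unwind the compact-closed structure of $\Bicat_P$ (built from $\Gamma_P$ applied to diagonals and terminal maps, together with their right adjoints) and simplify using Beck--Chevalley, Frobenius reciprocity and the elementary axiom~\eqref{eq:elementary2}. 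The paper carries this out as a long chain of equalities in $P$; your reformulation would produce a dual chain of comparable length. Neither approach is materially shorter, so your framing is a matter of taste rather than a genuine alternative route.
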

\begin{proof}
All that is left to do is to show that the triangular equalities~\eqref{eq:triangular equalities diagrams} are satisfied. The first one requires that, for $P \colon \op\C \to \InfSL$ an elementary existential doctrine, the following diagram commute:
	\begin{equation}\label{eq:first triangular equation commutative diagram}
		\begin{tikzcd}
		\LA(P) \ar[r,"\LA(\eta_P)"] \ar[dr,"\id_{\LA(P)}"'] & \LA\RA\LA(P) \ar[d,"\epsilon_{\LA(P)}"] \\
		& \LA(P)
		\end{tikzcd}
	\end{equation}
	Recall that $\LA\RA\LA(P)=\Bicat_{\RA\LA(P)}$, therefore its objects are those of the domain category of the doctrine 
	\[
	\RA\LA(P) = \RA(\Bicat_P) = \Hom_{\Bicat_P}(-,I) \colon \op{\Map(\Bicat_P)} \to \InfSL
	\]
	while its homsets are
	\[
	\LA\RA\LA(P)(X,Y)=\RA\LA(P)(X \times Y) = \Hom_{\Bicat_P}(X \times Y, I) = P(X \times Y \times I).
	\]
	By definition of $\LA$ and $\epsilon$, we have
	\[
	\begin{tikzcd}[row sep=0em]
	\Bicat_P \ar[r,"\LA(\eta_P)"] & \Bicat_{\RA\LA(P)} \\
	X \ar[r,|->] \ar[d,"P(X \times Y) \ni R"'] & X \ar[d,"P_{\rho_{X \times Y}} (R) \in P(X \times Y \times I)"] \\[2em]
	Y \ar[r,|->] & Y  
	\end{tikzcd}
	\qquad
	\begin{tikzcd}[row sep=0em]
	\Bicat_{\RA\LA(P)} \ar[r,"\epsilon_{\LA(P)}"] & \Bicat_P \\
	X \ar[r,|->]  \ar[d,"P(X \times Y \times I) \ni Q"'] & X  \ar[d,"\Gamma_P(\inv\rho_X) \seq \bar Q \seq \Gamma(\lambda_Y)"] \\[2em]
	Y \ar[r,|->] & Y
	\end{tikzcd}	
	\]
where $\bar Q$ is given as in~\eqref{eq:epsilon(r)}. Therefore $\epsilon_{\LA(P)} \circ \LA(\eta_P)$ is the identity on objects and its action on $R \in P(X \times Y)$ is
	\begin{equation}\label{eq:first triangular equation}
	\Gamma_P \left({\stikzfig{epsilonFirstBit}} \right) \seq (P_{\rho_{X \times Y}}(R) \otimes \delta_Y) \seq \Gamma_P(\lambda_Y)
	\end{equation}
	where the above composite is performed in $\Bicat_P$.

	In the following calculations, we will have to refer to various instances of the two objects $X$ and $Y$, and we will do so as annotated in the following string diagram in $\Bicat_P$, showing $\epsilon_{\LA(P)} \circ \LA(\eta_P)(R)$:
	\[
	\stikzfig{FirstTriangularEquation}
	\]
	In this sense, the first morphism in the chain~\eqref{eq:first triangular equation} goes from $X_1$ to $X_2 \otimes Y_1 \otimes Y_2$ (as a morphism in $\Bicat_P$), the middle one goes from $X_2 \otimes Y_1 \otimes Y_2$ to $I \otimes Y_3$, whereas $\Gamma_P(\lambda_Y)$ goes from $I \otimes Y_3$ to $Y_4$. (Again, the use of subscripts is only a notational convenience to clarify which projections we are taking into consideration in the following computations; as objects of $\C$, we have $X=X_1=X_2$ and $Y=Y_1=Y_2=Y_3=Y_4$.)
	
	We begin by computing the middle morphism:
	\begin{align*}
		P_{\rho_{X \times Y}}(R) \otimes \delta_Y &= P_{X_2Y_1Y_2 I Y_3 \to X_2 Y_1 I} \bigl( P_{\rho_{X \times Y}} (R) \bigr) \land P_{X_2Y_1Y_2 I Y_3 \to Y_2 Y_3} (\delta_Y) \\
		&= P_{X_2 Y_1 Y_2 I Y_3 \to X_2 Y_1} (R) \land P_{X_2 Y_1 Y_2 I Y_3 \to Y_2 Y_3} (\delta_Y)
	\end{align*}
	therefore we have that $(P_{\rho_{X \times Y}}(R) \otimes \delta_Y) \seq \Gamma_P(\lambda_Y)$ is equal to:
	\begin{align*}
		& \exists_{X_2 Y_1 Y_2 I Y_3 Y_4 \to X_2 Y_1 Y_2 Y_4} \Bigl[ P_{X_2 Y_1 Y_2 I Y_3 Y_4 \to X_2 Y_1 Y_2 I Y_3} \bigl( P_{X_2 Y_1 Y_2 I Y_3 \to X_2 Y_1} (R) \\
		& \hspace{21em}\land P_{X_2 Y_1 Y_2 I Y_3 \to Y_2 Y_3} (\delta_Y)  \bigr) \\
		& \hspace{11em} \land P_{X_2 Y_1 Y_2 I Y_3 Y_4 \to I Y_3 Y_4} \bigl( P_{\lambda_Y \times \id_Y} (\delta_Y) \bigr)  \Bigr] \\
		&= \exists_{X_2 Y_1 Y_2 I Y_3 Y_4 \to X_2 Y_1 Y_2 Y_4} \Bigl[ P_{X_2 Y_1 Y_2 I Y_3 Y_4 \to X_2 Y_1} (R) \\ 
		& \hspace{12em} \land P_{X_2 Y_1 Y_2 I Y_3 Y_4 \to Y_2 Y_3} (\delta_Y) \land P_{X_2 Y_1 Y_2 I Y_3 Y_4 \to Y_3 Y_4} (\delta_Y) \Bigr] \displaybreak[0]\\
		&=\exists_{X_2 Y_1 Y_2 I Y_3 Y_4 \to X_2 Y_1 Y_2 Y_4} \Bigl[ P_{X_2 Y_1 Y_2 I Y_3 Y_4 \to X_2 Y_1} (R) \\
		& \hspace{12em} \land \exists_{X_2 Y_1 Y_2 I Y_3 \to X_2 Y_1 Y_2 I Y_3 Y_3} \bigl( P_{X_2 Y_1 Y_2 I Y_3 \to Y_2 Y_3} (\delta_Y) \bigr) \Bigr] \displaybreak[0]\\
		&=\exists_{X_2 Y_1 Y_2 I Y_3 Y_4 \to X_2 Y_1 Y_2 Y_4} \Bigl[ P_{X_2 Y_1 Y_2 I Y_3 Y_4 \to X_2 Y_1 Y_2 Y_4} \bigl( P_{X_2 Y_1 Y_2 Y_4 \to X_2 Y_1} (R) \bigr) \\
		& \hspace{12em} \land  \exists_{X_2 Y_1 Y_2 I Y_3 \to X_2 Y_1 Y_2 I Y_3 Y_3} \bigl( P_{X_2 Y_1 Y_2 I Y_3 \to Y_2 Y_3} (\delta_Y) \bigr) \Bigr] \displaybreak[0]\\
		&= P_{X_2 Y_1 Y_2 Y_4 \to X_2 Y_1} (R) \\
		& \hspace{1em}\land \exists_{X_2 Y_1 Y_2 I Y_3 Y_4 \to X_2 Y_1 Y_2 Y_4} \Bigl[ \exists_{X_2 Y_1 Y_2 I Y_3 \to X_2 Y_1 Y_2 I Y_3 Y_3} \bigl( P_{X_2 Y_1 Y_2 I Y_3 \to Y_2 Y_3} (\delta_Y) \bigr) \Bigr] \displaybreak[0]\\
		&= P_{X_2 Y_1 Y_2 Y_4 \to X_2 Y_1} (R) \land \exists_{X_2 Y_1 Y_2 I Y_3 \to X_2 Y_1 Y_2 Y_3} \bigl( P_{X_2 Y_1 Y_2 I Y_3 \to Y_2 Y_3} (\delta_Y) \bigr) \\
		&= P_{X_2 Y_1 Y_2 Y_4 \to X_2 Y_1} (R) \land  P_{X Y_1 Y_2 Y_3 \to Y_2 Y_3} (\delta_Y).
	\end{align*}
	The last equation above is due to the fact that the morphism $e=X_2 Y_1 Y_2 I Y_3 \to X_2 Y_1 Y_2 Y_3$ is an isomorphism and therefore the left adjoint of $P_e$, which is $\exists_e$, is actually $P_{\inv e}$. Hence, the composite~\eqref{eq:first triangular equation} is equal to:
	\begin{align*}
		&\exists_{X_1 X_2 Y_1 Y_2 Y_4 \to X_1 Y_4} \Bigl[ P_{X_1 X_2 Y_1 Y_2 Y_4 \to X_1 X_2} (\delta_X) \land P_{X_1 X_2 Y_1 Y_2 Y_4 \to X_2 Y_1} (R) \\
		& \hspace{8.5em} \land P_{X_1 X_2 Y_1 Y_2 Y_4 \to Y_1 Y_2} (\delta_Y) \land P_{X_1 X_2 Y_1 Y_2 Y_4 \to Y_2 Y_4} (\delta_Y) \Bigr] \\
		&=\exists_{X_1 X_2 Y_1 Y_2 Y_4 \to X_1 Y_4} \Bigl[ \exists_{X Y_1 Y_2 Y_4 \to X X Y_1 Y_2 Y_4} \bigl( P_{X Y_1 Y_2 Y_4 \to X Y_1} (R) \bigr) \\
		& \hspace{10em} \land P_{X_1 X_2 Y_1 Y_2 Y_4 \to Y_1 Y_2} (\delta_Y) \land P_{X_1 X_2 Y_1 Y_2 Y_4 \to Y_2 Y_4} (\delta_Y) \Bigr] \displaybreak[0]\\
		& =\exists_{X_1 Y_2 Y_4 \to X_1 Y_4} \\
		&\hspace{1.35em}\exists_{X_1 Y_1 Y_2 Y_4 \to X_1 Y_2 Y_4} \\
		& \hspace{1.35em}\exists_{X_1 X_2 Y_1 Y_2 Y_4 \to X_1 Y_1 Y_2 Y_4} \Bigl[ P_{X_1 X_2 Y_1 Y_2 Y_4 \to X_1 Y_1 Y_2 Y_4} \bigl( P_{X_1 Y_1 Y_2 Y_4 \to Y_1 Y_2} (\delta_Y) \\
		&\hspace{22em} \land P_{X_1 Y_1 Y_2 Y_4 \to Y_2 Y_4} (\delta_Y)  \bigr) \\
		& \hspace{12em}\land \exists_{X Y_1 Y_2 Y_4 \to X X Y_1 Y_2 Y_4} \bigl( P_{X Y_1 Y_2 Y_4 \to X Y_1} (R) \bigr) \Bigr] \displaybreak[0]\\
		& =\exists_{X_1 Y_2 Y_4 \to X_1 Y_4} \\
		&\hspace{1.35em} \exists_{X_1 Y_1 Y_2 Y_4 \to X_1 Y_2 Y_4} \Bigl[
			 P_{X_1 Y_1 Y_2 Y_4 \to Y_1 Y_2} (\delta_Y) \land P_{X_1 Y_1 Y_2 Y_4 \to Y_2 Y_4} (\delta_Y) \\
		&\hspace{10em} \land \exists_{X_1 X_2 Y_1 Y_2 Y_4 \to X_1 Y_1 Y_2 Y_4} \\
		&\hspace{11.1em} \exists_{X Y_1 Y_2 Y_4 \to X X Y_1 Y_2 Y_4} \bigl( P_{X Y_1 Y_2 Y_4 \to X Y_1} (R) \bigr) \Bigr] \displaybreak[0]\\ 
		&= \exists_{X Y_2 Y_4 \to X Y_4} \\
		& \hspace{1.35em} \exists_{X Y_1 Y_2 Y_4 \to X Y_2 Y_4} \Bigl[ P_{X Y_1 Y_2 Y_4 \to X Y_2 Y_4} \bigl(P_{X Y_1 Y_2 Y_4 \to Y_2 Y_4} (\delta_Y) \bigr) \\
		& \hspace{9em} \land P_{X Y_1 Y_2 Y_4 \to Y_1 Y_2} (\delta_Y) \land P_{X Y_1 Y_2 Y_4 \to X Y_1} (R) \Bigr]	\displaybreak[0]\\
		&=\exists_{X Y_2 Y_4 \to X_1 Y_4} \Bigl[ P_{X Y_1 Y_2 Y_4 \to Y_2 Y_4} (\delta_Y) \\
		& \hspace{7em} \land \exists_{X Y_1 Y_2 Y_4 \to X Y_2 Y_4} \exists_{X Y_1 Y_4 \to X Y_1 Y_1 Y_4} \bigl( P_{X Y_1 Y_4 \to X Y_1} (R) \bigr)
		\Bigr]		\displaybreak[0]\\
		&= \exists_{X Y_2 Y_4 \to X_1 Y_4} \Bigl[ P_{X Y_1 Y_2 Y_4 \to Y_2 Y_4} (\delta_Y) \land P_{X Y_1 Y_4 \to X Y_1} (R) \Bigr] \displaybreak[0]\\
		&= \exists_{X Y_2 Y_4 \to X_1 Y_4} \exists_{X Y \to X Y Y} (R) \\
		&= R.
	\end{align*}
	This proves that triangle~\eqref{eq:first triangular equation commutative diagram} commutes. 
	
	The second triangular equality requires that, for $\B$ a cartesian bicategory, the following diagram commute:
	\begin{equation}\label{eq:second triangular equality}
	\begin{tikzcd}
		\RA(\B) \ar[r,"\eta_{\RA(\B)}"] \ar[dr,"\id_{\RA(\B)}"'] & \RA\LA\RA(\B) \ar[d,"\RA(\epsilon_\B)"] \\
		& \RA(\B)
		\end{tikzcd}
	\end{equation}
	Recall that $\eta_{\RA(\B)} = (\Gamma_{\RA(B)}, \RA(\B)\rho)$, where $\Gamma_{\RA(B)} \colon \Map(\B) \to \LA\RA(\B)$ coincides with the restriction of $\inv\epsilon_B$ (see~\eqref{eq:epsilon(r)}) to $\Map(\B)$ as shown in the proof of Lemma~\ref{lemma:epsilon natural isomorphism}, while
	\[
	(\RA(\B)\rho)_X = \Hom_{\B}(-,I)(\rho_X) = - \circ \rho_X \colon \Hom_{\B}(X,I) \to \Hom_{\B}(X \otimes I, I)
	\]
	as $\rho_X \colon X \otimes I \to I$ is the right unitor of $\B$. 
	
	Now,
	\[
	\RA\LA\RA(\B) = \Hom_{\LA\RA(\B)} (-,I) = \Hom_{\B}(- \otimes I, I) \colon \Map(\LA\RA(\B)) \to \InfSL
	\]
	and
	\[
	\RA(\epsilon_\B) = (\epsilon_B\restriction_{\Map (\LA\RA(\B))}, b^{\epsilon_\B})
	\]
	where by~\eqref{eq:functor R transformation component} we have:
	\[
	\begin{tikzcd}[row sep=0em]
	\Hom_{\B}(X \otimes I, I) \ar[r,"b^{\epsilon_\B}_X"] & \Hom_{\B}(X, I) \\
	R \ar[r,|->] & \epsilon_\B(R)
	\end{tikzcd}
	\]
	and it it easy to see that, for $R \colon X \otimes I \to I$ in $\B$, $\epsilon_\B (R)= R \circ \inv\rho_X$.
	
	Composition in $\EED$ is component-wise: on the first component of $\eta_{\RA(\B)} \seq \RA(\epsilon_\B)$ we have
	\[
	\begin{tikzcd}[row sep=0em]
	\Map(\B) \ar[r,"\Gamma_{\RA(B)}"] & \Map(\LA\RA(\B)) \ar[r,"\epsilon_B\restriction_{\Map (\LA\RA(\B))}"] & \Map(\B) \\
	f \ar[r,|->] & \inv\epsilon_\B(f) \ar[r,|->] & \epsilon_B\inv\epsilon_\B(f) = f
	\end{tikzcd}
	\]
	while on the second component we have
	\[
	\begin{tikzcd}[row sep=0em]
	\Hom_{\B}(X,I) \ar[r,"(\RA(\B)\rho)_X"] & \Hom_{\B}(X \otimes I, I) \ar[r,"b^{\epsilon_\B}_X"] & \Hom_\B (X,I) \\
	R \ar[r,|->] & R \circ \rho_X \ar[r,|->] & R \circ \rho_X \circ \inv\rho_X = R.
	\end{tikzcd}
	\]
	Thus triangle~\eqref{eq:second triangular equality} commutes. 
	\end{proof}

\section{Appendix to \S~\ref{sec:equivalence}}

We first characterise the maps of $\Bicat_P$ in Proposition~\ref{prop:maps of A_P}, which we need to prove that the Rule of Unique Choice is equivalent to the fullness of the graph functor in  Proposition~\ref{prop:RUC iff Gamma full}. The fact that a doctrine $P$ has comprehensive diagonals if and only if $\Gamma_P$ is faithful is stated in~\cite[Theorem 3.2]{maietti_triposes_2017}. Finally, we give a proof of Theorem~\ref{thm:EEDff and CBC are equivalent}.

\begin{proposition}[{Cf.~\cite{maietti_triposes_2017}}]\label{prop:maps of A_P}
Let $P \colon \op\C \to \InfSL$ be an elementary existential doctrine, $R \in P(X \times Y)$. Then $R$ is a map in $\Bicat_P$ if and only if
\begin{gather}
    \exists_{\pi_1 \colon X Y \to X} (R) = \top_{PX} \label{eq:total relation A_P} \\
    P_{\langle \pi_1,\pi_2 \rangle \colon X YY \to XY} (R) \land P_{\langle \pi_1,\pi_3 \rangle \colon XYY \to XY} (R) \le P_{\langle \pi_2, \pi_3 \rangle \colon XYY \to YY} (\delta_Y).\label{eq:functional relation A_P}
\end{gather}
\end{proposition}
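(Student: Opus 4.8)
The plan is to argue directly from the definition of a map as a comonoid homomorphism (Definition~\ref{def:cartCat}.2): an element $R\in P(X\times Y)=\Hom_{\Bicat_P}(X,Y)$ is a map if and only if the two equations
\[
R\seq\Gamma_P(\Delta_Y)=\Gamma_P(\Delta_X)\seq(R\otimes R)
\qquad\text{and}\qquad
R\seq\Gamma_P(!_Y)=\Gamma_P(!_X)
\]
hold in $\Bicat_P$, where the copying and discarding maps are the images under $\Gamma_P$ of the canonical comonoid on $\C$. I would then show that the second (discarding) equation is equivalent to \eqref{eq:total relation A_P} and the first (copying) equation is equivalent to \eqref{eq:functional relation A_P}; combining the two equivalences yields the proposition.

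For the discarding equation, recall from Corollary~\ref{cor:P_!(delta_I)=top} that $\Gamma_P(!_X)=P_{!\colon X\times I\to I\times I}(\delta_I)=\top_{P(X\times I)}$, and likewise for $Y$. Unfolding the definition of composition in $\Bicat_P$, the factor contributed by $\Gamma_P(!_Y)=\top$ is absorbed and the left-hand side collapses to $\exists_{X\times Y\times I\to X\times I}\bigl(P_{X\times Y\times I\to X\times Y}(R)\bigr)$. Applying the Beck--Chevalley condition (in the form of Remark~\ref{rem:Beck-Chevalley pullback diagram simple}) to the square exhibiting $X\times Y\times I$ as the pullback of the projection $\pi_1\colon X\times Y\to X$ along the right unitor $\rho_X\colon X\times I\to X$, this expression equals $P_{\rho_X}\bigl(\exists_{\pi_1}(R)\bigr)$; since $\rho_X$ is an isomorphism, $P_{\rho_X}$ is an order-isomorphism, so the discarding equation holds if and only if $\exists_{\pi_1\colon X\times Y\to X}(R)=\top_{PX}$, i.e.\ \eqref{eq:total relation A_P}.

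For the copying equation I would first rewrite the transported comultiplications using Lemma~\ref{lem:DeltaExistsTop}: as an element of $P(Y^3)$ one has $\Gamma_P(\Delta_Y)=P_{\langle\pi_1,\pi_2\rangle}(\delta_Y)\wedge P_{\langle\pi_1,\pi_3\rangle}(\delta_Y)$, and similarly for $X$, while $R\otimes R=P_{\langle\pi_1,\pi_3\rangle}(R)\wedge P_{\langle\pi_2,\pi_4\rangle}(R)$. Expanding both composites with the definitions of composition and tensor in $\Bicat_P$ and repeatedly using \eqref{eq:elementary2}, Frobenius reciprocity and Beck--Chevalley to contract the existentially quantified variables against the copies of $\delta$ (exactly in the style of the proofs of Lemma~\ref{lem:Gamma(f) has a right adjoint} and Theorem~\ref{thm:A_P is a cartesian bicategory}), one reduces the copying equation to the equality in $P(X\times Y\times Y)$
\[
P_{\langle\pi_1,\pi_2\rangle}(R)\wedge P_{\langle\pi_2,\pi_3\rangle}(\delta_Y)=P_{\langle\pi_1,\pi_2\rangle}(R)\wedge P_{\langle\pi_1,\pi_3\rangle}(R).
\]
The inequality $\le$ holds for \emph{every} $R$: writing $d=\id_X\times\Delta_Y\colon X\times Y\to X\times Y\times Y$, the left-hand side is $\exists_d(R)$ by \eqref{eq:elementary2}, and since $\langle\pi_1,\pi_3\rangle\circ d=\id$ the adjunction $\exists_d\dashv P_d$ gives $\exists_d(R)\le P_{\langle\pi_1,\pi_3\rangle}(R)$, while trivially $\exists_d(R)\le P_{\langle\pi_1,\pi_2\rangle}(R)$. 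Hence the copying equation is equivalent to the reverse inequality $P_{\langle\pi_1,\pi_2\rangle}(R)\wedge P_{\langle\pi_1,\pi_3\rangle}(R)\le P_{\langle\pi_1,\pi_2\rangle}(R)\wedge P_{\langle\pi_2,\pi_3\rangle}(\delta_Y)$, and since the left-hand side already lies below $P_{\langle\pi_1,\pi_2\rangle}(R)$ this is in turn equivalent to $P_{\langle\pi_1,\pi_2\rangle}(R)\wedge P_{\langle\pi_1,\pi_3\rangle}(R)\le P_{\langle\pi_2,\pi_3\rangle}(\delta_Y)$, which is precisely \eqref{eq:functional relation A_P}. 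The main obstacle is the bookkeeping in this last reduction: keeping track of the several copies of $Y$ (and of $X$) appearing in the two composites and choosing the order of the applications of \eqref{eq:elementary2}, Beck--Chevalley and Frobenius reciprocity so that both sides collapse to the displayed normal forms; the discarding half, by contrast, is routine.
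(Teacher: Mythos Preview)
Your proposal is correct and follows essentially the same approach as the paper: both compute the two comonoid-homomorphism equations in $\Bicat_P$ and reduce them to \eqref{eq:total relation A_P} and to the equality $P_{\langle\pi_1,\pi_2\rangle}(R)\wedge P_{\langle\pi_2,\pi_3\rangle}(\delta_Y)=P_{\langle\pi_1,\pi_2\rangle}(R)\wedge P_{\langle\pi_1,\pi_3\rangle}(R)$, which is then seen to be equivalent to \eqref{eq:functional relation A_P}. The paper's proof is terse and merely asserts these reductions, whereas you spell out the Beck--Chevalley step for the discarding half and give a clean argument (via $\exists_{\id_X\times\Delta_Y}\dashv P_{\id_X\times\Delta_Y}$) that the $\le$ direction of the displayed equality is automatic, so that the equality is equivalent to \eqref{eq:functional relation A_P}.
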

\begin{proof}
Equation~\eqref{eq:total relation A_P} stems from calculating the two sides of 
\[
\sxtikzfig{StrictDis}
\]
while~\eqref{eq:functional relation A_P} is equivalent to
\[
P_{X Y_1 Y_2 \to X Y_1} (R) \land P_{X Y_1 Y_2 \to Y_1 Y_2} (\delta_Y) = P_{X Y_1 Y_2 \to X Y_1} (R) \land P_{X Y_1 Y_2 \to X Y_2} (R)
\]
which is the result of computing the two sides of the equation
\[
\sxtikzfig{StrictCopy}
\]
in the category $\Bicat_P$.
\end{proof}

\begin{proposition}\label{prop:RUC iff Gamma full}
An elementary existential doctrine $P$ satisfies the Rule of Unique Choice if and only if its graph functor $\Gamma_P$ is full.
\end{proposition}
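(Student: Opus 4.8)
The plan is to reduce the statement to the following pointwise biconditional and then prove it: \emph{for any $f\colon X\to Y$ in $\C$ and any $R\in P(X\times Y)=\Hom_{\Bicat_P}(X,Y)$ that is a map in $\Bicat_P$, one has $\Gamma_P(f)=R$ if and only if $\top_{PX}\le P_{\langle\id_X,f\rangle}(R)$.} This suffices because the Rule of Unique Choice asks, for each map $R$, for some $f$ with $\top_{PX}\le P_{\langle\id_X,f\rangle}(R)$, whereas fullness of $\Gamma_P$ asks, for each map $R$, for some $f$ with $\Gamma_P(f)=R$; the biconditional makes the two demands coincide. The \emph{forward} implication needs no hypothesis on $R$: since $(f\times\id_Y)\circ\langle\id_X,f\rangle=\Delta_Y\circ f$ and $\Gamma_P(f)=P_{f\times\id_Y}(\delta_Y)$, functoriality of $P$ gives $P_{\langle\id_X,f\rangle}(\Gamma_P(f))=P_f\bigl(P_{\Delta_Y}(\delta_Y)\bigr)$; from $\delta_Y=\exists_{\Delta_Y}(\top)$ (Lemma~\ref{lem:DeltaExistsTop}) and $\exists_{\Delta_Y}\dashv P_{\Delta_Y}$ (Remark~\ref{rmk:BC}), the unit at $\top$ yields $\top\le P_{\Delta_Y}(\delta_Y)$, and then $\top_{PX}\le P_f\bigl(P_{\Delta_Y}(\delta_Y)\bigr)$ because $P_f$ preserves the top element. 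In particular $\Gamma_P(f)=R$ entails $\top_{PX}\le P_{\langle\id_X,f\rangle}(R)$, and we record the identity $P_{\langle\id_X,f\rangle}(\Gamma_P(f))=\top$ for later use.

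For the \emph{backward} implication, assume $R$ is a map with $\top_{PX}\le P_{\langle\id_X,f\rangle}(R)$; I would prove $\Gamma_P(f)\le R$ and $R\le\Gamma_P(f)$ separately. Writing $\langle\id_X,f\rangle=(\id_X\times f)\circ\Delta_X$, the hypothesis reads $\top_{PX}\le P_{\Delta_X}\bigl(P_{\id_X\times f}(R)\bigr)$, which by $\exists_{\Delta_X}\dashv P_{\Delta_X}$ and $\exists_{\Delta_X}(\top)=\delta_X$ is equivalent to $\delta_X\le P_{\id_X\times f}(R)$. A routine unwinding of composition in $\Bicat_P$ (via~\eqref{eq:elementary2} and the Beck--Chevalley condition, in the style of Appendix~\ref{appendix: EED to CBC}) identifies $P_{\id_X\times f}(R)$ with the composite $R\seq\op{\Gamma_P(f)}$, where $\op{\Gamma_P(f)}=P_{\id_Y\times f}(\delta_Y)$ is the right adjoint of $\Gamma_P(f)$ from Proposition~\ref{prop:GammaP(f) has right adjoint}. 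Hence $\delta_X\le R\seq\op{\Gamma_P(f)}$, and composing on the right with $\Gamma_P(f)$, using $\op{\Gamma_P(f)}\seq\Gamma_P(f)\le\delta_Y$ (Lemma~\ref{lem:Gamma(f) has a right adjoint}) and monotonicity of composition,
\[
\Gamma_P(f)=\delta_X\seq\Gamma_P(f)\ \le\ R\seq\op{\Gamma_P(f)}\seq\Gamma_P(f)\ \le\ R\seq\delta_Y\ =\ R .
\]

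For the opposite inequality I would use that $R$ is a map through its functionality condition~\eqref{eq:functional relation A_P}, $P_{\langle\pi_1,\pi_2\rangle}(R)\wedge P_{\langle\pi_1,\pi_3\rangle}(R)\le P_{\langle\pi_2,\pi_3\rangle}(\delta_Y)$ in $P(X\times Y\times Y)$. Since $\Gamma_P(f)\le R$ was just shown, this weakens to $P_{\langle\pi_1,\pi_2\rangle}(R)\wedge P_{\langle\pi_1,\pi_3\rangle}(\Gamma_P(f))\le P_{\langle\pi_2,\pi_3\rangle}(\delta_Y)$. Applying $P_g$ with $g=\langle\pi_1,\pi_2,f\circ\pi_1\rangle\colon X\times Y\to X\times Y\times Y$ and simplifying each term: the left conjunct becomes $P_{\id_{X\times Y}}(R)=R$; the right conjunct becomes $P_{\pi_1}\bigl(P_{\langle\id_X,f\rangle}(\Gamma_P(f))\bigr)=P_{\pi_1}(\top)=\top$ by the identity recorded above; and the right-hand side becomes $P_{\sigma_{X,Y}}\bigl(P_{\id_Y\times f}(\delta_Y)\bigr)=P_{\sigma_{X,Y}}(\op{\Gamma_P(f)})=\Gamma_P(f)$, the last equality because the converse operation is involutive (Proposition~\ref{prop:op}). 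Hence $R=R\wedge\top\le\Gamma_P(f)$, so $R=\Gamma_P(f)$, which completes the proof of the biconditional and hence of the proposition.

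The main obstacle is the passage, in the backward implication, from the order-theoretic datum $\top_{PX}\le P_{\langle\id_X,f\rangle}(R)$ supplied by RUC to the composability inequality $\Gamma_P(f)\le R$ inside $\Bicat_P$: it hinges on recognising $P_{\id_X\times f}(R)$ as $R\seq\op{\Gamma_P(f)}$ --- the principle that reindexing along a morphism is post-composition with the converse of its graph --- whose verification, though purely routine, is a somewhat delicate reindexing computation with Beck--Chevalley. Everything else is formal, being the elementary algebra of $\delta$ together with Proposition~\ref{prop:maps of A_P}.
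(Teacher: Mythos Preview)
Your proof is correct and follows the same overall strategy as the paper: reduce to the pointwise biconditional $\Gamma_P(f)=R \iff \top_{PX}\le P_{\langle\id_X,f\rangle}(R)$ for maps $R$, and treat the two directions separately. The forward direction is identical to the paper's.

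In the backward direction the arguments diverge in an interesting way. For $R\le\Gamma_P(f)$, you and the paper do essentially the same thing (functionality of $R$ followed by a substitution along a map of the form $\langle\id_X,f\rangle\times\id_Y$ or its variant $\langle\pi_1,\pi_2,f\pi_1\rangle$); your detour through the already-established $\Gamma_P(f)\le R$ is unnecessary, since $P_g\bigl(P_{\langle\pi_1,\pi_3\rangle}(R)\bigr)=P_{\pi_1}\bigl(P_{\langle\id_X,f\rangle}(R)\bigr)=\top$ holds directly from the hypothesis --- but it does no harm. The real difference is in the inequality $\Gamma_P(f)\le R$. The paper stays entirely inside the doctrine: it rewrites $\Gamma_P(f)=P_{h\times\id_Y}(\delta_Y)\wedge\top$, pulls this back through $P_{\langle\id_X,h\rangle\times\id_Y}$ and the elementary axiom~\eqref{eq:elementary2} to obtain $P_{\langle\id_X,h\rangle\times\id_Y}\bigl(\exists_{\id_X\times\Delta_Y}(R)\bigr)$, and then bounds $\exists_{\id_X\times\Delta_Y}(R)\le P_{\langle\pi_1,\pi_3\rangle}(R)$ by adjunction. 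Your argument instead passes through the cartesian-bicategory structure of $\Bicat_P$: you identify $P_{\id_X\times f}(R)$ with the composite $R\seq\op{\Gamma_P(f)}$ (a standard ``reindexing is post-composition with the converse graph'' lemma, which indeed falls out of~\eqref{eq:elementary2} plus Beck--Chevalley exactly as you say), transpose the hypothesis to $\delta_X\le R\seq\op{\Gamma_P(f)}$, and then use the adjunction $\Gamma_P(f)\dashv\op{\Gamma_P(f)}$ in $\Bicat_P$. Your route is more conceptual and makes explicit the bicategorical content of the statement; the paper's is more self-contained at the doctrinal level and avoids invoking composition in $\Bicat_P$. Both are of comparable length once the auxiliary identity is verified.
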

\begin{proof}
We show that for $P \colon \op\C \to \InfSL$, $R \in P(X \times Y)$ map in $\Bicat_P$ and $h \colon X \to Y$ in $\C$:
\[
\top_{PX} \le P_{\langle \id_X,h \rangle} (R) \iff R = P_{h \times \id_Y}(\delta_Y).
\]
First, if $R=P_{h \times \id_Y}(\delta_Y)$, then
\[
    P_{\langle \id_X,h \rangle} (R) = P_{\langle \id_X,h \rangle} P_{h \times \id_Y}(\delta_Y) =P_{\langle h, h \rangle} (\delta_Y) = P_h P_{\Delta_Y} (\delta_Y) = P_h (\top_{PY}) = \top_{PX}
    \]
because $\delta_Y=\exists_{\Delta_Y}(\top_{PY})$ and, since $\exists_\Delta$ is left adjoint to $P_\Delta$, $\top_{PY} \le P_{\Delta_Y} \exists_{\Delta_Y} (\top_{PY})$. 
    Vice versa, suppose that $\top_{PX} \le P_{\langle \id_X,h \rangle} (R)$. Then $P_{XY \to X} (\top_{PX}) = P_{XY \to X} P_{\langle \id_X,h \rangle} (R)$. Now, since the following two diagrams commute:
    \[
    \begin{tikzcd}
    X \times Y \ar[r,"\pi_1"] \ar[d,"{\langle \id_X, h \rangle \times \id_Y}"'] & X \ar[d,"{\langle \id_X, h \rangle}"] \\
    X \times Y \times Y \ar[r,"{\langle \pi_1, \pi_2 \rangle}"'] & X \times Y
    \end{tikzcd}
    \qquad
    \begin{tikzcd}
    A \times B \ar[rr,"h \times \id_Y"] \ar[dr,"{\langle \id_X, h \rangle \times \id_Y}"'] & & X \times Y \\
    & X \times Y \times Y \ar[ur,"{\langle\pi_2, \pi_3 \rangle}"']
    \end{tikzcd}
    \]
    we have:
    \begin{align*}
    R &= R \land \top_{P(XY)} = R \land P_{XY \to X} P_{\langle \id_X, h \rangle} (R) \\
    &=P_{\langle \id_X, h \rangle \times \id_Y} \bigl( P_{X Y_1 Y_2 \to X Y_2} (R) \land P_{X Y_1 Y_2 \to X Y_1} (R) \bigr) \\
    &\le P_{\langle \id_X, h \rangle \times \id_Y} \bigl( P_{X Y_1 Y_2 \to Y_1 Y_2} (\delta_Y) \bigr) = P_{h \times \id_Y} (\delta_Y)
    \end{align*}
    where the inequality above is due to~\eqref{eq:functional relation A_P}. So we proved that $R \le P_{h \times \id_Y}(\delta_Y)$, while for the other inequality:
    \begin{align*}
        P_{h \times \id_Y}(\delta_Y) &= P_{h \times \id_Y}(\delta_Y) \land \top_{P(XY)} \\
        &= P_{h \times \id_Y}(\delta_Y) \land P_{XY \to X} P_{\langle \id_X, h \rangle} (R) \\
        &= P_{\langle \id_X, h \rangle \times \id_Y}  \bigl( \exists_{\id_X \times \Delta_Y} (R) \bigr)
    \end{align*}
    because of~\eqref{eq:elementary2}. Since $R = P_{\id_X \times \Delta_Y} P_{X Y_1 Y_2 \to X Y_2} (R)$, by adjunction we have $\exists_{\id_X \times \Delta_Y}(R) \le P_{X Y_1 Y_2 \to X Y_2} (R)$, hence we get
    \[
    P_{h \times \id_Y}(\delta_Y) \le P_{\langle \id_X, h \rangle \times \id_Y} P_{X Y_1 Y_2 \to X Y_2} (R) = R. \qedhere
    \]
    \end{proof}

\paragraph*{Proof of Theorem~\ref{thm:EEDff and CBC are equivalent}}
By restricting $\LA$ and corestricting $\RA$ to $\EEDff$, one obtains two functors $\overline\LA$ and $\overline\RA$ and we can define two natural transformations $\overline\eta$ and $\overline\epsilon$ whose individual components are the same as $\eta$ and $\epsilon$ respectively. This means that $\overline\eta$ and $\overline\epsilon$ are still natural transformations satisfying the triangular equalities~\eqref{eq:triangular equalities diagrams}, therefore $\overline\LA \dashv \overline\RA$, with $\overline\epsilon$ being a natural isomorphism. 

Let $P \colon \C \to \InfSL$ be in $\EEDff$. Then $\Gamma_P$ is faithful because of~\cite[Theorem 3.2]{maietti_triposes_2017} and full because of Proposition~\ref{prop:RUC iff Gamma full}. Hence $\overline\eta_P=(\Gamma_P, P\rho)$ has an inverse in $\EEDff$, namely $(\inv\Gamma_P, P\inv\rho)$, where $\inv\Gamma_P$ is the identity on objects and sends every map $R \colon X \to Y$ in $\RA\LA(P)$ into the unique morphism $f \colon X \to Y$ of $\C$ such that $\Gamma_P(f)=R$. Therefore also $\overline\eta$ is a natural isomorphism, hence $(\overline\LA, \overline\RA, \overline\eta, \overline\epsilon)$ is an adjoint equivalence. \qed 
\end{document}